\documentclass[journal, draftcls, onecolumn, 12pt]{IEEEtranTCOM}
%

\normalsize

\usepackage{cite}
\usepackage{bm}
\usepackage{amsmath}
\usepackage{amsthm}
\newtheorem{defn}{Definition} 
\usepackage{amssymb}
\newtheorem{theorem}{Theorem}
\usepackage{caption}
\usepackage{subcaption}
\usepackage{graphicx}
\usepackage{dcolumn}
\usepackage{algorithmic}
\newtheorem{proposition}{Proposition}

\theoremstyle{remark}
\newtheorem*{remark}{Remark}

\theoremstyle{example}
\newtheorem{example}{Example}

\theoremstyle{example_nc}
\newtheorem*{example_nc}{Example}


\hyphenation{op-tical net-works semi-conduc-tor}

\begin{document}
%

\title{Minimax Optimum Clock Skew and Offset Estimators for IEEE 1588}
%
%
%

\author{ Anantha K. Karthik, \IEEEmembership{Student Member, IEEE} and Rick S. Blum, \IEEEmembership{IEEE Fellow} 
	
\thanks{This work was supported by the Department of Energy under Award DE-OE0000779, and by the National Science Foundation under Grant ECCS 1744129.}
	
\thanks{Anantha K. Karthik and Rick S. Blum are with the Department of Electrical and Computer Engineering, Lehigh University, Bethlehem, PA 18015 USA (e-mail: akk314@lehigh.edu; rblum@lehigh.edu).}}

%
%

\markboth{IEEE Transactions on Communications}%
{Submitted paper}
%


\maketitle

\begin{abstract}
This paper addresses the problem of clock skew and offset estimation for the IEEE 1588 precision time protocol. Built on the classical two-way message exchange scheme, IEEE 1588 is a prominent synchronization protocol for packet switched networks. It is employed in various applications including cellular base station synchronization in 4G long-term evaluation backhaul networks, substation synchronization in electrical grid networks and industrial control. Due to the presence of random queuing delays in a packet switched network, the recovery of clock skew and offset from the received packet timestamps can be viewed as a statistical estimation problem. Recently, assuming perfect clock skew information, minimax optimum clock offset estimators were developed for IEEE 1588. Building on this work, we develop minimax optimum clock skew and offset estimators for IEEE 1588 in this paper. Simulation results indicate the proposed minimax estimators exhibit a lower mean square estimation error than the estimators available in the literature for various network scenarios.
\end{abstract}


\begin{IEEEkeywords}
IEEE 1588 Precision Time Protocol, Optimum Invariant Estimation, Minimax Estimation, Electrical Grid, Long Term Evolution.
\end{IEEEkeywords}

%
\IEEEpeerreviewmaketitle

\section{Introduction}\label{Sec1}
Precise synchronization of events is essential to ensure the proper functioning of a distributed network. The IEEE 1588 Precision Time Protocol (PTP) \cite{IEEE1588} is a popular time synchronization protocol for synchronizing the slave clocks to a master clock. It is cost effective and offers accuracy comparable to  Global Positioning System (GPS)-based timing. PTP is utilized in various  applications including electrical grid networks \cite{Gaderer_2005}, cellular base station synchronization in 4G Long Term Evaluation (LTE) \cite{Hadzic_2011}, substation communication networks \cite{IEC61850} and industrial control \cite{IEEE_Industrial_Control}. In this paper, we will develop clock synchronization algorithms for PTP in a packet switched network.


The clock at the slave node can be modeled mathematically, as a function $c(t)$ of the time of the master node's clock $t$. When the clocks of the slave and master node are synchronized, then $c(t) = t$. However, in practice these clocks are not synchronized, implying a synchronization error $e(t) = |c(t) - t|$, that tends to grow over large time scales unless synchronization approaches are implemented. In general, the clock of the slave node is modeled as  $c(t) = \phi t + \delta$ \cite{Noh_2007, Chaudhari_2008, Leng_2010, Wu_2011, Anand_bounds, Anand_2015}, where $\phi$ and $\delta$ denote the relative clock skew and offset of the slave's clock with respect to the master's clock respectively.

A number of time synchronization protocols including PTP, Timing Protocol for Sensor Networks (TPSN) \cite{TPSN}, tiny-sync \cite{MiniSync}, and Lightweight Time Synchronization (LTS) \cite{LTS} are built on the classical two-way message exchange scheme. In these protocols, the slave node exchanges a series of synchronization packets with the master node and uses the packet timestamps to estimate $\phi$ and $\delta$. The messages traveling between the master and slave node can encounter several intermediate switches and routers accumulating delays at each node. The main factors contributing to the overall delay are the fixed propagation and processing delays at the intermediate nodes along the network path between the master and slave node and the random queuing delays at each such node. This randomness in the overall network traversal time is referred to as Packet Delay Variation (PDV) \cite{Anand_2015}, and the problem of estimating $\phi$ and $\delta$, while combating the noisiness in the observations that occur due to PDV is called the ``Clock Skew and Offset Estimation" (CSOE) problem.

Popular Probability Density Function (pdf) models available in the literature to model the PDV include Gaussian, exponential, gamma, Weibull, and log-normal \cite{Wu_2011}. The Cramer-Rao lower Bound (CRB) and the Maximum Likelihood (ML) estimate of the clock skew and offset for some of these PDV delay models were derived in \cite{Noh_2007, Leng_2010, Chaudhari_2008}. The popular PDV delay models, however, seem unsuitable for general packet switched networks. For example, consider the scenario where PTP is used to synchronize the cellular base stations in 4G LTE networks using mobile backhaul networks. The backhaul networks are leased from commercial Internet Service Providers (ISPs), and the network is shared with other commercial and non-commercial users. The background traffic generated by these users often results in random delays for the synchronization packets. In the context of the backhaul networks, ITU-T G.8261 specification \cite{ITU} provides models for modeling the background traffic. The empirical pdf of the PDV in the backhaul networks were obtained in \cite{Anand_bounds}, and are shown in Figure \ref{Empirical_pdf}. Similar random delays can occur in any case where a shared network is utilized. The popular available pdf models do not closely match most of the cases in Figure \ref{Empirical_pdf}.

Furthermore, the PDV pdf typically has limited support in a packet switched network. Hence, the CRB (the most popular lower bound in estimation theory) is not suitable for evaluating the performance of a CSOE scheme in these networks as the regularity conditions are violated \cite{Anand_bounds}. Guruswamy \emph{et al.} \cite{Anand_bounds} addressed this issue and developed performance lower bounds for an invariant clock offset estimation scheme for PTP assuming knowledge of the clock skew. Building on their previous work of \cite{Anand_bounds}, Guruswamy \emph{et al.} \cite{Anand_2015} developed minimax optimum clock offset estimation schemes for PTP under the squared error loss function.

Following the work of \cite{Anand_2015}, we will for the first time, develop minimax optimum CSOE schemes for PTP in this paper. The problem of estimating the clock skew and offset in the presence of PDV falls under a variant of the location-scale parameter problems \cite{Berger}, with the unknown clock skew as the scale parameter and the unknown clock offset as the location parameter. Fixing the loss function to skew-normalized squared error loss, we use invariant decision theory (see chapter 6 of \cite{Berger}) to design the optimum invariant CSOE scheme. Then, using results from \cite{Berger, Lehmann, Bondar1981, marchand2012 }, we show the developed optimum invariant CSOE schemes are minimax optimum under the skew-normalized squared error loss function. Simulation results indicate the minimax optimum estimators exhibit a lower mean square estimation error than the estimators available in the literature in a variety of network scenarios, and theoretical results which prove the minimax optimum schemes must be as good or better are also provided.

\textbf{Notations:} We use bold upper case, bold lower case, and italic lettering to denote matrices, column vectors and scalars respectively. The notations $(.)^T$ and $\otimes$ denote the transpose and Kronecker product respectively. $\bm{I}_N$ stands for a $N$-dimensional identity matrix and $\bm{1}_{N}$ denotes a column vector of length $N$ with all the elements equal to $1$. Further, $\mathbb{R}$ denotes the set of real numbers, $\mathbb{R}^+$ denotes the set of positive real numbers, $\mathbb{R}_0^+$ denotes the set of non-negative real numbers and $\mathcal{I}_A(x)$ denotes the indicator function having the value $1$ when $x \in A$ and $0$ when $x \notin A$. 

\section{Signal Model and Problem Statement}\label{Sec2}
In this section, we briefly describe the two-way message exchange scheme used in IEEE 1588 and present the considered problem statement. Recall that the relative clock skew and offset of the slave node with respect to the master node are denoted by $\phi \in \mathbb{R}^+$ and $\delta \in \mathbb{R}$, respectively. Assuming $P$ rounds of two-way message exchanges, the following sequence of messages are exchanged between the master and slave node during the $i^{th}$ round of message exchanges ($i = 1, 2, \cdots, P$). For each $i$, the master node initiates a two-way message exchange by sending a \emph{sync} packet to the slave at time $t_{1i}$.  The value of $t_{1i}$ is later communicated to the slave via a \emph{follow\_up} message. The slave node records the time of reception of the \emph{sync} message as $t_{2i}$. The slave node sends a \emph{delay\_req} message to the master node while recording the time of transmission as $t_{3i}$. The master records the time of arrival of the \emph{delay\_req} packet at time $t_{4i}$ and this value is later communicated to the slave using a \emph{delay\_resp} packet. This procedure can be mathematically modeled as \cite{Noh_2007, Chaudhari_2008, Leng_2010, Wu_2011}
\begin{eqnarray}
t_{2i} & = & (t_{1i} + d_{ms} + w_{1i})\phi + \delta,  \label{FwdData} \\
t_{3i} & = & (t_{4i} - d_{sm} - w_{2i})\phi + \delta   \label{RevData}
\end{eqnarray}
for $i = 1, 2, \cdots, P$. In (\ref{FwdData}) and (\ref{RevData}), $d_{ms}$ and $d_{sm}$ denote the fixed propagation delays in the master-to-slave forward path and slave-to-master reverse path respectively. The variables $w_{1i}$ and $w_{2i}$ denote the random queuing delays in the forward and reverse path respectively. Define $\bm{w}_k = [w_{k1}, w_{k2}, \cdots, w_{kP}]$ for $k = 1, 2$ and $\bm{t}_k = [t_{k1}, t_{k2}, \cdots, t_{kP}]$ for $k = 1, 2, 3, 4$. The joint pdf of $\bm{w}_k$ is defined as $f_{\bm{w}_k}(\bm{w}_k) = f_{k}(w_{k1}, w_{k2}, \cdots, w_{kP})$, for $k = 1, 2$. In our work, we assume the queuing delays in the forward and reverse path are independent. Following \cite{Chaudhari_2008, Anand_2015}, we consider two observation models based on the amount of information available regarding the fixed path delays $d_{ms}$ and $d_{sm}$:

\subsubsection{Known fixed delay model (K-Model)}
In this model, we assume complete knowledge of the fixed-path delays $d_{ms}$ and $d_{sm}$. The received timestamps can be arranged in vector form as follows
\begin{eqnarray}\label{Kmodel}
\bm{y} & = & \bm{u}\phi + \delta\bm{1}_{2P},
\end{eqnarray}
where, from (\ref{FwdData}) and (\ref{RevData}), we have $\bm{y} = [\bm{t}_2, \bm{t}_3]^T$,  and $\bm{u}  = [\bm{u}_1, \bm{u}_2]^T$
with $\bm{u}_1 = (\bm{t}_1 + \bm{w}_1 + d_{ms}\bm{1}_P^T)$ and $\bm{u}_2 = (\bm{t}_4 - \bm{w}_2 - d_{sm}\bm{1}_P^T)$.  The unknown parameters in this model are $\phi$ and $\delta$.

\subsubsection{Standard model (S-Model)}
Freris \emph{et al.} \cite{Freris} provided some necessary conditions for obtaining a unique solution for the system of equations given in (\ref{FwdData}) and (\ref{RevData}). We need to know either one of the fixed path delays (either $d_{ms}$ or $d_{sm}$), or have a {prior} known affine relationship between the fixed delays (see Theorem 4 in \cite{Freris}). In this model, we assume a prior known affine relationship between the fixed path delays. For simplicity, we assume the fixed path delays are equal, i.e., $d_{ms} = d_{sm} = d$, where $d$ represents the unknown fixed path delay in the master-slave communication path\footnote{We should mention here that the proposed estimators are also applicable when there is a prior known affine relationship between $d_{ms}$ and $d_{sm}$, i.e., $d_{ms} = ad_{sm} + c$, where the constants $a$ and $c$ are known.}. The received time stamps can be arranged in vector form as follows
\begin{eqnarray}\label{Smodel}
\bm{y} & = & (\bm{h}d + \bm{v})\phi + \delta \bm{1}_{2P},
\end{eqnarray}
where $\bm{v} = [\bm{v}_1, \bm{v}_2]^T$ with $\bm{v}_1 = (\bm{t}_1 + \bm{w}_1)$ and $\bm{v}_2 = (\bm{t}_4 - \bm{w}_2)$, $\bm{h} = [\bm{1}_P^T, -\bm{1}_P^T]^T$, and $\bm{y} = [\bm{t}_2, \bm{t}_3]^T$. The unknown parameters in this model are $\phi$, $d$ and $\delta$.

{\bf Problem Statement:} In this paper, we look to develop CSOE schemes for estimating $\phi$ and $\delta$ from the received timestamps for the considered observation models.

\section{Statistical Preliminaries}\label{Sec3}
The purpose of this section is to formalize the concept of invariance by defining groups of transformations over parameter and observation spaces. To this end, we repeat several essential definitions from \cite{Berger} to establish some concepts of invariant estimation theory. It is assumed throughout this section that the observed data $\bm{x} \in \mathbb{R}^N$ is characterized by the pdf $f(\bm{x}| \bm{\theta})$, which depends upon the vector of unknown parameters $\bm{\theta}$ with the corresponding parameter space $\bm{\Theta}$.

Suppose we are interested in estimating an unknown scalar parameter $\theta \in \bm{\theta}$. Let ${\psi}$ denote an estimator of ${\theta}$, ${\psi}({\bm{x}})$ denote the estimate of $\theta$ obtained using the estimator ${\psi}$ on $\bm{x}$, and $L({\psi}({\bm{x}}), \bm{\theta})$ denote the considered loss function. The performance of the estimator ${\psi}$ can be characterized by the following \cite{Lehmann}:
\begin{enumerate}
	\item The conditional risk of an estimator 
	\begin{eqnarray}
	\mathcal{R}({\psi}, \bm{\theta}) & = & \int_{\mathbb{R}^N} L({\psi}({\bm{x}}), \bm{\theta}) f(\bm{x}| \bm{\theta}) d\bm{x},
	\end{eqnarray}
	
	\item The maximum risk of an estimator
	\begin{eqnarray}
	\mathcal{M}({\psi}) & = & \sup_{\bm{\theta} \in \bm{\Theta}} \mathcal{R}({\psi}, \bm{\theta}),
	\end{eqnarray}
	
	\item The average risk of an estimator
	\begin{eqnarray}
	\mathcal{B}({\psi}, p) & = & \int_{\bm{\theta} \in \bm{\Theta}} \mathcal{R}({\psi}, \bm{\theta}) p(\bm{\theta}) d\bm{\Theta},
	\end{eqnarray}
	where $p(\bm{\theta})$ is a prior distribution defined over $\bm{\theta} \in \bm{\Theta}$.
\end{enumerate}

\noindent
In our work, we are primarily interested in developing minimax estimators, that is estimators that minimize the maximum risk over all possible estimators of the parameter of interest. We first present the definition of a minimax estimator from \cite{Lehmann}.
\begin{defn}[Minimax estimators]
	An estimator $\bm{\psi}_{MinMax}$ of ${\theta} \in \bm{\theta}$ is said to be a minimax estimator of ${\theta}$ for the considered loss function, if
	\begin{eqnarray}
	\mathcal{M}({\psi}_{MinMax}) & = & \inf_{{\psi}}\mathcal{M}({\psi})  = \inf_{{\psi}} \sup_{\bm{\theta} \in \bm{\Theta}} \mathcal{R}({\psi}, \bm{\theta}).
	\end{eqnarray}
\end{defn}
\noindent
We use the approach given in \cite{Berger} (see Chapter 5) to design a minimax estimator of $\theta$. We first construct the optimum invariant estimator of $\theta$ for a considered (invariant) loss function and then show the optimum invariant estimator is a minimax estimator of $\theta$ for the considered loss function. 

We now present some important definitions from \cite{Berger} with regards to invariant estimation theory. A measurable function $f : \mathbb{R}^N \rightarrow \mathbb{R}^N$ is called a transformation on $\mathbb{R}^N$. If $g_1$ and $g_2$ are two transformations on $\mathbb{R}^N$, the composition of $g_1$ and $g_2$, denoted by $g_2g_1$, is defined as $g_2g_1(\bm{m}) = g_2(g_1(\bm{m}))$ for $\bm{m} \in \mathbb{R}^N$. We are now ready to define a group of transformations.

\begin{defn}[Section 6.2.1, \cite{Berger}]
	A group of transformations on $\mathbb{R}^N$, denoted by $\mathcal{G}$, is a set of one-to-one and onto transformations which satisfy the following conditions:
	\begin{itemize}
		\item If $g_1 \in \mathcal{G}$ and $g_2 \in \mathcal{G}$, then $g_2g_1 \in \mathcal{G}$.
		\item If $g \in \mathcal{G}$, then $g^{-1}$, the inverse transformation defined by the relation $g^{-1}(g(\bm{x})) = \bm{x}$, is in $\mathcal{G}$.
		\item The identity transformation $e$, defined by $e(\bm{x}) = \bm{x}$, is in $\mathcal{G}$.
	\end{itemize}
\end{defn}

Let $\mathcal{F}$ denote the class of all densities $f(\bm{x}|\bm{\theta})$ for $\bm{\theta} \in \bm{\Theta}$ and $\mathcal{G}$ denote a group of transformations on $\mathbb{R}^N$.

\begin{defn}[Section 6.2.2, \cite{Berger}]
	The family of densities $\mathcal{F}$ is said to be invariant under $\mathcal{G}$, if for every $g \in \mathcal{G}$ and $\bm{\theta} \in \bm{\Theta}$, there exists a unique $\bm{\theta}^* \in \bm{\Theta}$ such that $\bm{x}_g = g(\bm{x})$ has density $f(\bm{x}_g|\bm{\theta}^*)$. We denote $\bm{\theta}^*$ as $\bar{g}(\bm{\theta})$.
\end{defn}

\begin{remark}
	If $\mathcal{F}$ is invariant under $\mathcal{G}$, then
	\begin{eqnarray}
	\bar{\mathcal{G}} & = & \{ \bar{g} : g \in \mathcal{G} \}
	\end{eqnarray}
	is a group of transformations on $\bm{\Theta}$ \cite{Berger}. We now present a simple example to illustrate these ideas.
\end{remark}

\begin{example}
Let $\bm{x} = [x_1, x_2, \cdots, x_N]$ and $h(.)$ be a known density. Consider the class of densities of the form
\begin{eqnarray}
f(\bm{x}|\bm{\theta}) & = & \frac{1}{\sigma^N}h\left( \frac{\bm{x} - \mu\bm{1}_N^T}{\sigma}  \right),
\end{eqnarray}
where $\mu \in \mathbb{R}$ (location parameter) and $\sigma \in \mathbb{R}^+$ (scale parameter) are both unknown. From Definition $3$, the class of such densities is invariant under the group of location-scale transformations (see Example 5, Section 6.2.1, \cite{Berger}) $\mathcal{G}_{affine}$, on $\mathbb{R}^N$, defined as
\begin{eqnarray}\label{Example_group}
\mathcal{G}_{affine}  & = & \{ g_{a,b}(\bm{m}) : g_{a, b}(\bm{m})  = a\bm{m} + b\bm{1}_N \} \mbox{ where } a \in \mathbb{R}^+, b \in \mathbb{R} \mbox{ and }\bm{m} \in \mathbb{R}^N,
\end{eqnarray} 
since $\bm{x}_g = g_{a, b}(\bm{x}) = [x_{g1}, x_{g2}, \cdots, x_{gN}]$ has the density $(a\sigma)^{-N} h\left( \frac{\bm{x}_g - (a\mu + b)\bm{1}_N^T}{a\sigma}  \right)$. The group, $\bar{\mathcal{G}}_{affine}$, of induced transformations on $\bm{\Theta} = \{ (\mu, \sigma): \mu \in \mathbb{R} \mbox{ and }\sigma \in \mathbb{R}^+ \}$, is given by
\begin{eqnarray}
\bar{\mathcal{G}}_{affine} & = & \{\bar{g}_{a, b}(\mu, \sigma) : \bar{g}_{a, b}(\mu, \sigma) = (a\mu + b, a\sigma)  \}.
\end{eqnarray}
\end{example}
\noindent 
To be invariant, an estimation problem must have a loss function which is unchanged by the relevant transformations. We now present the definition of an invariant loss function.

\begin{defn}[Section 6.2.2, \cite{Berger}]
Let $\mathcal{F}$ be invariant under the group $\mathcal{G}$ and $\hat{{\theta}}$ be an estimate of ${\theta}$. A loss function $L(\hat{{\theta}}, {\bm{\theta}} )$ is said to be invariant under $\mathcal{G}$, if for every $g \in \mathcal{G}$, there exists an $\hat{{\theta}}^*$ such that $L(\hat{{\theta}}, {\bm{\theta}} ) = L(\hat{{\theta}}^*, \bar{g}({\bm{\theta}}) )$ for all $\bm{\theta} \in \bm{\Theta}$. We denote $\hat{{\theta}}^*$ by $\tilde{g}({\theta})$.
\end{defn}

In an invariant estimation problem, the formal structures of the statistical distributions of $\bm{x}$ and $g(\bm{x})$ are identical. Hence the invariance principle states that the estimates obtained from $\bm{x}$ and $g(\bm{x})$, using an estimator must be related \cite{Berger}. We now present the definition of an invariant estimator.
\begin{defn}[Section 6.2.3, \cite{Berger}]
	Let ${\psi}$ denote an estimator of ${\theta} \in \bm{\theta}$, and ${\psi}(\bm{x})$ denote the estimate of ${\theta}$ obtained from the received observations $\bm{x}$ characterized by the pdf $f(\bm{x}|\bm{\theta})$. We say ${\psi}$ is invariant under group $\mathcal{G}$ if for all $\bm{x} \in \mathbb{R}^N$ and $g \in \mathcal{G}$,
	\begin{eqnarray}
	{\psi}(g(\bm{x})) & = & \tilde{g}({\psi}(\bm{x})).
	\end{eqnarray}
\end{defn}

\begin{example_nc}[Example 1 continued]
	Let $\hat{\mu}$ and $\hat{\sigma}$ denote estimators of $\mu$ and $\sigma$, respectively and let $\hat{\mu}(\bm{x})$ and $\hat{\sigma}(\bm{x})$ denote the estimates obtained from $\bm{x}$. From Definition 5, the estimators $\hat{\mu}$ and $\hat{\sigma}$ are invariant under $\mathcal{G}_{affine}$ defined in (\ref{Example_group}), if
	\begin{eqnarray}
	\hat{\mu}(g_{a, b}(\bm{x}))  = a\hat{\mu}(\bm{x}) + b,&  \mbox { and } &\hat{\sigma}(g_{a, b}(\bm{x})) = a\hat{\sigma}(\bm{x}). \label{invariant_location}
	\end{eqnarray}	
	for all $g_{a, b} \in \mathcal{G}_{affine}$. From Definition $4$, the loss functions for $\mu$ and $\sigma$, defined by
	\begin{eqnarray}\label{Lossfn_example1_location}
	L_{\mu}(\hat{\mu}(\bm{x}), [\mu, \sigma]) = \frac{(\mu - \hat{\mu}(\bm{x}))^2}{\sigma^2}, \mbox{ and }
	L_{\sigma}(\hat{\sigma}(\bm{x}), [\mu, \sigma]) = \frac{(\sigma - \hat{\sigma}(\bm{x}))^2}{\sigma^2},
	\end{eqnarray}
	respectively, are invariant under $\mathcal{G}_{affine}$ from (\ref{Example_group}), since
	\begin{eqnarray}
	\frac{(\hat{\mu}(\bm{x}) - \mu)^2}{\sigma^2} = \frac{\left(\hat{\mu}(g_{a, b}(\bm{x})) - (a\mu + b)\right)^2}{a^2\sigma^2}, \mbox{ and } 
	\frac{(\hat{\sigma}(\bm{x}) - \sigma)^2}{\sigma^2}  =  \frac{\left(\hat{\sigma}(g_{a, b}(\bm{x})) - a\sigma\right)^2}{a^2\sigma^2},
	\end{eqnarray}
	for all $g_{a, b} \in \mathcal{G}_{affine}$ from (\ref{Example_group}). The loss functions given in (\ref{Lossfn_example1_location}) are called the scale-normalized squared error loss.
\end{example_nc}

\noindent
We now present an important definition regarding the transitivity of the group of transformations on $\bm{\Theta}$ and the conditional risk of invariant estimators.

\begin{defn}[Section 6.2.3, \cite{Berger}]
	A group $\bar{\mathcal{G}}$ of transformations of $\bm{\Theta}$ is said to be transitive if for any $\bm{\theta}_1, \bm{\theta}_2 \in \bm{\Theta}$, there exists some $\bar{g} \in \bar{\mathcal{G}}$ for which $\bm{\theta}_2 = \bar{g}(\bm{\theta}_1)$.
\end{defn}

\begin{theorem}[Section 6.2.3, \cite{Berger}]
	When $\bar{\mathcal{G}}$ is transitive and the loss function is invariant, the conditional risk of an invariant estimator ${\psi}$ of $\theta \in \bm{\theta}$, is constant for all $\bm{\theta} \in \bm{\Theta}$.
\end{theorem}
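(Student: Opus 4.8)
The plan is to establish that $\mathcal{R}(\psi, \bm{\theta}_1) = \mathcal{R}(\psi, \bm{\theta}_2)$ for any two parameter values $\bm{\theta}_1, \bm{\theta}_2 \in \bm{\Theta}$, which immediately forces the conditional risk to be constant on $\bm{\Theta}$. First I would use transitivity of $\bar{\mathcal{G}}$ (Definition~6) to obtain some $\bar{g} \in \bar{\mathcal{G}}$ with $\bm{\theta}_2 = \bar{g}(\bm{\theta}_1)$; by the construction of $\bar{\mathcal{G}}$ recalled in the Remark after Definition~3, this $\bar{g}$ is the transformation on $\bm{\Theta}$ induced by a genuine $g \in \mathcal{G}$, and $\tilde{g}$ is the corresponding map on estimates supplied by Definition~4.

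The heart of the argument is a change-of-variables step chained with the three layers of invariance. Starting from $\mathcal{R}(\psi, \bm{\theta}_2) = \int_{\mathbb{R}^N} L(\psi(\bm{x}), \bm{\theta}_2)\, f(\bm{x}|\bm{\theta}_2)\, d\bm{x} = \mathbb{E}_{\bm{\theta}_2}\!\left[ L(\psi(\bm{x}), \bm{\theta}_2) \right]$, I would invoke the defining property of an invariant family (Definition~3): when $\bm{x}$ has density $f(\cdot\,|\bm{\theta}_1)$, the transformed observation $g(\bm{x})$ has density $f(\cdot\,|\bar{g}(\bm{\theta}_1)) = f(\cdot\,|\bm{\theta}_2)$, so $\mathbb{E}_{\bm{\theta}_2}[\varphi(\bm{x})] = \mathbb{E}_{\bm{\theta}_1}[\varphi(g(\bm{x}))]$ for integrable $\varphi$. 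Applying this with $\varphi(\bm{x}) = L(\psi(\bm{x}), \bm{\theta}_2)$ gives $\mathcal{R}(\psi, \bm{\theta}_2) = \mathbb{E}_{\bm{\theta}_1}\!\left[ L(\psi(g(\bm{x})), \bar{g}(\bm{\theta}_1)) \right]$. Next, invariance of the estimator (Definition~5) replaces $\psi(g(\bm{x}))$ by $\tilde{g}(\psi(\bm{x}))$, and then invariance of the loss function (Definition~4), i.e. $L(\hat{\theta}, \bm{\theta}) = L(\tilde{g}(\hat{\theta}), \bar{g}(\bm{\theta}))$ applied with $\hat{\theta} = \psi(\bm{x})$ and $\bm{\theta} = \bm{\theta}_1$, collapses the integrand to $L(\psi(\bm{x}), \bm{\theta}_1)$. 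Hence $\mathcal{R}(\psi, \bm{\theta}_2) = \mathbb{E}_{\bm{\theta}_1}[L(\psi(\bm{x}), \bm{\theta}_1)] = \mathcal{R}(\psi, \bm{\theta}_1)$, and arbitrariness of $\bm{\theta}_1, \bm{\theta}_2$ finishes the proof.

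The step I expect to demand the most care is the change of variables: one must verify that $g$ — which is one-to-one, onto, and measurable by Definition~2 — genuinely pushes $f(\cdot\,|\bm{\theta}_1)$ forward to $f(\cdot\,|\bm{\theta}_2)$, i.e. that $\mathbb{E}_{\bm{\theta}_1}[\varphi(g(\bm{x}))] = \mathbb{E}_{\bm{\theta}_2}[\varphi(\bm{x})]$, which is exactly the assertion of Definition~3 and is where the Jacobian of $g$ would surface if the integral were written out explicitly (as in the density computation of Example~1). The remainder is bookkeeping with the induced maps $\bar{g}$ and $\tilde{g}$; the only real subtlety there is to feed each invariance identity the correct argument so that $\tilde{g}$ and $\bar{g}$ cancel in precisely the right order, leaving $L(\psi(\bm{x}), \bm{\theta}_1)$ independent of the chosen $\bm{\theta}_2$.
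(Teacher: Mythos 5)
Your proof is correct: the paper itself offers no proof of this theorem, citing it directly from Section 6.2.3 of Berger, and your argument is precisely the standard one from that reference — use transitivity to write $\bm{\theta}_2 = \bar{g}(\bm{\theta}_1)$, push the density forward via Definition~3 so that $\mathcal{R}(\psi,\bm{\theta}_2) = \mathbb{E}_{\bm{\theta}_1}[L(\psi(g(\bm{x})),\bar{g}(\bm{\theta}_1))]$, and then cancel with estimator invariance (Definition~5) and loss invariance (Definition~4). The invariance identities are applied in the correct order and with the correct arguments, so nothing further is needed.
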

\begin{remark}
If the group of transformations $\bar{\mathcal{G}}$ on $\bm{\Theta}$ is transitive, and ${\psi}$ is an invariant estimator of ${\theta}$, we have
\begin{eqnarray}\label{Invariant_risk}
\mathcal{R}({\psi}, \bm{\theta}) = \mathcal{M}({\psi}) = \mathcal{B}({\psi}, p),
\end{eqnarray}
for any $p(\bm{\theta})$ defined over $\bm{\theta} \in \bm{\Theta}$.
\end{remark}

When $\bar{\mathcal{G}}$ is transitive, we can construct the optimum (or minimum conditional risk) invariant estimator under $\mathcal{G}$, when the loss function is invariant under $\mathcal{G}$ using the theory from \cite{Berger}. In this paper, we use the concepts of invariant estimation theory to design the optimum invariant CSOE schemes under the \emph{K-model} (see (\ref{Kmodel})) and \emph{S-model} (see (\ref{Smodel})). As we are primarily interested in estimating $\delta$ and $\phi$, we consider the loss functions defined by
\begin{eqnarray}\label{loss_fn1}
L_{1}(a_{\delta}, \bm{\theta}) & = & \frac{(a_{\delta} - \delta)^2}{\phi^2},
\end{eqnarray}
and
\begin{eqnarray}\label{loss_fn2}
L_{2}(a_{\phi}, \bm{\theta}) & = & \frac{(a_{\phi} - \phi)^2}{\phi^2},
\end{eqnarray}
for $\delta$ and $\phi$, respectively\footnote{As seen in equations (\ref{Kmodel}) and (\ref{Smodel}), the unknown clock skew is similar to the unknown scale parameter in Example 1 and is multiplied with the random queuing delays.}. In (\ref{loss_fn1}) and (\ref{loss_fn2}), $a_{\delta}$ and $a_{\phi}$ denote estimates of $\delta$ and $\phi$, respectively, $\bm{\theta} = [\phi, \delta]$ in case of the \emph{K-model}, and  $\bm{\theta} = [\phi, d, \delta]$ for the \emph{S-model}\footnote{We are not interested in estimating the value of $d$, as it is a nuisance parameter.}. We then use results from \cite{Berger, marchand2012, Bondar1981, Lehmann} to show the derived optimum invariant estimators of $\delta$ and $\phi$ are minimax for the skew-normalized squared error loss functions defined in (\ref{loss_fn1}) and (\ref{loss_fn2}), respectively.

\section{Minimax optimum CSOE scheme under K-model}\label{Sec4}
We now apply invariant decision theory to derive the optimum invariant estimator of $\phi$ and $\delta$ in the \emph{K-model}. Recall from (\ref{Kmodel}), the observations under the \emph{K-model} can be represented as
\begin{eqnarray}
\bm{y} & = & \bm{u}\phi + \delta\bm{1}_{2P},
\end{eqnarray}
where $\bm{y} \in \mathbb{R}^{2P}$, $\bm{u} \in \mathbb{R}^{2P}$, $\phi \in \mathbb{R}^+$ and $\delta \in \mathbb{R}$. Let $\bm{\theta} = [\phi, \delta]$ denote the vector of unknown parameters. The parameter space of $\bm{\theta}$, denoted by $\bm{\Theta}$, is given by
\begin{eqnarray}\label{K_Param_space}
\bm{\Theta} & = & \{(\phi, \delta): \phi \in \mathbb{R}^+, \delta \in \mathbb{R} \}.
\end{eqnarray}
From (\ref{Kmodel}), we have $f_{\bm{u}_1}(\bm{u}_1) = f_{\bm{w}_1}(\bm{u}_{1} - \bm{t}_{1} - d_{ms}\bm{1}_P^T)$, $f_{\bm{u}_2}(\bm{u}_2) = f_{\bm{w}_2}(\bm{t}_{4} - \bm{u}_{2} - d_{sm}\bm{1}_P^T)$, $f_{\bm{u}}(\bm{u}) = f_{\bm{u}_1}(\bm{u}_1) f_{\bm{u}_2}(\bm{u}_2)$ and
\begin{eqnarray}\label{main_Kmodel_joint_dist}
f(\bm{y}|\bm{\theta}) & = & \frac{1}{\phi^{2P}} f_{\bm{u}}\left( \frac{\bm{y} - \delta\bm{1}_{2P}}{\phi} \right) =  \frac{1}{\phi^{2P}} f_{\bm{u}_1}\left( \frac{\bm{t}_2 - \delta\bm{1}_{P}^T}{\phi} \right) f_{\bm{u}_2}\left( \frac{\bm{t}_3 - \delta\bm{1}_{P}^T}{\phi} \right), \\
& = & \frac{1}{\phi^{2P}}f_{\bm{w}_1}\left(\frac{\bm{t}_{2} - \delta\bm{1}_P^T}{\phi} - d_{ms}\bm{1}_P^T - \bm{t}_{1}\right) f_{\bm{w}_2}\left(\frac{\delta\bm{1}_P^T - \bm{t}_{3}}{\phi} - d_{sm}\bm{1}_P^T + \bm{t}_{4}\right).
\end{eqnarray}

Let $\mathcal{F}_{KModel}$ denote the class of all densities $f(\bm{y}|\bm{\theta})$ for $\bm{\theta} \in \bm{\Theta}$. 
 The class of such densities is invariant under the group of location-scale transformations (see Example 5, Section 6.2.1, \cite{Berger}) $\mathcal{G}_{KModel}$, on $\mathbb{R}^{2P}$, defined as
\begin{eqnarray}\label{K_model_group}
\mathcal{G}_{KModel}  = \{ g_{a,b}(\bm{m}) : g_{a, b}(\bm{m})  = a\bm{m} + b\bm{1}_{2P} \} \mbox{ where } a \in \mathbb{R}^+, b \in \mathbb{R} \mbox{ and }\bm{m} \in \mathbb{R}^{2P},
\end{eqnarray} 
since $\bm{y}_g = g_{a, b}(\bm{y})$ has the density $\frac{1}{(a\phi)^{2P}} f_{\bm{u}}\left( \frac{\bm{y}_g - (a\delta + b)\bm{1}_{2P}}{a\phi}  \right)$. The group, $\bar{\mathcal{G}}_{KModel}$, of induced transformations on $\bm{\Theta}$ defined in (\ref{K_Param_space}), is given by
\begin{eqnarray}\label{K_model_param_group}
\bar{\mathcal{G}}_{KModel} = \{ \bar{g}_{a,b}((\phi, \delta)) : \bar{g}_{a, b}((\phi, \delta))  = (a\phi, (a\delta + b)) \},
\end{eqnarray}
where $a \in \mathbb{R}^+, b \in \mathbb{R}, \phi \in \mathbb{R}^+$ and $\delta \in \mathbb{R}$.

Let $\hat{\delta}_I$ and $\hat{\phi}_I$ denote estimators of $\delta$ and $\phi$, respectively and let $\hat{\delta}_I(\bm{y})$ and $\hat{\phi}_I(\bm{y})$ denote the estimates obtained from the received data $\bm{y}$ characterized by the pdf $f(\bm{y}|\bm{\theta}) = \frac{1}{\phi^{2P}} f_{\bm{u}}\left( \frac{\bm{y} - \delta\bm{1}_{2P}}{\phi} \right)$. The estimators $\hat{\phi}_{I}(\bm{y})$ and $\hat{\delta}_{I}(\bm{y})$ are invariant under $\mathcal{G}_{KModel}$ from (\ref{K_model_group}) if for all $(a, b) \in \mathbb{R}^+ \times \mathbb{R}$
\begin{eqnarray}
\hat{\delta}_{I}(g_{a, b}(\bm{y})) = \hat{\delta}_{I}(a\bm{y} + b\bm{1}_{2P}) & = & a\hat{\delta}_{I}(\bm{y}) + b, \label{invariant_Klocation} \\
\hat{\phi}_{I}(g_{a, b}(\bm{y})) = \hat{\phi}_{I}(a\bm{y} + b\bm{1}_{2P}) & = & a\hat{\phi}_{I}(\bm{y}). \label{invariant_Kscale}
\end{eqnarray}
Further, the skew-normalized loss functions defined in (\ref{loss_fn1}) and (\ref{loss_fn2}) for $\delta$ and $\phi$, respectively, are invariant under $\mathcal{G}_{KModel}$ from (\ref{K_model_group}), since
\begin{eqnarray}
\frac{(\hat{\delta}_{I}(\bm{y}) - \delta)^2}{\phi^2} = \frac{\left(\hat{\delta}_{I}(g_{a, b}(\bm{y})) - (a\delta + b)\right)^2}{a^2\phi^2}, \mbox{ and } 
\frac{(\hat{\phi}_{I}(\bm{y}) - \phi)^2}{\phi^2}  =  \frac{\left(\hat{\phi}_{I}(g_{a, b}(\bm{y})) - a\phi\right)^2}{a^2\phi^2}
\end{eqnarray}
for all $g_{a, b} \in \mathcal{G}_{KModel}$. We now present the minimax optimum estimators of $\delta$ and $\phi$ under the \emph{K-model}.

\begin{proposition}\label{Minimax_optimum_Kmodel_estimator_phase_freq}
	The optimum (or minimum conditional risk) invariant estimators of $\delta$ and $\phi$, denoted by $\hat{\delta}_{MinRisk}$ and $\hat{\phi}_{MinRisk}$, respectively, under $\mathcal{G}_{KModel}$ defined in (\ref{K_model_group}), for the skew-normalized squared error loss function defined in (\ref{loss_fn1}) and (\ref{loss_fn2}), respectively, are given by 
\begin{eqnarray}
\hat{\delta}_{MinRisk}(\bm{y})  & =  & \frac{\int_{\mathbb{R}^+}\int_{\mathbb{R}} \frac{\delta}{\phi^{3}}  f(\bm{y}|\bm{\theta}) d\delta d\phi }{\int_{\mathbb{R}^+}\int_{\mathbb{R}} \frac{1}{\phi^{3}} f(\bm{y}|\bm{\theta}) d\delta d\phi}, \label{Minimax_Kmodel_offset} \\
\hat{\phi}_{MinRisk}(\bm{y}) & = & \frac{\int_{\mathbb{R}^+}\int_{\mathbb{R}} \frac{1}{\phi^{2}} f(\bm{y}|\bm{\theta}) d\delta d\phi }{\int_{\mathbb{R}^+}\int_{\mathbb{R}} \frac{1}{\phi^{3}} f(\bm{y}|\bm{\theta}) d\delta d\phi}, \label{Minimax_Kmodel_skew}
\end{eqnarray}
where $f(\bm{y}|\bm{\theta}) = \frac{1}{\phi^{2P}}f_{\bm{w}_1}\left(\frac{\bm{t}_{2} - \delta\bm{1}_P^T}{\phi} - d_{ms}\bm{1}_P^T - \bm{t}_{1}\right) f_{\bm{w}_2}\left(\frac{\delta\bm{1}_P^T - \bm{t}_{3}}{\phi} - d_{sm}\bm{1}_P^T + \bm{t}_{4}\right)$.
Further, the derived optimum invariant estimators are minimax for the skew-normalized squared error loss.
\end{proposition}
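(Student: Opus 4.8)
The plan is to split the statement into two parts: first derive the explicit form of the optimum (minimum conditional risk) invariant estimators of $\delta$ and $\phi$, and then upgrade ``optimum invariant'' to ``minimax'' via a Hunt--Stein / Bondar-type argument. For the first part, I would invoke Theorem 1: since $\bar{\mathcal{G}}_{KModel}$ in (\ref{K_model_param_group}) is transitive on $\bm{\Theta}$ (given $(\phi_1,\delta_1),(\phi_2,\delta_2)$, choose $a=\phi_2/\phi_1>0$ and $b=\delta_2-a\delta_1$), the conditional risk of any invariant estimator is constant, so minimizing the maximum risk over invariant estimators reduces to minimizing the common constant risk. Following Chapter 6 of \cite{Berger}, the group is amenable (the affine group on the line is solvable, hence amenable) and the optimum invariant estimator coincides with the generalized Bayes estimator against the right-invariant Haar prior on the parameter. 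For the location-scale group $\bar{g}_{a,b}(\phi,\delta)=(a\phi,a\delta+b)$ the right Haar measure is $d\pi(\phi,\delta)=\phi^{-1}\,d\phi\,d\delta$ (as opposed to the left Haar measure $\phi^{-2}\,d\phi\,d\delta$), and the posterior-mean formula for the loss $L_1(a_\delta,\bm\theta)=(a_\delta-\delta)^2/\phi^2$ is the weighted posterior mean $\hat\delta=\mathbb{E}[\delta/\phi^2\mid \bm y]/\mathbb{E}[1/\phi^2\mid\bm y]$, which after absorbing the $\phi^{-1}$ Haar factor gives exactly (\ref{Minimax_Kmodel_offset}); similarly $L_2(a_\phi,\bm\theta)=(a_\phi-\phi)^2/\phi^2$ yields $\hat\phi=\mathbb{E}[\phi/\phi^2\mid\bm y]/\mathbb{E}[1/\phi^2\mid\bm y]=\mathbb{E}[1/\phi\mid\bm y]/\mathbb{E}[1/\phi^2\mid\bm y]$, producing (\ref{Minimax_Kmodel_skew}). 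I would present this by minimizing $\int L(a,\bm\theta)f(\bm y|\bm\theta)\phi^{-1}d\phi\,d\delta$ over the scalar $a$ for each fixed $\bm y$, differentiating, and solving; the $\phi^{-3}$ in the denominators is just $\phi^{-2}$ (from the loss weight) times $\phi^{-1}$ (Haar).

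For the minimax claim, the standard route is: (i) verify the optimum invariant estimator has constant (finite) risk $r_0$ — immediate from Theorem 1 and the Remark following it, equation (\ref{Invariant_risk}); (ii) invoke the Hunt--Stein theorem (or its specialization for the affine group, e.g. \cite{Bondar1981}, together with \cite{Kiefer, marchand2012}) which asserts that when the acting group $\bar{\mathcal{G}}$ is amenable, there is a minimax estimator within the invariant class, so the best invariant estimator is minimax. Concretely, one shows that no estimator $\psi$ can have $\sup_{\bm\theta}\mathcal{R}(\psi,\bm\theta)<r_0$: if such a $\psi$ existed, averaging its risk over the (approximately invariant) sequence of priors supplied by amenability and passing to the limit would contradict the fact that $r_0$ is the smallest constant risk achievable by an invariant rule, since the risk averaged against a right-Haar-approximating prior sequence of an equivariant problem forces a limiting invariant competitor with risk $\le \sup_{\bm\theta}\mathcal{R}(\psi,\bm\theta)<r_0$. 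I would cite \cite{Berger} Chapter 6 and \cite{Lehmann} for the precise form of this argument and \cite{marchand2012, Bondar1981} for the amenability of the location-scale group and the attendant technical conditions (e.g. lower-semicontinuity of the loss, which holds here since the loss is continuous and nonnegative).

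The main obstacle, and the place where care is genuinely needed, is the passage from ``best invariant'' to ``minimax'': the Hunt--Stein machinery requires checking that $f(\bm y|\bm\theta)$ is such that the integrals defining (\ref{Minimax_Kmodel_offset})--(\ref{Minimax_Kmodel_skew}) converge (so $r_0<\infty$ and the generalized Bayes rule is well-defined), and that the regularity hypotheses in the cited minimaxity theorems (amenability of $\bar{\mathcal G}_{KModel}$, measurability, finiteness of the best invariant risk) are all met for the PDV densities $f_{\bm w_1}, f_{\bm w_2}$ of interest, which may have bounded support. A secondary subtlety is that two different loss functions are in play for the two parameters; each needs its own application of the ``constant risk $+$ Hunt--Stein'' argument, but since both $L_1$ and $L_2$ are invariant under the same group $\mathcal{G}_{KModel}$ with the same transitive induced action $\bar{\mathcal{G}}_{KModel}$, the argument is identical in structure and only the final formula differs. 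I would therefore state the integrability assumption explicitly (or note it holds for the PDV models considered), apply Theorem 1 to get constant risk, quote the amenability of the affine group, and conclude minimaxity from \cite{Berger, Bondar1981, marchand2012, Lehmann}.
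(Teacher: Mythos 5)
Your proposal is correct and follows essentially the same route as the paper: transitivity of $\bar{\mathcal{G}}_{KModel}$, generalized Bayes against the right-invariant prior $\phi^{-1}\,d\phi\,d\delta$, differentiation of the posterior expected skew-normalized loss to get the weighted posterior means (with the $\phi^{-3}$ arising exactly as you say), and minimaxity via the location-scale/amenability result of Bondar et al. The only differences are cosmetic — you spell out the Hunt--Stein limiting-prior mechanism and the integrability caveats where the paper simply cites the relevant theorems.
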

\noindent
\begin{proof}
As $\bar{\mathcal{G}}_{KModel}$ defined in (\ref{K_model_param_group}) is transitive on $\bm{\Theta}$, the optimum invariant estimator of $\delta$ under $\mathcal{G}_{KModel}$ in (\ref{K_model_group}), denoted by $\hat{\delta}_{MinRisk}$, can be obtained by solving (See {\bf Result 3} in Section 6.6.2 of \cite{Berger})
\begin{eqnarray}\label{Bayes_risk_Kmodel}
\hat{\delta}_{MinRisk}(\bm{y})  =  \underbrace{\arg \min}_{\hat{\delta}} \int_{\bm{\Theta}} L_{1}(\hat{\delta}(\bm{y}), \bm{\theta})  \pi^r(\bm{\theta}|\bm{y}) d\bm{\theta}=\underbrace{\arg \min}_{\hat{\delta}} \int_{\bm{\Theta}} \frac{(\hat{\delta}(\bm{y}) - \delta)^2}{\phi^2}  \pi^r(\bm{\theta}|\bm{y}) d\bm{\theta},
\end{eqnarray}
where $\pi^r(\bm{\theta}|\bm{y}) =  \frac{f(\bm{y}|\bm{\theta})\pi^r(\bm{\theta})}{\int_{\bm{\Theta}}f(\bm{y}|\bm{\theta})\pi^r(\bm{\theta}) d\bm{\theta}}$
is the posterior density of $\bm{\theta}$ based on the right invariant prior $\pi^r$ on $\bm{\Theta}$ (see Section 6.6.1, \cite{Berger})\footnote{The right invariant prior density need not be an actual density \cite{Berger} (See section 6.6, page 409).}. The right invariant prior for the location-scale group was derived in  \cite{Berger} (see Section 6.6). As $\mathcal{G}_{KModel}$ from (\ref{K_model_group}) is a location-scale group, the right invariant prior density for $\mathcal{G}_{KModel}$ is given by
\begin{eqnarray}
\pi^r(\bm{\theta}) & = & \frac{1}{\phi} \mathcal{I}_{\mathbb{R}^+}(\phi) \mathcal{I}_{\mathbb{R}}(\delta).
\end{eqnarray}
\noindent
To find $\hat{\delta}_{MinRisk}$, we differentiate the objective function in (\ref{Bayes_risk_Kmodel}) with respect to $\hat{\delta}(\bm{y})$ and set the result equal to zero (Section 2.4.1, \cite{VanTrees}). We obtain
\begin{eqnarray}
\hat{\delta}_{MinRisk}(\bm{y}) & = & \frac{\int_{\mathbb{R}^+}\int_{\mathbb{R}} \frac{\delta}{\phi^2} \pi^r(\bm{\theta}|\bm{y}) d\bm{\theta} }{\int_{\mathbb{R}^+}\int_{\mathbb{R}} \frac{1}{\phi^2} \pi^r(\bm{\theta}|\bm{y}) d\bm{\theta}} = \frac{\int_{\mathbb{R}^+}\int_{\mathbb{R}} \frac{\delta}{\phi^3} f(\bm{y}|\bm{\theta}) d\bm{\theta} }{\int_{\mathbb{R}^+}\int_{\mathbb{R}} \frac{1}{\phi^3} f(\bm{y}|\bm{\theta}) d\bm{\theta}}.
\end{eqnarray}
Similarly, the optimum invariant estimator of $\phi$ under $\mathcal{G}_{KModel}$ in (\ref{K_model_group}), denoted by $\hat{\phi}_{MinRisk}$, can be obtained by 
\begin{eqnarray}\label{Bayes_risk_Kmodel_phi}
\hat{\phi}_{MinRisk}(\bm{y}) & = & \underbrace{\arg \min}_{\hat{\phi}} \int_{\bm{\Theta}} \frac{(\hat{\phi}(\bm{y}) - \phi)^2}{\phi^2}  \pi^r(\bm{\theta}|\bm{y}) d\bm{\theta}.
\end{eqnarray}
Solving using the same derivative-based approach, we obtain
\begin{eqnarray}
\hat{\phi}_{MinRisk}(\bm{y}) & = & \frac{\int_{\mathbb{R}^+}\int_{\mathbb{R}} \frac{1}{\phi^{2}} f(\bm{y}|\bm{\theta}) d\delta d\phi }{\int_{\mathbb{R}^+}\int_{\mathbb{R}} \frac{1}{\phi^{3}} f(\bm{y}|\bm{\theta}) d\delta d\phi}. 
\end{eqnarray}

When the class of densities is invariant under the location-scale group, it was shown in \cite{Bondar1981} that the optimum invariant estimator of a parameter for an invariant loss function is also a minimax estimator of the parameter for the considered loss function. As the class of densities $\mathcal{F}_{KModel}$ is invariant under $\mathcal{G}_{KModel}$ in (\ref{K_model_group}) (a location-scale group), and the scale invariant loss function is invariant under $\mathcal{G}_{KModel}$, the optimum invariant estimators $\hat{\delta}_{MinRisk}$ and $\hat{\phi}_{MinRisk}$, are minimax optimum estimators of $\delta$ and $\phi$, respectively, for the skew-normalized squared error loss functions given in (\ref{loss_fn1}) and (\ref{loss_fn2}), respectively.
\end{proof}

We now present an important result with regards to the mean square estimation error performance of the minimax optimum estimators when compared to ML estimators. Let $\hat{\delta}$ and $\hat{\phi}$ denote estimators of $\delta$ and $\phi$, respectively. The Mean Square estimation Errors (MSEs) of $\hat{\delta}$ and $\hat{\phi}$, denoted by $\mbox{MSE}(\hat{\delta})$ and $\mbox{MSE}(\hat{\phi})$, respectively, are defined as
\begin{eqnarray}\label{MSE_delta}
\mbox{MSE}(\hat{\delta}) = E\left\{ (\hat{\delta} - \delta)^2|\bm{\theta} \right\}, \mbox{ and } 
\mbox{MSE}(\hat{\phi}) = E\left\{ (\hat{\phi} - \phi)^2 |\bm{\theta} \right\},
\end{eqnarray}
where $E\{.\}$ denotes the expectation operator and $\bm{\theta}$ is the vector of unknown parameters. 
\begin{proposition}\label{MLE_comp}
Let $\hat{\delta}_{MLE}$ and $\hat{\phi}_{MLE}$ denote the ML estimators of $\delta$ and $\phi$, respectively. Under the K-model, the MSE of $\hat{\delta}_{MLE}$ is always greater than or equal to the MSE of $\hat{\delta}_{MinRisk}$. Also, under the K-model, the MSE of $\hat{\phi}_{MLE}$ is always greater than or equal to the MSE of $\hat{\phi}_{MinRisk}$.
\end{proposition}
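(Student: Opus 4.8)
The plan is to exploit the fact that $\hat{\delta}_{MinRisk}$ and $\hat{\phi}_{MinRisk}$ are, by Proposition~\ref{Minimax_optimum_Kmodel_estimator_phase_freq}, the minimum conditional risk \emph{invariant} estimators under $\mathcal{G}_{KModel}$ for the losses $L_1$ and $L_2$, and that the ML estimators are themselves invariant under the same group. First I would verify that $\hat{\delta}_{MLE}$ and $\hat{\phi}_{MLE}$ satisfy the equivariance relations (\ref{invariant_Klocation})--(\ref{invariant_Kscale}). This follows from the standard equivariance property of ML estimation: for $g_{a,b}\in\mathcal{G}_{KModel}$, the density of $\bm{y}_g=g_{a,b}(\bm{y})$ is $\frac{1}{(a\phi)^{2P}}f_{\bm{u}}\!\left(\frac{\bm{y}_g-(a\delta+b)\bm{1}_{2P}}{a\phi}\right)$, so the likelihood of $(\phi,\delta)$ evaluated at $\bm{y}$ equals, up to the positive constant $a^{-2P}$, the likelihood of $\bar{g}_{a,b}((\phi,\delta))=(a\phi,\,a\delta+b)$ evaluated at $\bm{y}_g$; hence the maximizers are related by $\bar{g}_{a,b}$, which is precisely (\ref{invariant_Klocation}) for the $\delta$-component and (\ref{invariant_Kscale}) for the $\phi$-component.

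Next, since $\bar{\mathcal{G}}_{KModel}$ from (\ref{K_model_param_group}) is transitive on $\bm{\Theta}$, as already noted before Proposition~\ref{Minimax_optimum_Kmodel_estimator_phase_freq}, and the losses $L_1$, $L_2$ are invariant under $\mathcal{G}_{KModel}$, Theorem~1 implies that every invariant estimator has conditional risk that is constant over $\bm{\Theta}$. In particular $\hat{\delta}_{MinRisk}$ and $\hat{\delta}_{MLE}$ both have constant risk under $L_1$, and $\hat{\phi}_{MinRisk}$, $\hat{\phi}_{MLE}$ both have constant risk under $L_2$. Because $\hat{\delta}_{MinRisk}$ is the \emph{minimum} conditional risk invariant estimator under $L_1$ and $\hat{\delta}_{MLE}$ is an invariant estimator, we obtain $\mathcal{R}(\hat{\delta}_{MinRisk},\bm{\theta})\le \mathcal{R}(\hat{\delta}_{MLE},\bm{\theta})$ for every $\bm{\theta}\in\bm{\Theta}$; the analogous inequality holds for $\phi$ under $L_2$.

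Finally I would translate the risk inequalities into MSE inequalities. By the definition of $L_1$ in (\ref{loss_fn1}), $\mathcal{R}(\hat{\delta},\bm{\theta})=E\{(\hat{\delta}-\delta)^2/\phi^2\mid\bm{\theta}\}=\mbox{MSE}(\hat{\delta})/\phi^2$, and by (\ref{loss_fn2}) likewise $\mathcal{R}(\hat{\phi},\bm{\theta})=\mbox{MSE}(\hat{\phi})/\phi^2$. Multiplying the two risk inequalities by $\phi^2>0$ then yields $\mbox{MSE}(\hat{\delta}_{MinRisk})\le\mbox{MSE}(\hat{\delta}_{MLE})$ and $\mbox{MSE}(\hat{\phi}_{MinRisk})\le\mbox{MSE}(\hat{\phi}_{MLE})$ for all $\bm{\theta}$, which is the claim.

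I expect the main obstacle to be the first step: carefully establishing that the ML estimators are genuinely invariant under $\mathcal{G}_{KModel}$. This requires the likelihood maximizer to be well defined (unique, or at least a measurable selection made consistently along each orbit of the data under the group) and a clean bookkeeping of how the Jacobian factor $a^{-2P}$ drops out of the $\arg\max$; once that is in place the rest is immediate from Theorem~1 and Proposition~\ref{Minimax_optimum_Kmodel_estimator_phase_freq}. A secondary point to watch is that $L_1$ and $L_2$ yield finite risks so that the comparisons are meaningful, and that the $\arg\min$/$\arg\max$ defining the estimators are attained.
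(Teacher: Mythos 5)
Your proposal is correct and follows essentially the same route as the paper's proof: establish that the ML estimators are equivariant under $\mathcal{G}_{KModel}$ (the paper cites the functional invariance of ML estimation, while you derive the same fact directly from the transformed density and the cancellation of the Jacobian factor $a^{-2P}$ in the $\arg\max$), then invoke the optimality of $\hat{\delta}_{MinRisk}$ and $\hat{\phi}_{MinRisk}$ among invariant estimators to get the conditional-risk inequalities, and finally multiply by $\phi^2>0$ to convert the skew-normalized risk comparison into the MSE comparison. The only cosmetic difference is your explicit appeal to Theorem~1 (constant risk on orbits), which the paper does not need for this step.
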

\begin{proof}
In the \emph{K-model}, we have $\bm{\theta} = [\phi, \delta]$. Let $\hat{\phi}_{MLE}(\bm{y})$ and $\hat{\delta}_{MLE}(\bm{y})$ denote the ML estimates obtained from $\bm{y}$ characterized by the pdf $f(\bm{y}|\bm{\theta}) = \frac{1}{\phi^{2P}}f_{\bm{u}}(\frac{\bm{y} - \delta\bm{1}_{2P}}{\phi})$ from (\ref{main_Kmodel_joint_dist}). We have
\begin{eqnarray}\label{ML_estimate}
\hat{\bm{\theta}}_{MLE}(\bm{y})  = [\hat{\phi}_{MLE}(\bm{y}), \hat{\delta}_{MLE}(\bm{y})]& = & \underbrace{\arg \max}_{\bm{\theta}} \log \mathcal{L}(\bm{\theta}|\bm{y}),
\end{eqnarray}
where $\mathcal{L}(\bm{\theta}|\bm{y})$ is the likelihood function and is equal to $f(\bm{y}|\bm{\theta})$. Let $g_{a, b} \in \mathcal{G}_{KModel}$ from (\ref{K_model_group}) and define $\bm{y}_g = g_{a, b}(\bm{y})$. From (\ref{K_model_param_group}), the corresponding transformation of the parameter vector $\bm{\theta}$ is given by $\bm{\theta}_{g} = \bar{g}_{a, b}(\bm{\theta}) = (a\phi, (a\delta + b))$. From the functional invariance of ML estimators \cite{Casella} (see Chapter 7, Theorem 7.2.10), we have $\hat{\bm{\theta}}_{MLE}(\bm{y}_g) = \bar{g}_{a, b}(\hat{\bm{\theta}}_{MLE}(\bm{y}))$. So, we have the following relationship 
\begin{eqnarray}
\hat{\delta}_{MLE}(\bm{y}_g) = a\hat{\delta}_{MLE}(\bm{y}) + b, & \mbox{ and } & \hat{\phi}_{MLE}(\bm{y}_g) = a\hat{\phi}_{MLE}(\bm{y}).
\end{eqnarray}

As this holds true for all $g_{a, b} \in \mathcal{G}_{KModel}$ from (\ref{K_model_group}), the ML estimators of $\delta$ and $\phi$ are invariant under $\mathcal{G}_{KModel}$ as they satisfy (\ref{invariant_Klocation}) and (\ref{invariant_Kscale}). For the skew-normalized loss function defined in (\ref{loss_fn1}), we have
\begin{eqnarray}\label{Risk_comp}
\mathcal{R}(\hat{\delta}_{MinRisk}, \bm{\theta}) & \le & \mathcal{R}(\hat{\delta}_{MLE}, \bm{\theta}), 
\end{eqnarray}
since $\hat{\delta}_{MinRisk}$ is the optimum invariant estimator under $\mathcal{G}_{KModel}$ in (\ref{K_model_group}) and achieves the minimum conditional risk among all estimators that are invariant under $\mathcal{G}_{KModel}$ (see Proposition \ref{Minimax_optimum_Kmodel_estimator_phase_freq}). From (\ref{Risk_comp}), we have
\begin{eqnarray}
\int_{\mathbb{R}^{2P}} \frac{(\hat{\delta}_{MinRisk}(\bm{y}) - \delta)^2}{\phi^2}f(\bm{y}|\bm{\theta})d\bm{y} & \le & \int_{\mathbb{R}^{2P}} \frac{(\hat{\delta}_{MLE}(\bm{y}) - \delta)^2}{\phi^2}f(\bm{y}|\bm{\theta})d\bm{y}, \\
\implies \int_{\mathbb{R}^{2P}} {(\hat{\delta}_{MinRisk}(\bm{y}) - \delta)^2}f(\bm{y}|\bm{\theta})d\bm{y} & \le & \int_{\mathbb{R}^{2P}} {(\hat{\delta}_{MLE}(\bm{y}) - \delta)^2}f(\bm{y}|\bm{\theta})d\bm{y}, \\
\implies \mbox{MSE}(\hat{\delta}_{MinRisk}) & \le & \mbox{MSE}(\hat{\delta}_{MLE}).
\end{eqnarray}
Following similar steps, we can show that $\mbox{MSE}(\hat{\phi}_{MinRisk}) \le \mbox{MSE}(\hat{\phi}_{MLE})$.
\end{proof}

\section{Minimax optimum CSOE scheme under S-model}\label{Sec5}
We now apply invariant decision theory to derive the optimum invariant estimator of $\phi$ and $\delta$ under the \emph{S-model}. Recall from (\ref{Smodel}), the observations under the \emph{S-model} can be represented as
\begin{eqnarray}
\bm{y} & = & (\bm{h}d + \bm{v})\phi + \delta \bm{1}_{2P},
\end{eqnarray}
where $\bm{y} \in \mathbb{R}^{2P}$, $\bm{v} \in \mathbb{R}^{2P}$, $\phi \in \mathbb{R}^+$ and $\delta \in \mathbb{R}$. As the unknown fixed delay $d$ is always non-negative, we have $d \in \mathbb{R}_0^+$. However, it is not possible to design invariant estimators under this constraint\footnote{When $d \in \mathbb{R}_0^+$, it is not possible to construct a group of transformations for which the class of densities in the \emph{S-model} is invariant under the group of transformations.}, so we assume $d \in \mathbb{R}$, but later we see this is not a problem as we derive the minimax optimum estimator in Proposition \ref{Minimax_optimum_Smodel_estimator_phase_freq}. Let $\bm{\theta} = [\phi, d, \delta]$ denote the vector of unknown parameters. The unrestricted parameter space of $\bm{\theta}$, denoted by $\bm{\Theta}$, is given by
\begin{eqnarray}\label{S_unres_param}
\bm{\Theta} & = & \{(\phi, d, \delta): \phi \in \mathbb{R}^+, d \in \mathbb{R}, \delta \in \mathbb{R} \},
\end{eqnarray}
and the restricted parameter space of $\bm{\theta}$, denoted by $\bm{\Theta}^*$, is given by
\begin{eqnarray}\label{S_res_param}
\bm{\Theta}^* & = & \{(\phi, d, \delta): \phi \in \mathbb{R}^+, d \in \mathbb{R}_0^+, \delta \in \mathbb{R} \}.
\end{eqnarray}
From (\ref{Smodel}), we have $f_{\bm{v}_1}(\bm{v}_1) = f_{\bm{w}_1}(\bm{v}_1 - \bm{t}_1)$, $f_{\bm{v}_2}(\bm{v}_2) = f_{\bm{w}_2}(\bm{t}_4 - \bm{v}_2)$, $f_{\bm{v}}(\bm{v}) = f_{\bm{v}_1}(\bm{v}_1)f_{\bm{v}_2}(\bm{v}_2)$ and
\begin{eqnarray}
f(\bm{y}|\bm{\theta}) & = & \frac{1}{\phi^{2P}} f_{\bm{v}}\left( \frac{\bm{y} - \delta\bm{1}_{2P}}{\phi} - \bm{h}d\right) = \frac{1}{\phi^{2P}} f_{\bm{v}_1}\left( \frac{\bm{t}_2 - \delta\bm{1}_{P}^T}{\phi} - d\bm{1}_P^T\right)f_{\bm{v}_2}\left( \frac{\bm{t}_3 - \delta\bm{1}_{P}^T}{\phi} + d\bm{1}_P^T\right), \nonumber \\
& = & \frac{1}{\phi^{2P}} f_{\bm{w}_1}\left(\frac{\bm{t}_{2} - \delta\bm{1}_P^T}{\phi} - d\bm{1}_P^T - \bm{t}_{1}\right) f_{\bm{w}_2}\left(\frac{\delta\bm{1}_P^T - \bm{t}_{3}}{\phi} - d\bm{1}_P^T + \bm{t}_{4}\right). \label{S_model_pdf}
\end{eqnarray}

Let $\mathcal{F}_{SModel}$ denote the class of all densities $f(\bm{y}|\bm{\theta})$ for $\bm{\theta} \in \bm{\Theta}$. 
The class of such densities is invariant under the group of transformations $\mathcal{G}_{SModel}$, on $\mathbb{R}^{2P}$, defined as
\begin{eqnarray}\label{S_model_group}
\mathcal{G}_{SModel}  & = & \{ g_{a,b,c}(\bm{m}) : g_{a, b, c}(\bm{m})  = a(\bm{m} + \bm{h}b) + c\bm{1}_{2P} \},
\end{eqnarray} 
where $a \in \mathbb{R}^+, b \in \mathbb{R}, c \in \mathbb{R} \mbox{ and }\bm{m} \in \mathbb{R}^{2P}$, since $\bm{y}_g = g_{a, b, c}(\bm{y})$ has the density $\frac{1}{(a\phi)^{2P}} \allowbreak f_{\bm{v}}\left( \frac{\bm{y}_g - (a\delta + c)\bm{1}_{2P}}{a\phi}  - \bm{h}\left(d + \frac{b}{\phi} \right)\right)$. The group, $\bar{\mathcal{G}}_{SModel}$, of induced transformations on $\bm{\Theta}$ defined in (\ref{S_unres_param}),
is given by
\begin{eqnarray}\label{S_model_param_group}
\bar{\mathcal{G}}_{SModel} = \{ \bar{g}_{a,b,c}((\phi, d, \delta)) : \bar{g}_{a, b, c}((\phi, d, \delta))  = (a\phi, (d + b/\phi), (a\delta + c)) \},
\end{eqnarray}
where $a \in \mathbb{R}^+, b \in \mathbb{R}, c \in \mathbb{R}, \phi \in \mathbb{R}^+, d \in \mathbb{R}$ and $\delta \in \mathbb{R}$.

Let $\hat{\delta}_I$ and $\hat{\phi}_I$ denote estimators of $\delta$ and $\phi$, respectively and let $\hat{\delta}_I(\bm{y})$ and $\hat{\phi}_I(\bm{y})$ denote the estimates obtained from the received data $\bm{y}$ characterized by the pdf $f(\bm{y}|\bm{\theta}) = \frac{1}{\phi^{2P}} f_{\bm{v}}\left( \frac{\bm{y} - \delta\bm{1}_{2P}}{\phi} - \bm{h}d\right)$.  The estimators $\hat{\phi}_{I}(\bm{y})$ and $\hat{\delta}_{I}(\bm{y})$ are invariant under $\mathcal{G}_{SModel}$ from (\ref{S_model_group}), if for all $(a, b, c) \in \mathbb{R}^+ \times \mathbb{R} \times \mathbb{R}$,
\begin{eqnarray}
\hat{\delta}_{I}(g_{a, b, c}(\bm{y})) = \hat{\delta}_{I}(a(\bm{y} + \bm{h}b) + c\bm{1}_{2P}) & = & a\hat{\delta}_{I}(\bm{y}) + c, \\
\hat{\phi}_{I}(g_{a, b, c}(\bm{y})) = \hat{\phi}_{I}(a(\bm{y} + \bm{h}b) + c\bm{1}_{2P}) & = & a\hat{\phi}_{I}(\bm{y}).
\end{eqnarray}
Further, the skew-normalized loss functions defined in (\ref{loss_fn1}) and (\ref{loss_fn2}) for $\delta$ and $\phi$, respectively, are invariant under $\mathcal{G}_{SModel}$ from (\ref{S_model_group}), since
\begin{eqnarray}
\frac{(\hat{\delta}_{I}(\bm{y}) - \delta)^2}{\phi^2} = \frac{\left(\hat{\delta}_{I}(g_{a, b, c}(\bm{y})) - (a\delta + c)\right)^2}{a^2\phi^2}, \mbox{ and } \frac{(\hat{\phi}_{I}(\bm{y}) - \phi)^2}{\phi^2} = \frac{\left(\hat{\phi}_{I}(g_{a, b, c}(\bm{y})) - a\phi\right)^2}{a^2\phi^2}
\end{eqnarray}
for all $g_{a, b, c} \in \mathcal{G}_{SModel}$. We now present the minimax optimum estimators of $\delta$ and $\phi$ under the \emph{S-model}.

\begin{proposition}\label{Minimax_optimum_Smodel_estimator_phase_freq}
	The optimum (or minimum conditional risk) invariant estimators of $\delta$ and $\phi$, denoted by $\hat{\delta}_{MinRisk}$ and $\hat{\phi}_{MinRisk}$, respectively, under $\mathcal{G}_{SModel}$ defined in (\ref{S_model_group}), for the scale invariant squared error loss functions defined in (\ref{loss_fn1}) and (\ref{loss_fn2}), respectively, are given by 
	\begin{eqnarray}
	\hat{\delta}_{MinRisk}(\bm{y}) & = & \frac{\int_{\mathbb{R}^+}\int_{\mathbb{R}^2} \frac{\delta}{\phi^{2}}  f(\bm{y}|\bm{\theta}) d(d) d\delta d\phi }{\int_{\mathbb{R}^+}\int_{\mathbb{R}^2} \frac{1}{\phi^{2}} f(\bm{y}|\bm{\theta}) d(d) d\delta d\phi}, \label{Minimax_Smodel_offset} \\
	\hat{\phi}_{MinRisk}(\bm{y}) & = & \frac{\int_{\mathbb{R}^+}\int_{\mathbb{R}^2} \frac{1}{\phi} f(\bm{y}|\bm{\theta}) d(d) d\delta d\phi }{\int_{\mathbb{R}^+}\int_{\mathbb{R}^2} \frac{1}{\phi^{2}} f(\bm{y}|\bm{\theta}) d(d) d\delta d\phi}, \label{Minimax_Smodel_skew}
	\end{eqnarray}
	where $f(\bm{y}|\bm{\theta}) = f_{\bm{w}_1}\left(\frac{\bm{t}_{2} - \delta\bm{1}_P^T}{\phi} - d\bm{1}_P^T - \bm{t}_{1}\right) f_{\bm{w}_2}\left(\frac{\delta\bm{1}_P^T - \bm{t}_{3}}{\phi} - d\bm{1}_P^T + \bm{t}_{4}\right)$.
	Further, the derived optimum invariant estimators are minimax for the skew-normalized squared error loss in the restricted parameter space $\bm{\Theta}^*$ (see Appendix \ref{App_sec3} for proof).
\end{proposition}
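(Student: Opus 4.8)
The plan is to prove the two assertions separately --- first the closed forms (\ref{Minimax_Smodel_offset})--(\ref{Minimax_Smodel_skew}) of the optimum invariant estimators, and then their minimaxity over the restricted space $\bm{\Theta}^*$.

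For the closed forms I would follow the proof of Proposition~\ref{Minimax_optimum_Kmodel_estimator_phase_freq} step for step. First check that $\bar{\mathcal{G}}_{SModel}$ in (\ref{S_model_param_group}) is transitive on the unrestricted space $\bm{\Theta}$ of (\ref{S_unres_param}): for $(\phi_1,d_1,\delta_1),(\phi_2,d_2,\delta_2)\in\bm{\Theta}$, the choice $a=\phi_2/\phi_1$, $b=\phi_1(d_2-d_1)$, $c=\delta_2-a\delta_1$ gives a $\bar{g}_{a,b,c}$ carrying the first triple to the second. Transitivity makes every invariant estimator constant-risk (Theorem~1), and {\bf Result 3} of Section~6.6.2 of \cite{Berger} then identifies the optimum invariant estimator with the generalized Bayes estimator against the right-invariant prior $\pi^r$ on $\bm{\Theta}$. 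So the next step is to identify $\pi^r$: the group law of $\mathcal{G}_{SModel}$ is $(a,b,c)(a',b',c')=(aa',\,b'+b/a',\,ac'+c)$, every right translation on this group has Jacobian determinant equal to $1$, so Lebesgue measure $da\,db\,dc$ is itself right Haar; transporting it to $\bm{\Theta}$ through the free transitive action $\bar{g}_{a,b,c}(1,0,0)=(a,b,c)$ gives the flat prior $\pi^r(\bm{\theta})=\mathcal{I}_{\mathbb{R}^+}(\phi)\,\mathcal{I}_{\mathbb{R}}(d)\,\mathcal{I}_{\mathbb{R}}(\delta)$. This flat prior, as opposed to the $1/\phi$ location-scale prior used in the K-model, is exactly why the exponents of $\phi$ here differ from those in (\ref{Minimax_Kmodel_offset})--(\ref{Minimax_Kmodel_skew}). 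Differentiating $\int_{\bm{\Theta}}\frac{(\hat{\delta}(\bm{y})-\delta)^2}{\phi^2}\,\pi^r(\bm{\theta}|\bm{y})\,d\bm{\theta}$ (and the analogous functional for $\phi$) with respect to the estimate, setting it to zero, and substituting $\pi^r(\bm{\theta}|\bm{y})\propto f(\bm{y}|\bm{\theta})$, then yields (\ref{Minimax_Smodel_offset})--(\ref{Minimax_Smodel_skew}); this is a one-line calculation identical in form to the one in Proposition~\ref{Minimax_optimum_Kmodel_estimator_phase_freq}.

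For minimaxity I would first establish it over the \emph{unrestricted} $\bm{\Theta}$ and then extend to $\bm{\Theta}^*$. The group $\mathcal{G}_{SModel}$ is solvable --- it has the normal abelian subgroup $\{g_{1,b,c}:b,c\in\mathbb{R}\}\cong\mathbb{R}^2$ with quotient the multiplicative group $\mathbb{R}^+$ --- hence amenable, so the Hunt--Stein theorem applies (\cite{Bondar1981}); together with the constant conditional risk of $\hat{\delta}_{MinRisk}$ and $\hat{\phi}_{MinRisk}$ this makes them minimax over $\bm{\Theta}$, once the routine regularity conditions ensuring the generalized Bayes estimators are well defined with finite common risk $\mathcal{R}$ have been checked. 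Because $\bm{\Theta}^*\subset\bm{\Theta}$ and the risk equals $\mathcal{R}$ at every point of $\bm{\Theta}$, the maximum risk of each estimator over $\bm{\Theta}^*$ is still $\mathcal{R}$; so it only remains to show the minimax value of the restricted problem is at least $\mathcal{R}$, i.e. to exhibit proper priors $\{\lambda_n\}$ supported on $\bm{\Theta}^*$ with Bayes risks $r_{\lambda_n}\to\mathcal{R}$, after which $\sup_{\bm{\theta}\in\bm{\Theta}^*}\mathcal{R}(\psi,\bm{\theta})\ge\int\mathcal{R}(\psi,\bm{\theta})\,d\lambda_n(\bm{\theta})\ge r_{\lambda_n}\to\mathcal{R}$ for every estimator $\psi$.

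Constructing this least-favorable sequence on $\bm{\Theta}^*$ is the main obstacle. Once one reduces by invariance, $d$ plays the role of a location parameter restricted to the half-line $[0,\infty)$, the best invariant estimator is generalized Bayes against Lebesgue measure in that coordinate, and the loss is a convex, bowl-shaped function of the estimate --- precisely the configuration handled by the unified restricted-parameter minimax result of \cite{marchand2012}. I would verify its hypotheses here: that the restriction set is unbounded in the relevant direction (so that truncations of the improper right-Haar prior to $\bm{\Theta}^*$ make the boundary effect near $d=0$ asymptotically negligible and drive $r_{\lambda_n}\to\mathcal{R}$), and that the scale coordinate $\phi$ and the nuisance parameter $d$ do not interfere with the one-dimensional location argument, along with the integrability estimates on $f(\bm{y}|\bm{\theta})$ that the argument requires. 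This detailed verification is what is deferred to Appendix~\ref{App_sec3}.
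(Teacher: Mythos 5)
Your derivation of the closed forms matches the paper's: you identify the same group law $(a,b,c)(a_0,b_0,c_0)=(aa_0,\,b_0+b/a_0,\,ac_0+c)$, compute that right translation has unit Jacobian so the right-invariant prior is the flat $\pi^r(\bm{\theta})=\mathcal{I}_{\mathbb{R}^+}(\phi)\mathcal{I}_{\mathbb{R}}(d)\mathcal{I}_{\mathbb{R}}(\delta)$ (correctly explaining why the exponents of $\phi$ drop by one relative to the \emph{K-model}), and then differentiate the posterior expected loss. For minimaxity over the unrestricted $\bm{\Theta}$ you take a somewhat different route: you invoke amenability of the solvable group and Hunt--Stein via \cite{Bondar1981}, whereas the appendix gives a self-contained limiting argument with the explicit proper priors $\pi_k$ uniform on $(0,k)\times(-k,k)\times(-k,k)$ and shows $\mathcal{B}(\hat{\delta}_{\pi_k},\pi_k)\to\mathcal{M}(\hat{\delta}_{MinRisk})$. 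Both are legitimate; the paper's version avoids having to worry about whether the classical Hunt--Stein machinery covers this unbounded invariant loss.

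The genuine gap is in the restricted-space step, which you explicitly defer as ``the main obstacle.'' The condition of Theorem 1 of \cite{marchand2012} that the paper verifies is not the analytic package you describe (least-favorable proper priors supported on $\bm{\Theta}^*$, boundary effects near $d=0$, integrability estimates on $f(\bm{y}|\bm{\theta})$). It is a purely set-theoretic condition on the group action: one must exhibit a sequence $\bar{g}_{a_k,b_k,c_k}\in\bar{\mathcal{G}}_{SModel}$ with $\bar{g}_{a_k,b_k,c_k}(\bm{\Theta}^*)\subseteq\bar{g}_{a_{k+1},b_{k+1},c_{k+1}}(\bm{\Theta}^*)$ and $\bigcup_k\bar{g}_{a_k,b_k,c_k}(\bm{\Theta}^*)=\bm{\Theta}$. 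The whole verification is one line: take $\bar{g}_{1,-k,0}$, so that $\bar{g}_{1,-k,0}(\bm{\Theta}^*)=\{(\phi,d,\delta):\phi\in\mathbb{R}^+,\,d\ge -k/\phi,\,\delta\in\mathbb{R}\}$, which is a nested increasing family exhausting $\bm{\Theta}$. Without producing this sequence (or an equivalent construction), the claim that minimaxity survives the restriction $d\ge 0$ is unsupported; note also that your framing of $d$ as ``a location parameter restricted to $[0,\infty)$'' obscures the fact that the group shifts $d$ by $b/\phi$ rather than by $b$, which is exactly why the image sets above depend on $\phi$ and why the exhaustion argument still works.
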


\section{Simulation Results}\label{Sec6}
In this section, we compare the performance of the proposed minimax optimum estimators to the ML estimators discussed in \cite{Noh_2007, Leng_2010, Chaudhari_2008} via numerical simulations. We first briefly describe the approach used for generating the random queuing delays along with the generation of the packet timestamps. Then, we describe the various considered CSOE schemes, and finally, we present numerical results. For simplicity, we assume symmetric network conditions in the forward and reverse paths, i.e., $f_{\bm{w}_1}(.) = f_{\bm{w}_2}(.) = f_{\bm{w}}(.)$. Further, we assume the queuing delay samples $\{w_{kj}\}$ for $k = 1, 2$ and $j = 1, 2, \cdots, P$ are independent and identically distributed (i.i.d). Symmetric network conditions and i.i.d queuing delays are assumed in the signal model considered in \cite{Noh_2007, Leng_2010, Chaudhari_2008}.

We consider the two scenarios in our work, namely the backhaul network scenario discussed in Section \ref{Sec1}\footnote{PTP is used in conjunction with Synchronous Ethernet (SyncE) for synchronization in LTE backhaul networks. Although the SyncE standards are now mature, much of the deployed base of Ethernet equipment does not support it \cite{IEEE1588v2_whitepaper}. If a single Ethernet switch in the chain does not support SyncE, all nodes lower in the hierarchy do not receive the timing service \cite{IEEE1588v2_whitepaper}. PTP is the primary option for synchronization to operators with packet backhaul networks that do not support SyncE \cite{IEEE1588v2_whitepaper, IEEE1588v2_Huawei}. }, and the electrical grid scenario where PTP-based timing is used as an alternative to GPS-based timing in the electrical grid for scenarios when GPS signals are not available at the substation \cite{GPSattacks, GPSattacks1, NASPI}.

\subsection{Generating the random queuing delays and packet timestamps}
We briefly describe the generation of the random queuing delays in the considered packet switched networks.

\subsubsection{LTE backhaul networks}
We follow the approach given in \cite{Anand_bounds, Anand_2015} for generating the random queuing delays in the backhaul networks. We assume a Gigabit Ethernet network consisting of a cascade of $10$ switches between the master and slave node. A two-class non-preemptive priority queue is used to model the traffic at each switch. The network traffic at the switch comprises of the lower priority background traffic and the higher priority synchronization messages. We assume cross-traffic flows, where new background traffic is injected at each switch and this traffic exits at the subsequent switch. The arrival times and size of background traffic packets injected at each switch are assumed to be statistically independent. We use Traffic Model 1 (TM-1) and Traffic Model 2 (TM-2) from the ITU-T recommendation G.8261 \cite{ITU}, described in Table \ref{sec6_table1}, for generating the background traffic at each switch. The interarrival times between packets in background traffic are assumed to follow an exponential distribution, and we set the rate parameter of each exponential distribution accordingly to obtain the desired load factor, i.e., the percentage of the total capacity consumed by background traffic\cite{Anand_2015}. The empirical pdf of the queuing delays, shown in Fig. \ref{Empirical_pdf} were obtained using a custom MATLAB-based network simulator. The timestamps $t_{1i}$ and $t_{3i}$ are set to $40i$ $\mu s$ and $40i$ $\mu s + 20 \mu s$, respectively, for $i = 0, 1, \cdots, P-1$. For a given value of parameters $\{\phi, d, \delta\}$, the timestamps $t_{2i}$ and $t_{4i}$ are then generated using (\ref{FwdData}) and (\ref{RevData}), respectively, assuming $d_{ms} = d_{sm} = d$.

\begin{table}[t]
	\begin{center}
		\begin{tabular}{|c|c|c|}
			\hline
			{\bf Traffic Model} & {\bf Packet Sizes (in Bytes)} & {\bf \% of total load} \\
			\hline 
			TM-1 & \{64, 576, 1518\} & \{80\%, 5\%, 15\%\} \\
			\hline
			TM-2 & \{64, 576, 1518\} & \{30\%, 10\%, 60\%\} \\
			\hline			
		\end{tabular}
		\caption{Composition of background packets in the considered traffic models.}\label{sec6_table1}
	\end{center}
\end{table}

\subsubsection{Electrical grid networks}
We consider the scenario where the master clock in an Electrical Grid (EG) substation uses the available LTE-based packet switched network along with PTP to obtain the timing information from other sources \cite{NASPI}. We use the traffic model proposed in \cite{Khatib} for generating the random queuing delays in this scenario. A three-class non-preemptive priority queue is used for modeling the traffic at an access point in the EG network\footnote{We assume the access point is connected to a switch or router in the wired Ethernet network.}. The network traffic at the switch comprises of the Public Users (PU) traffic or background traffic, Fixed-Scheduling (FS) traffic and Event-Driven (ED) traffic. The FS traffic is the operational traffic between the utility's control center and the devices that contain meter readings (MR) data and is transmitted periodically. The ED traffic consists of the demand response traffic and other high priority traffic including timing synchronization packets. The arrival processes of the ED and PU traffic are assumed to be Poisson, while the FS traffic is assumed to be a deterministic batch arrival process \cite{Khatib}. The transmission priority, in descending order, is ED, PU, and FS. 

We consider a Gigabit Ethernet network and a cascade of $10$ switches between the master and slave node. The arrival times and sizes of traffic packets injected at each switch are assumed to be statistically independent of traffic at other access points. We use TM-1 for generating the PU traffic and assume cross-traffic flows. The rate parameter of the exponentially distributed inter-arrival times between packets of the PU traffic is set accordingly to obtain the desired load factor (the percentage of the total capacity consumed by the PU traffic.). The period of the FS traffic is assumed to be $1$ second with the packet size fixed to $512$ Bytes. The batch size of the FS traffic is a discrete random variable following a uniform distribution of maximum size $100$. This network scenario is abbreviated as EG-TM1. The empirical pdf of the queuing delays for the PTP synchronization packets in the considered networks are shown in Figure \ref{Empirical_pdf_sg}.

\subsection{Considered CSOE schemes}
We now briefly describe the considered CSOE schemes. 

\subsubsection{Gaussian Maximum Likelihood Estimate (GMLE)}
We assume the \emph{K-model} for this CSOE scheme. Leng and Wu \cite{Leng_2010} proposed an ML-based CSOE scheme under the assumption that the queuing delay follows a Gaussian distribution. As shown in \cite{Leng_2010}, the approach assumes the PDV pdf is a zero-mean Gaussian and the variance cancels out in the derivation of the ML estimate. The estimation is equivalent to the least squares fit (see \cite{Leng_2010}), which is very popular in statistics\footnote{For the considered scenarios in this paper, we compensate for the mean of $f_{\bm{w}}(.)$ before using this estimator.}. It can be shown that this CSOE scheme is invariant under $\mathcal{G}_{KModel}$ defined in (\ref{K_model_group}).

\subsubsection{Local Maximum Likelihood Estimate (LMLE)}
We assume the \emph{K-model} for this CSOE scheme. As discussed in Proposition \ref{MLE_comp}, the ML estimate under the \emph{K-model} is obtaining by finding the value of parameters that maximize the likelihood function (see (\ref{ML_estimate})). However, for small values of $P$, the likelihood function need not always be convex. The likelihood function is shown in Figure \ref{LMLE_LLF} for TM-1 network scenario under $40\%$ load for $\phi = 1$ and $\delta = 0$ for different values of $P$. We see that for small values of $P$, the likelihood function is not necessarily convex and sometimes it has many local maxima. In our simulations, we use the solution obtained from GMLE as the initial point in the search for the ML estimate. The obtained solution is called the \emph{Local Maximum Likelihood Estimate} since we cannot guarantee a global maximum. To date, there is no known way to assure a global maximum (or minimum) has been found.

\subsubsection{Minimax Optimum Estimate under K-model (Minimax-K)}
We assume the \emph{K-model} for this CSOE scheme. The unknown parameters $\delta$ and $\phi$ are estimated using (\ref{Minimax_Kmodel_offset}) and (\ref{Minimax_Kmodel_skew}), respectively.

\subsubsection{Minimax Optimum Estimate under S-model (Minimax-S)}
We assume the \emph{S-model} for this CSOE scheme. The unknown parameters $\delta$ and $\phi$ are estimated using (\ref{Minimax_Smodel_offset}) and (\ref{Minimax_Smodel_skew}), respectively.

It should be mentioned that we have used the \emph{K-model} for all considered ML-based CSOE schemes. (The fixed delay $d$ is assumed to be known.) We conjecture that this provides a lower bound on the performance of an ML-based CSOE scheme in the \emph{S-model}, as the presence of additional unknown nuisance parameters would generally degrade the performance of a CSOE scheme.

\subsection{Performance Metric used for comparing CSOE schemes}
Let $\hat{\delta}$ and $\hat{\phi}$ denote estimators of $\delta$ and $\phi$, respectively. The Root Mean Square estimation Error (RMSE) of $\hat{\delta}$ and $\hat{\phi}$, denoted by $\mbox{RMSE}(\hat{\delta})$ and $\mbox{RMSE}(\hat{\phi})$, respectively, is defined by 
\begin{eqnarray}
\mbox{RMSE}(\hat{\delta}) =  {\sqrt{\mbox{MSE}(\hat{\delta}) }}  & \mbox{ and } & 
\mbox{RMSE}(\hat{\phi})  =  {\sqrt{\mbox{MSE}(\hat{\phi}) }},
\end{eqnarray}
where $\mbox{MSE}(\hat{\delta})$ and $\mbox{MSE}(\hat{\phi})$ are defined in (\ref{MSE_delta}). In this paper, we use $\mbox{RMSE}(\hat{\delta})$ and $\mbox{RMSE}(\hat{\phi})$ to evaluate the performance of a CSOE scheme.

\subsection{Numerical results}
We carried out numerical simulations for the considered CSOE schemes under TM-1 and TM-2 LTE backhaul network scenarios for different values of load factors. Figures \ref{rmse_TM1_offset_results}--\ref{rmse_TM2_skew_results} show the RMSE performance for the considered CSOE schemes for $\{\phi, d, \delta\} = \{1, 2\mu s, 2\mu s\}$. From Figures \ref{rmse_TM1_offset_results}--\ref{rmse_TM2_skew_results}, we see that the performance of all the considered CSOE schemes improves with an increase in the number of two-way message exchanges. We also observe that the proposed minimax optimum CSOE schemes exhibit better performance compared to LMLE and GMLE for the considered network scenarios. Further, following Proposition \ref{MLE_comp}, the minimax optimum estimator under the \emph{K-model} exhibits the lowest mean square error among the considered CSOE schemes that are invariant under $\mathcal{G}_{KModel}$. Also, we observe no significant loss in performance of the minimax optimum estimator under the \emph{S-model} due to the unknown nuisance parameter $d$ for all the considered network scenarios. As expected, the GMLE does not exhibit good performance under low load network scenarios. However, as the load factor increases, the performance of GMLE improves as the PDV pdf approximates a Gaussian distribution (see TM-2 for load factors 60\%, 80\% in Figure \ref{Empirical_pdf}). For all the considered scenarios, the LMLE exhibits an improvement in performance compared to the GMLE with noticeable improvement for high loads in TM-1. Further for TM-2 under high loads, the LMLE clock skew estimator exhibits performance close to minimax optimum CSOE schemes. Figures \ref{rmse_TM1sg_offset_results}-\ref{rmse_TM1sg_skew_results} show the performance of the proposed CSOE schemes for smart grid networks. Similar performance gains are observed for the smart grid network scenarios.

\section{Conclusion and Future Work}\label{Sec7}
In this paper, we have developed minimax optimum estimators for clock skew and offset estimation in PTP. The minimax optimum estimators exhibit lower mean square estimation error performance than the ML-based estimation schemes for a variety of network scenarios. Further, the proposed estimators can be easily extended to other timing protocols based on the two-way message exchange including TPSN\cite{TPSN}, tiny-sync \cite{MiniSync}, and LTS \cite{LTS}. Throughout this paper, we assumed a known affine relationship between the fixed path delays. The presence of an unknown asymmetry could degrade the performance of a CSOE scheme. Future work can look into developing robust clock skew and offset estimation schemes when there is an unknown asymmetry between the fixed path delays.


%



\appendix[Proof of Proposition \ref{Minimax_optimum_Smodel_estimator_phase_freq}]\label{App_sec3}
\noindent
\begin{proof}
We first calculate the right invariant prior for $\bar{\mathcal{G}}_{SModel}$, defined in (\ref{S_model_param_group}). This is necessary for deriving the optimum invariant estimator under $\mathcal{G}_{SModel}$ defined in (\ref{S_model_group}). We follow the steps given in Example 17, Section 6.6 of \cite{Berger}\footnote{Example 17 derives the right invariant prior for the location-scale transformation group.} to calculate the right invariant prior for $\bar{\mathcal{G}}_{SModel}$ in (\ref{S_model_param_group}). The transformation $\bar{g}_{a, b, c} \in \bar{\mathcal{G}}_{SModel}$ from (\ref{S_model_param_group}) can be considered as a point $(a, b, c) \in \mathbb{R}^3$, so we can represent $\bar{\mathcal{G}}_{SModel}$ equivalently as 
\begin{eqnarray}\label{S_model_param_group2}
\bar{\mathcal{G}}_{SModel} & = & \{(a, b, c) : a \in \mathbb{R}^+, b \in \mathbb{R}, c \in \mathbb{R}  \}.
\end{eqnarray}
Let $\bar{g} = \bar{g}_{a, b, c} \in \bar{\mathcal{G}}_{SModel}$ from (\ref{S_model_param_group2}) and $\bar{g}_0 = \bar{g}_{a_0, b_0, c_0} \in \bar{\mathcal{G}}_{SModel}$ from (\ref{S_model_param_group2}). In the new notation, the group transformation operation $\bar{g}\rightarrow \bar{g}\bar{g}_0$ can be written as
\begin{eqnarray}
(a, b, c) \rightarrow (a, b, c)(a_0, b_0, c_0) = (aa_0, (b_0 + b/a_0), (ac_0 + c)).
\end{eqnarray}
The function
\begin{eqnarray}
t((a, b, c)) & = & (t_1, t_2, t_3) = (aa_0, (b_0 + b/a_0), (ac_0 + c)),
\end{eqnarray}
has the differential given by
\begin{eqnarray}
\bm{H}^r_{\bar{g}_0}(\bar{g}) & = & \begin{bmatrix}
\frac{\partial t_1}{\partial a} & \frac{\partial t_1}{\partial b}  & \frac{\partial t_1}{\partial c}\\
\frac{\partial t_2}{\partial a} & \frac{\partial t_2}{\partial b}  & \frac{\partial t_2}{\partial c}\\
\frac{\partial t_3}{\partial a} & \frac{\partial t_3}{\partial b}  & \frac{\partial t_3}{\partial c}\\
\end{bmatrix} = \begin{bmatrix}
a_0 & 0 & 0 \\
0 & 1/a_0 & 0 \\
c_0 & 0 & 1
\end{bmatrix}.
\end{eqnarray}
The Jacobian of the transformation $\bar{g}\rightarrow \bar{g}\bar{g}_0$ is given by (see Definition 8, Section 6.6, \cite{Berger})
\begin{eqnarray}\label{Jacobian_Smodel}
J^r_{\bar{g}_0}(\bar{g}) & = & |\det\bm{H}^r_{\bar{g}_0}(\bar{g})| = 1.
\end{eqnarray}
Using (\ref{Jacobian_Smodel}) and {\bf Result 1} from Section 6.6 in \cite{Berger}, the right invariant prior density on $\bm{\Theta}$ is given by
\begin{eqnarray}\label{right_invariant_prior_Smodel}
\pi^r(\bm{\theta}) & = &  \mathcal{I}_{\mathbb{R}^+}(\phi) \mathcal{I}_{\mathbb{R}}(d) \mathcal{I}_{\mathbb{R}}(\delta).
\end{eqnarray}

The optimum invariant estimators of $\delta$ under $\mathcal{G}_{SModel}$ from (\ref{S_model_group}), denoted by $\hat{\delta}_{MinRisk}$, can now be obtained by solving
\begin{eqnarray}\label{Bayes_risk_Smodel}
\hat{\delta}_{MinRisk}(\bm{y})  =  \underbrace{\arg \min}_{\hat{\delta}} \int_{\bm{\Theta}} \frac{(\hat{\delta}(\bm{y}) - \delta)^2}{\phi^2}  \pi^r(\bm{\theta}|\bm{y}) d\bm{\theta},
\end{eqnarray}
where $\pi^r(\bm{\theta}|\bm{y}) =  \frac{f(\bm{y}|\bm{\theta})\pi^r(\bm{\theta})}{\int_{\bm{\Theta}}f(\bm{y}|\bm{\theta})\pi^r(\bm{\theta}) d\bm{\theta}}$, $\pi^r$ is the right invariant prior defined in (\ref{right_invariant_prior_Smodel}) and $f(\bm{y}|\bm{\theta})$ is defined in (\ref{S_model_pdf}). To find $\hat{\delta}_{MinRisk}$, we differentiate the objective function in (\ref{Bayes_risk_Smodel}) with respect to $\hat{\delta}(\bm{y})$ and set the result equal to zero. We have
\begin{eqnarray}
\hat{\delta}_{MinRisk}(\bm{y}) & = & \frac{\int_{\mathbb{R}^+}\int_{\mathbb{R}^2} \frac{\delta}{\phi^2} \pi^r(\bm{\theta}|\bm{y}) d\bm{\theta} }{\int_{\mathbb{R}^+}\int_{\mathbb{R}^2} \frac{1}{\phi^2} \pi^r(\bm{\theta}|\bm{y}) d\bm{\theta}} = \frac{\int_{\mathbb{R}^+}\int_{\mathbb{R}^2} \frac{\delta}{\phi^2} f(\bm{y}|\bm{\theta}) d\bm{\theta} }{\int_{\mathbb{R}^+}\int_{\mathbb{R}^2} \frac{1}{\phi^2} f(\bm{y}|\bm{\theta}) d\bm{\theta}}.
\end{eqnarray}
Similarly, the optimum invariant estimator of $\phi$ under $\mathcal{G}_{SModel}$ from (\ref{S_model_group}), denoted by $\hat{\phi}_{MinRisk}$, can be obtained by solving
\begin{eqnarray}\label{Bayes_risk_Smodel_phi}
\hat{\phi}_{MinRisk}(\bm{y}) & = & \underbrace{\arg \min}_{\hat{\phi}} \int_{\bm{\Theta}} \frac{(\hat{\phi}(\bm{y}) - \phi)^2}{\phi^2}  \pi^r(\bm{\theta}|\bm{y}) d\bm{\theta}.
\end{eqnarray}
Solving, we obtain
\begin{eqnarray}
\hat{\phi}_{MinRisk}(\bm{y})  =  \frac{\int_{\mathbb{R}^+}\int_{\mathbb{R}^2} \frac{1}{\phi} f(\bm{y}|\bm{\theta}) d(d) d\delta d\phi }{\int_{\mathbb{R}^+}\int_{\mathbb{R}^2} \frac{1}{\phi^{2}} f(\bm{y}|\bm{\theta}) d(d) d\delta d\phi}.
\end{eqnarray}
\noindent
{\bf Minimaxity of optimum invariant estimators in $\bm{\Theta}$: \\}
It frequently turns out that the optimum invariant estimators are minimax \cite{Berger} (see Part III of Section 5.3.2, page 353). 
Consider a sequence of prior distributions, $\pi_k$ for $\bm{\theta}$, defined on $\bm{\Theta}$ as follows
\begin{eqnarray}
\pi_k(\bm{\theta}) & = & \frac{\mathcal{I}_{(0, k)}(\phi) \mathcal{I}_{(-k, k)}(d) \mathcal{I}_{(-k, k)}(\delta)}{N_k},
\end{eqnarray}
for $k = 1, 2, \cdots, $ and $N_k = \int_{\bm{\Theta}} \mathcal{I}_{(0, k)}(\phi) \mathcal{I}_{(-k, k)}(d) \mathcal{I}_{(-k, k)}(\delta) d\bm{\theta}$. The support of $\pi_k$ is given by
\begin{eqnarray}
\bm{\Theta}_k & = & \{ (\phi, d, \delta) : \phi \in (0, k), d \in (-k, k), \delta \in (-k, k) \}.
\end{eqnarray}
The optimal Bayes estimator of $\delta$, denoted by $\hat{\delta}_{\pi_k}$, for $\pi_k$ and the loss function given in (\ref{loss_fn1}) is obtained by 
\begin{eqnarray}\label{Bayesian_finite}
\hat{\delta}_{\pi_k} & = & \underbrace{\arg \min}_{\hat{\delta}} \mathcal{B} (\hat{\delta}, \pi_k) = \underbrace{\arg \min}_{\hat{\delta}} \int_{\bm{\Theta}} \frac{(\hat{\delta}(\bm{y}) - \delta)^2}{\phi^2}  \frac{f(\bm{y}|\bm{\theta})\pi_k(\bm{\theta})d\bm{\theta}}{\int_{\bm{\Theta}}f(\bm{y}|\bm{\theta})\pi_k(\bm{\theta}) d\bm{\theta}} ,
\end{eqnarray}
Solving (\ref{Bayesian_finite}), we obtain
\begin{eqnarray}
\hat{\delta}_{\pi_k}(\bm{y}) & = & \frac{\int_{\bm{\Theta}_k} \frac{\delta}{\phi^2} f(\bm{y}|\bm{\theta}) d\bm{\theta} }{\int_{\bm{\Theta}_k} \frac{1}{\phi^2} f(\bm{y}|\bm{\theta}) d\bm{\theta}}.
\end{eqnarray}
As $k \rightarrow \infty$, we see that $\bm{\Theta}_k \rightarrow \bm{\Theta}$, $\hat{\delta}_{\pi_k} \rightarrow \hat{\delta}_{MinRisk}$, and
\begin{eqnarray}
\mathcal{B}(\hat{\delta}_{\pi_k}, \pi_k) & \rightarrow & \mathcal{B}(\hat{\delta}_{MinRisk}, \pi_k) = \mathcal{M}(\hat{\delta}_{MinRisk}),
\end{eqnarray}
since $\hat{\delta}_{MinRisk}$ is an invariant estimator of $\delta$ (see (\ref{Invariant_risk}) in Section \ref{Sec3}).

Let $\hat{\delta}_{r}$ denote an estimator of $\delta$. For the loss function given in (\ref{loss_fn1}), we have
\begin{eqnarray}
\mathcal{M}(\hat{\delta}_r) & \ge & \mathcal{B}(\hat{\delta}_r, \pi_k) \ge \mathcal{B}(\hat{\delta}_{\pi_k}, \pi_k),
\end{eqnarray}
since the optimal Bayes estimator for a prior $\pi_k$ achieves the lowest average risk. Let $k \rightarrow \infty$, we have
\begin{eqnarray}
\mathcal{M}(\hat{\delta}_r) & \ge & \lim_{k \rightarrow \infty} \mathcal{B}(\hat{\delta}_{\pi_k}, \pi_k) = \mathcal{M}(\hat{\delta}_{MinRisk}).
\end{eqnarray}
Hence, the maximum risk of any estimator of $\delta$ is greater than or equal to the maximum risk of $\hat{\delta}_{MinRisk}$. Hence, $\hat{\delta}_{MinRisk}$ is a minimax estimator of $\delta$ for the skew-normalized loss function defined in (\ref{loss_fn1}). Similarly, we can show that $\hat{\phi}_{MinRisk}$ is a minimax estimator of $\phi$ for the skew-normalized loss function defined in (\ref{loss_fn2}).

\noindent
{\bf Minimaxity of optimum invariant estimators in $\bm{\Theta}^*$: \\}
Marchand and Strawderman \cite{marchand2012} gave conditions on $\bar{\mathcal{G}}_{SModel}$ defined in (\ref{S_model_param_group2}), under which the optimum invariant estimator remains minimax in the restricted parameter space, $\bm{\Theta}^*$. If there exists a sequence $\bar{g}_{a_k, b_k, c_k} \in \bar{\mathcal{G}}_{SModel}$ from (\ref{S_model_param_group2}), such that 
\begin{eqnarray}
\bar{g}_{a_k, b_k, c_k}(\bm{\Theta}^*) & \subseteq & \bar{g}_{a_{k+1}, b_{k+1}, c_{k+1}}(\bm{\Theta}^*), \label{condition1} \\
\bigcup_{k} \bar{g}_{a_k, b_k, c_k}(\bm{\Theta}^*) & = & \bm{\Theta}, \label{condition2}
\end{eqnarray}
where $\bar{g}_{a_k, b_k, c_k}(\bm{\Theta}^*) = \{ \bar{g}_{a_k, b_k, c_k}(\bm{\theta}) : \bm{\theta} \in \bm{\Theta}^* \}$, then $\hat{\delta}_{MinRisk}$ and $\hat{\phi}_{MinRisk}$ remains minimax in $\bm{\Theta}^*$ for the considered loss functions (See Theorem 1 of \cite{marchand2012}). Consider the sequence of transformations from $\bar{\mathcal{G}}_{SModel}$, defined as $\bar{g}_{a_k, b_k, c_k} = \bar{g}_{1, -k, 0}$ for $k = 1, 2, \cdots$. We have
\begin{eqnarray}
\bar{g}_{a_k, b_k, c_k}(\bm{\Theta}^*) & = & \{(\phi, d, \delta): \phi \in \mathbb{R}^+, d \ge (-k/\phi), \delta \in \mathbb{R} \}, \\
\bar{g}_{a_{k+1}, b_{k+1}, c_{k+1}}(\bm{\Theta}^*) & = & \{(\phi, d, \delta): \phi \in \mathbb{R}^+, d \ge (-(k+1)/\phi), \delta \in \mathbb{R} \}.
\end{eqnarray}
For this sequence of transformations, (\ref{condition1}) and (\ref{condition2}) are satisfied. Hence, the optimum invariant estimators $\hat{\delta}_{MinRisk}$ and $\hat{\phi}_{MinRisk}$ remain minimax in $\bm{\Theta}^*$ for the skew-normalized squared error loss functions defined in (\ref{loss_fn1}) and (\ref{loss_fn2}), respectively.
\end{proof}

\begin{figure}[t]
	\centering
	\begin{subfigure}[b]{0.48\columnwidth}
		\centering
		\includegraphics[height = 2.5 in, width = 0.9\columnwidth]{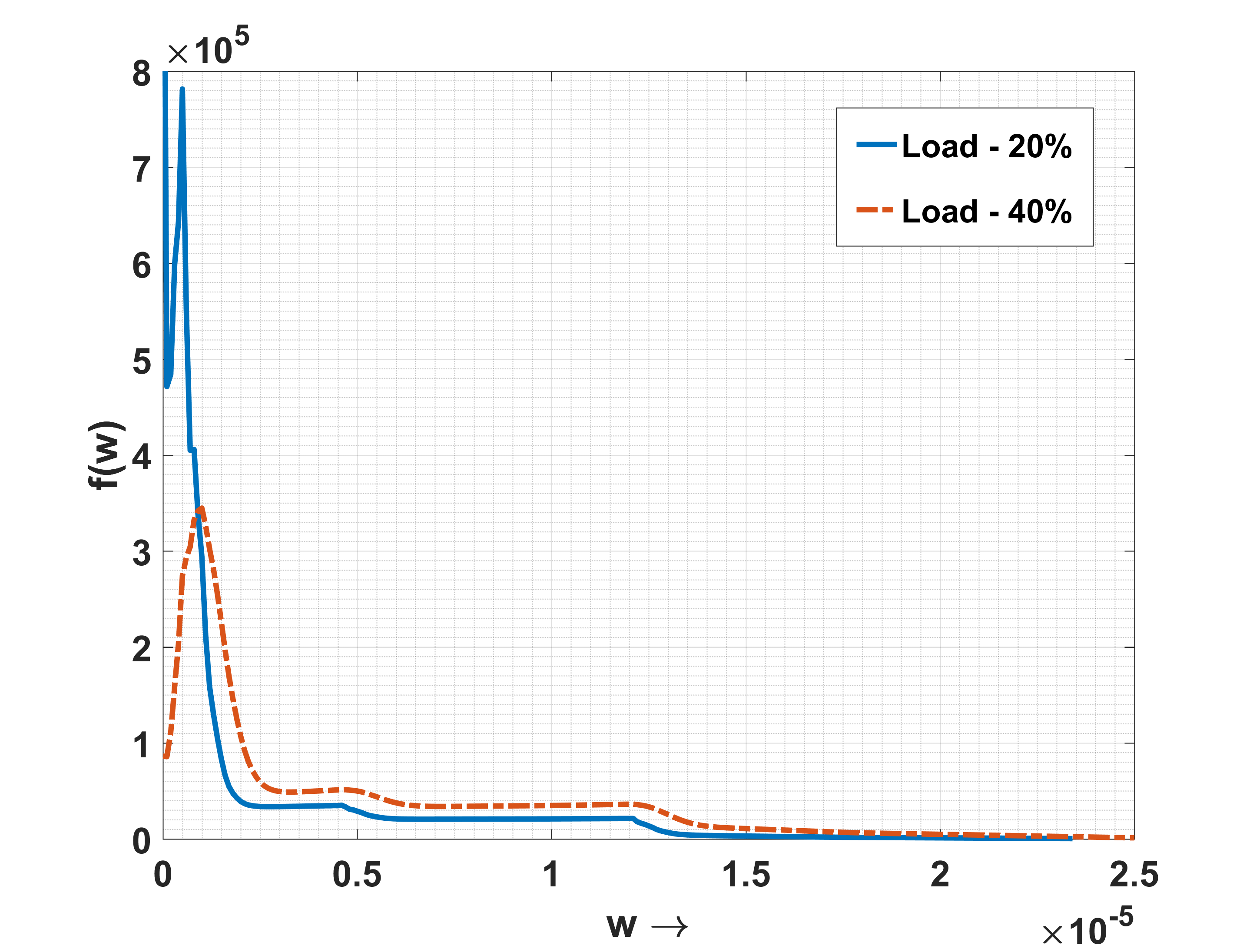}
		\caption{ }
	\end{subfigure}	
	~ 
	\begin{subfigure}[b]{0.48\columnwidth}
		\centering
		\includegraphics[height = 2.5 in, width = 0.9\columnwidth]{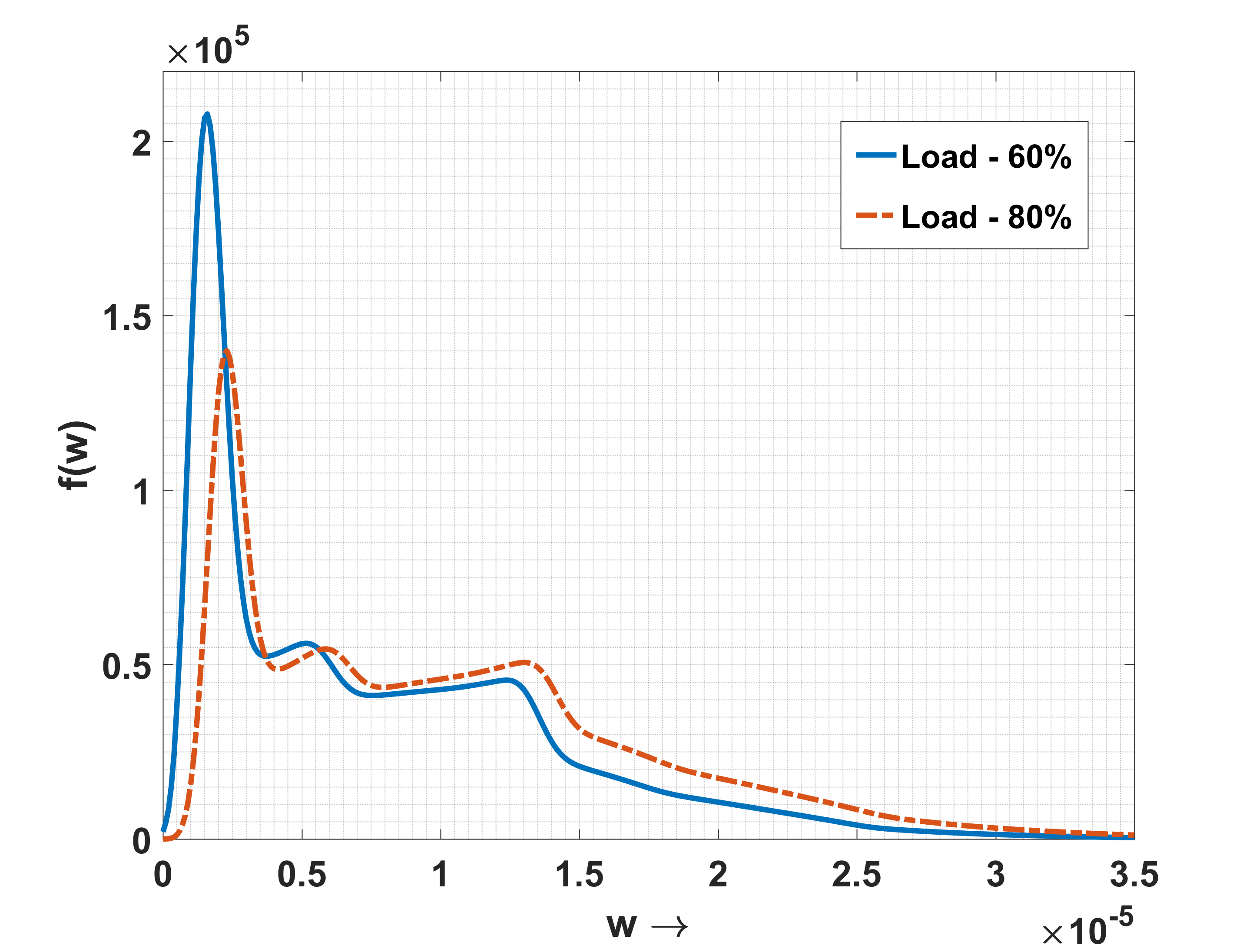}
		\caption{ }
	\end{subfigure}

	\begin{subfigure}[b]{0.48\columnwidth}
	\centering
	\includegraphics[height = 2.5 in, width = 0.9\columnwidth]{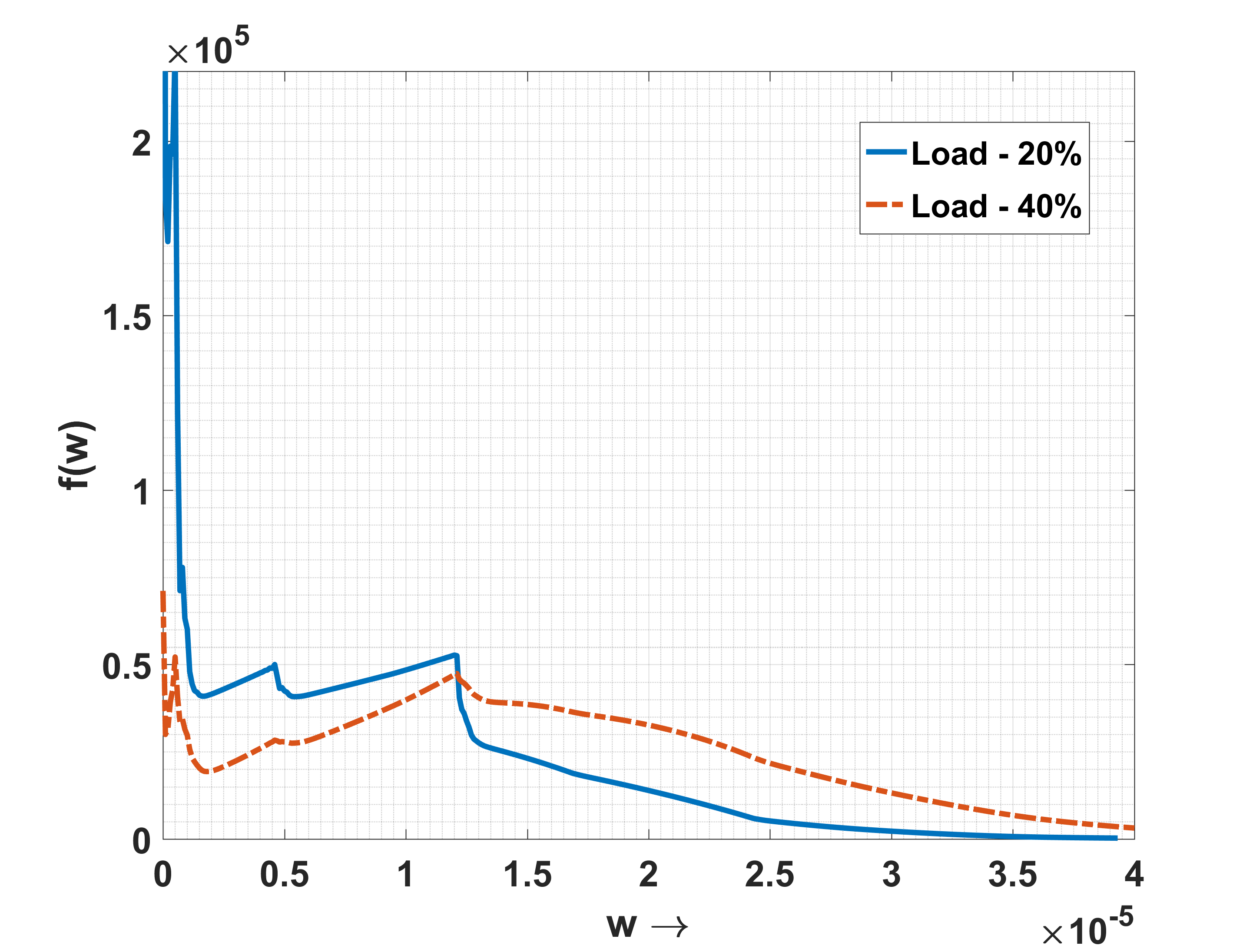}
	\caption{ }
	\end{subfigure}	
	~ 
	\begin{subfigure}[b]{0.48\columnwidth}
		\centering
		\includegraphics[height = 2.5 in, width = 0.9\columnwidth]{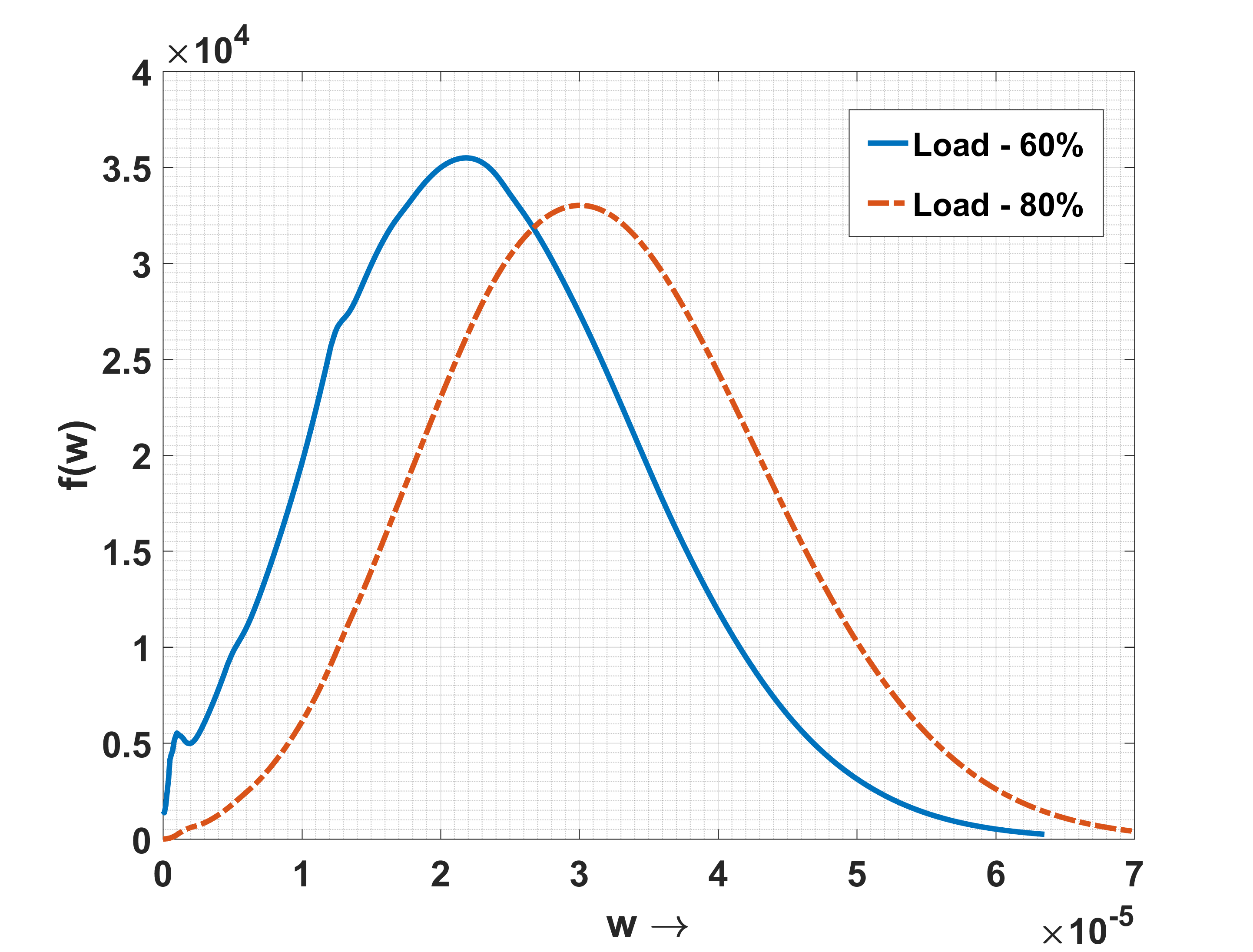}
		\caption{ }
	\end{subfigure}		
	\caption{Empirical pdf of queuing delays for $10$ switches between master and slave node for various loads, (a) TM-1 under 20\%, 40\% load, (b) TM-1 under 60\%, 80\% load,  (c) TM-2 under 20\%, 40\% load, (d) TM-2 under 60\%, 80\% load.}\label{Empirical_pdf}
\end{figure}

\begin{figure}[t]
	\centering
	\begin{subfigure}[b]{0.48\columnwidth}
		\centering
		\includegraphics[height = 2.5 in, width = 0.9\columnwidth]{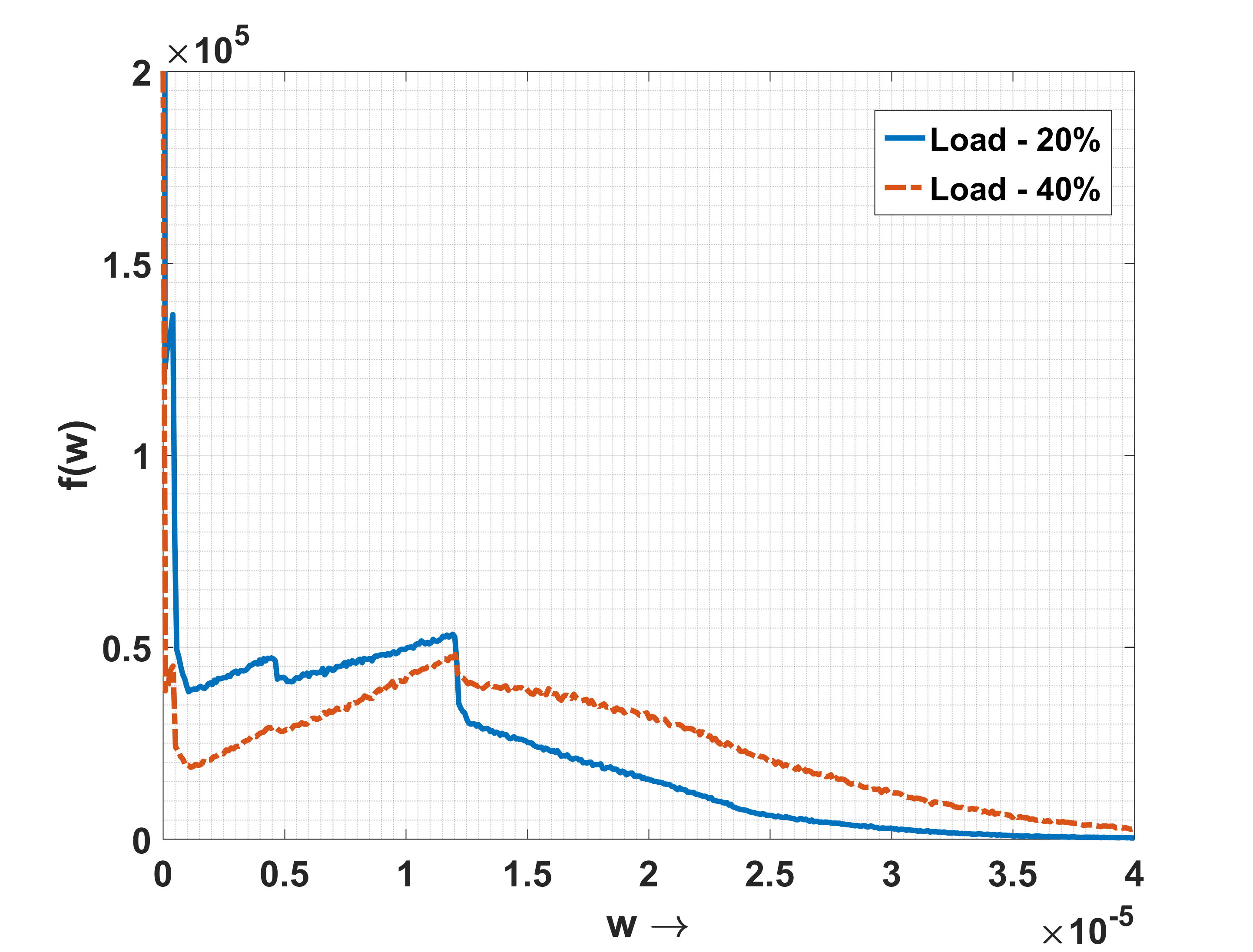}
		\caption{ }
	\end{subfigure}	
	~ 
	\begin{subfigure}[b]{0.48\columnwidth}
		\centering
		\includegraphics[height = 2.5 in, width = 0.9\columnwidth]{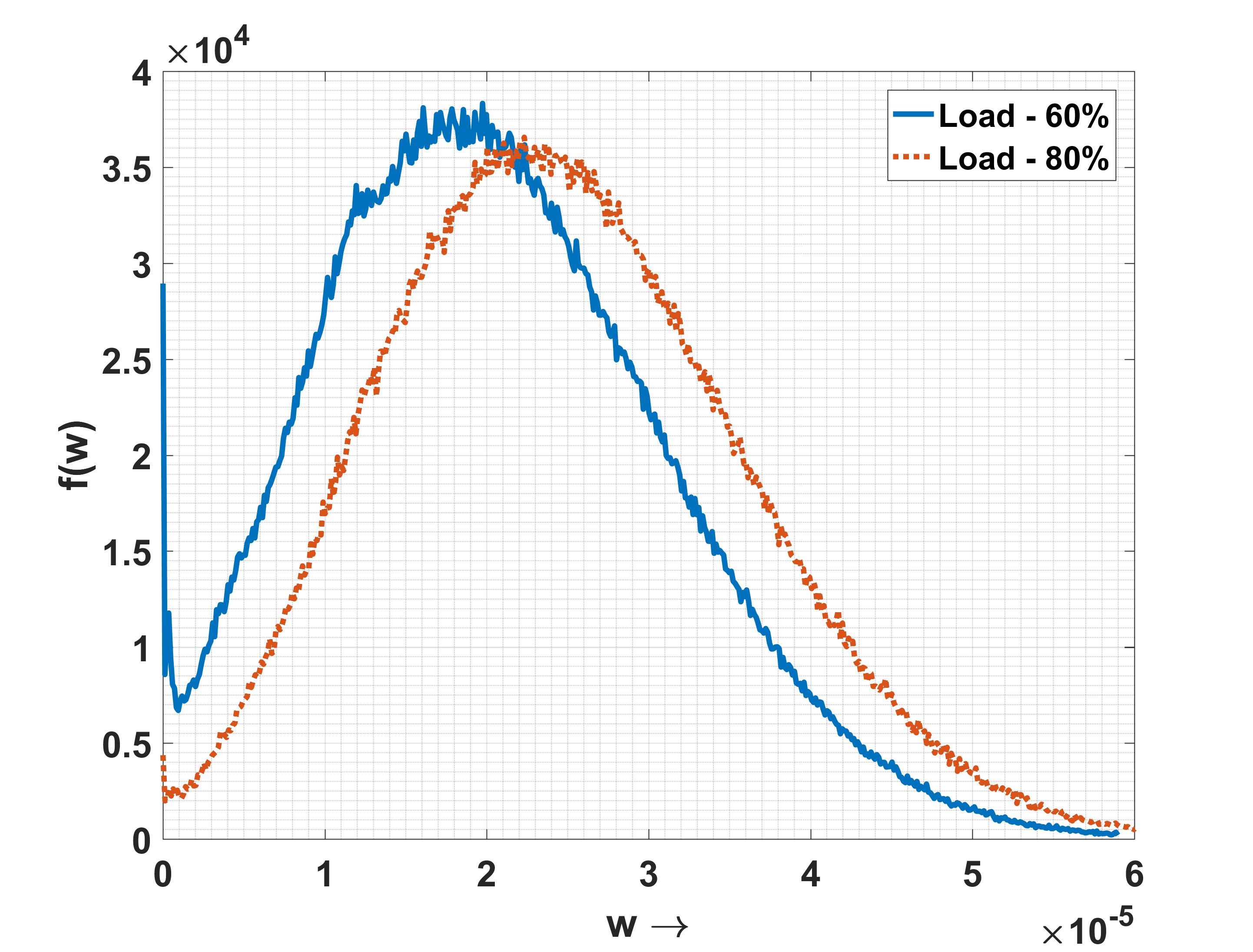}
		\caption{ }
	\end{subfigure}

	\caption{Empirical pdf of queuing delays for $10$ switches between master and slave node for various loads, (a) EG-TM1 under 20\%, 40\% load, (b) EG-TM1 under 60\%, 80\% load.}\label{Empirical_pdf_sg}
\end{figure}

\begin{figure}[t]
	\centering
	\begin{subfigure}[b]{0.48\columnwidth}
		\centering
		\includegraphics[height = 2.5 in, width = 0.9\columnwidth]{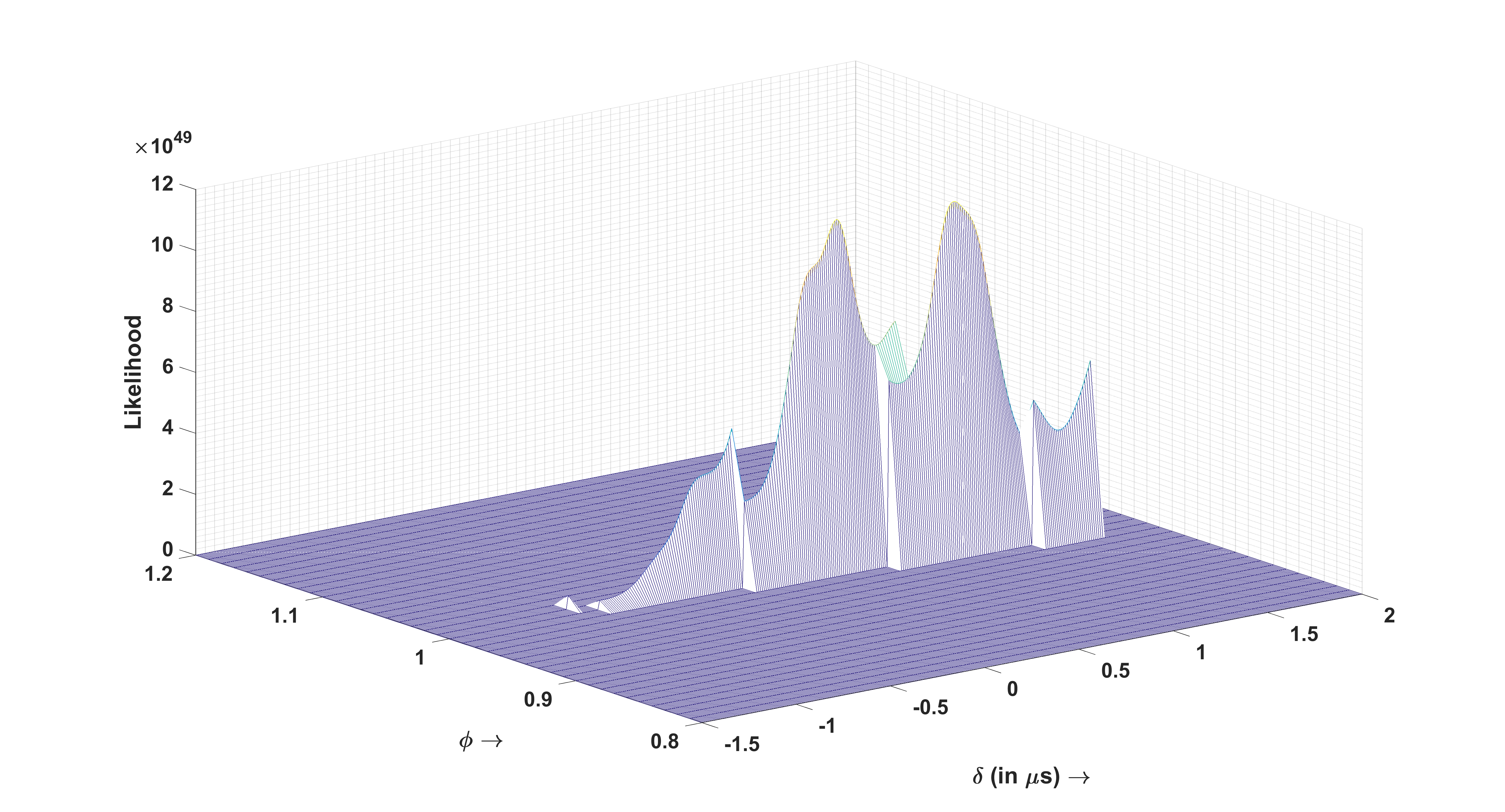}
		\caption{ }
	\end{subfigure}	
	~ 
	\begin{subfigure}[b]{0.48\columnwidth}
		\centering
		\includegraphics[height = 2.5 in, width = 0.9\columnwidth]{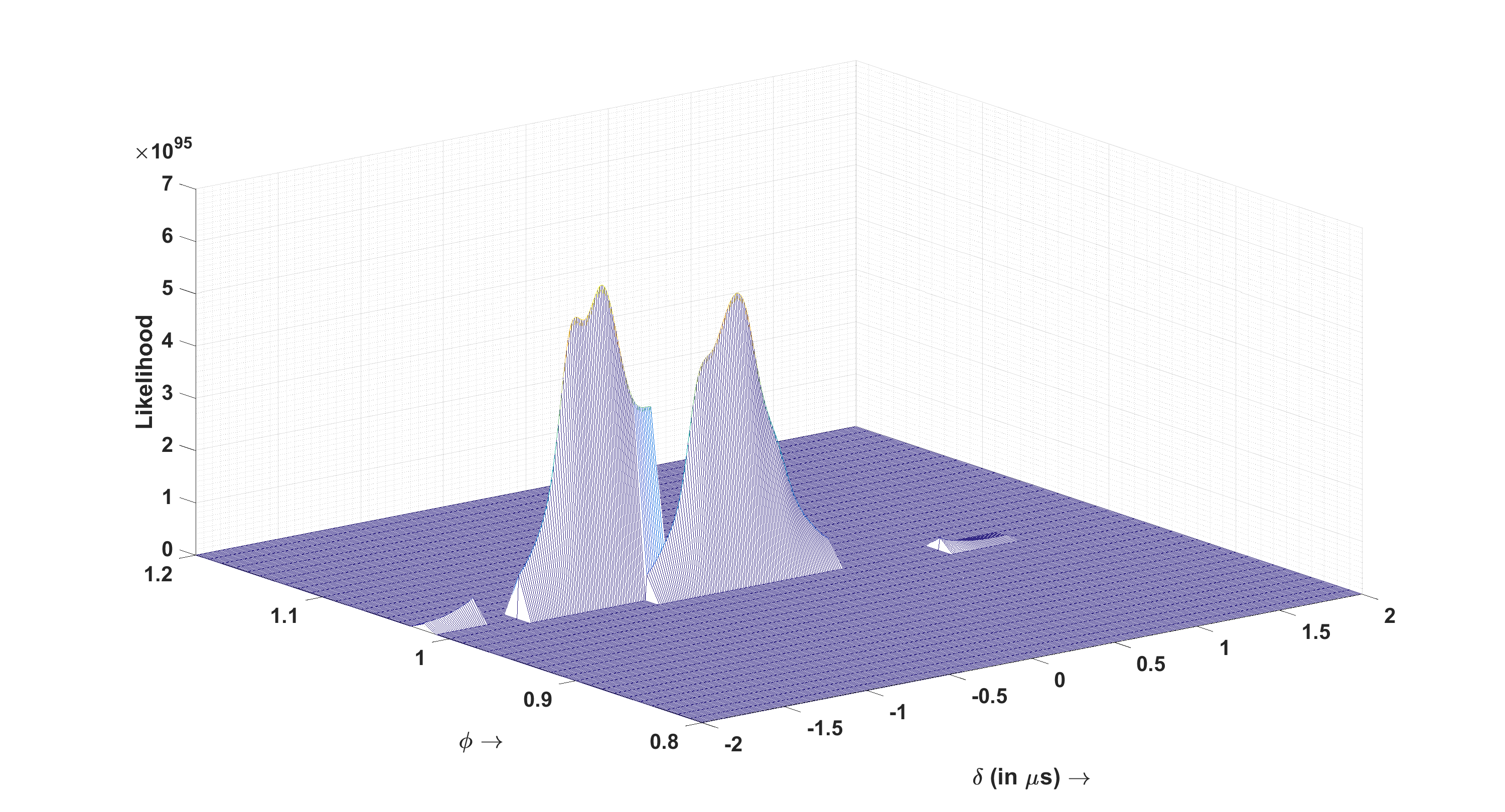}
		\caption{ }
	\end{subfigure}	
	
	\caption{Likelihood function for various values of the parameter for TM-1 under 40\% load for $\phi = 1$, $\delta = 0$ for (a) $P = 5$, (b) $P = 10$.}\label{LMLE_LLF}
\end{figure}

\begin{figure}[t]
	\centering
	\begin{subfigure}[b]{0.48\columnwidth}
		\centering
		\includegraphics[height = 2.5 in, width = \columnwidth]{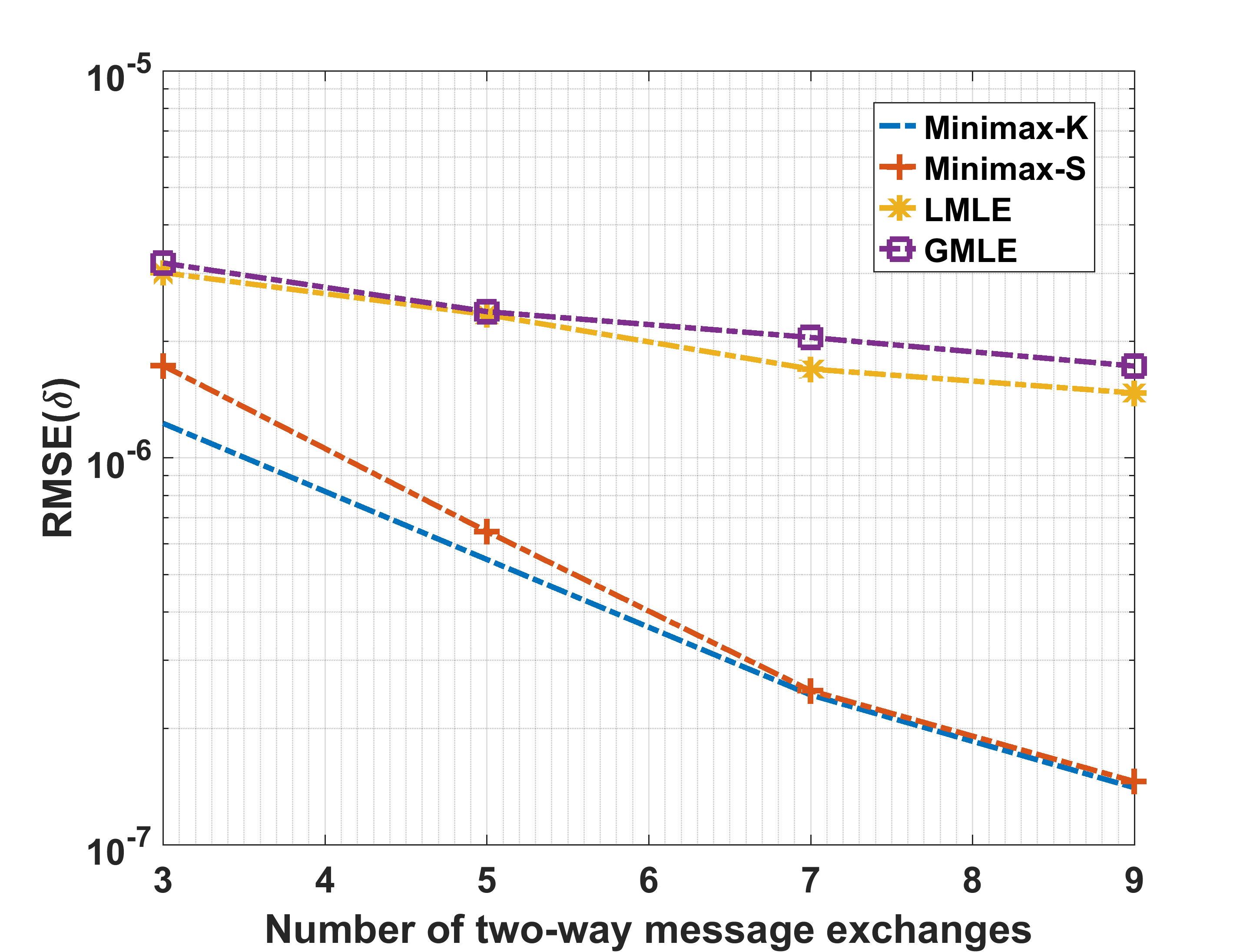}
		\caption{ }
	\end{subfigure}	
	~ 
	\begin{subfigure}[b]{0.48\columnwidth}
		\centering
		\includegraphics[height = 2.5 in, width = \columnwidth]{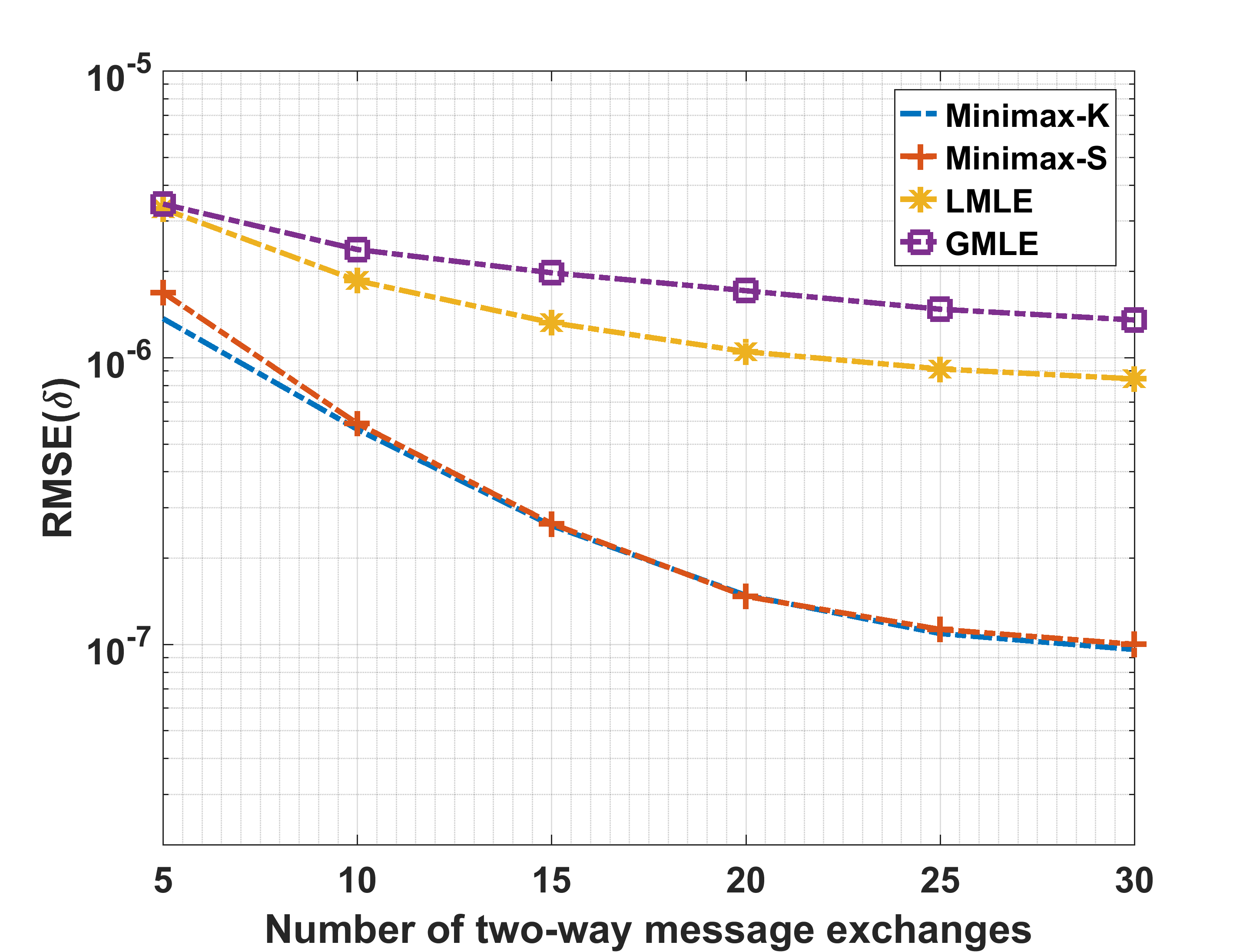}
		\caption{ }
	\end{subfigure}

	\begin{subfigure}[b]{0.48\columnwidth}
		\centering
		\includegraphics[height = 2.5 in, width = \columnwidth]{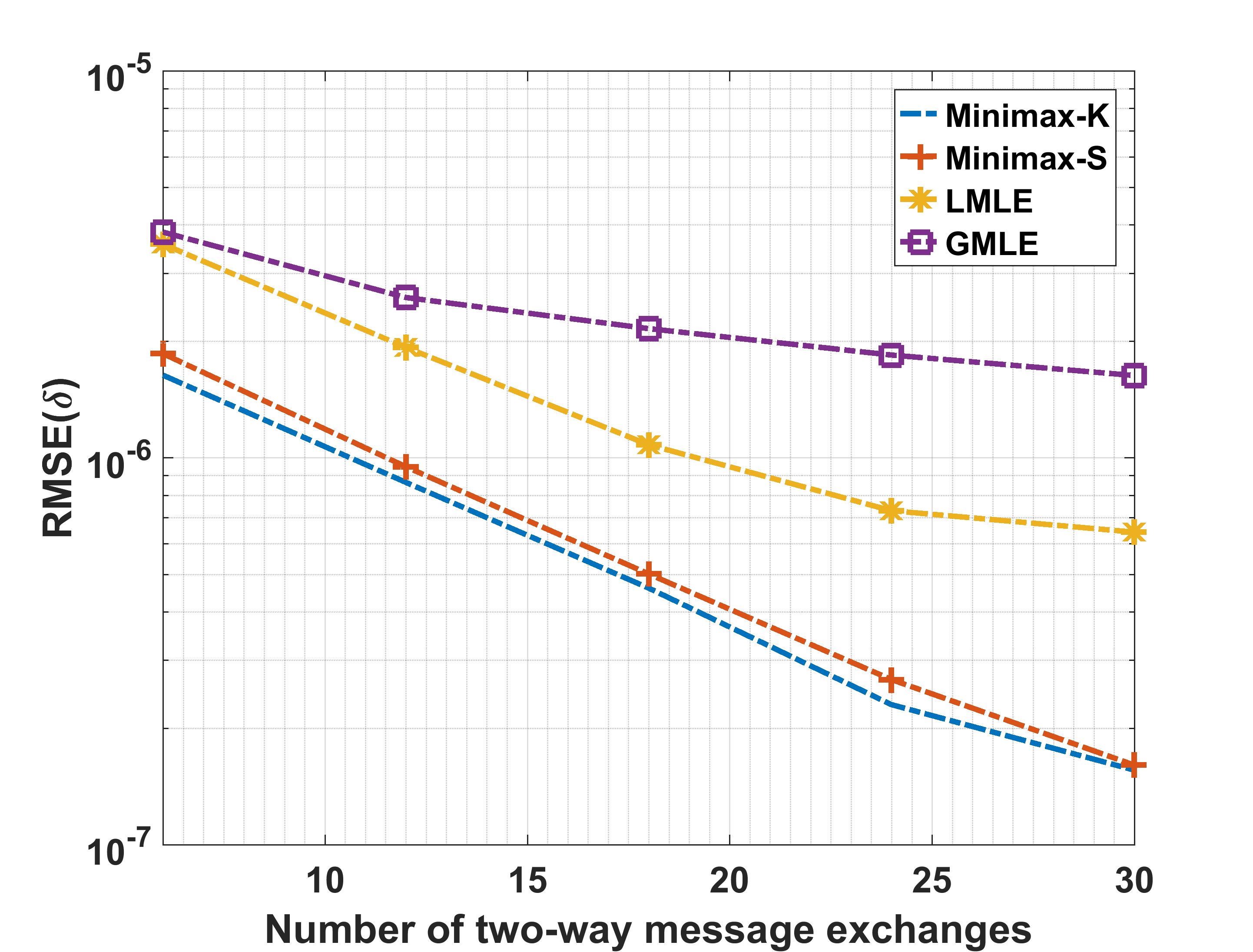}
		\caption{ }
	\end{subfigure}	
	~ 
	\begin{subfigure}[b]{0.48\columnwidth}
		\centering
		\includegraphics[height = 2.5 in, width = \columnwidth]{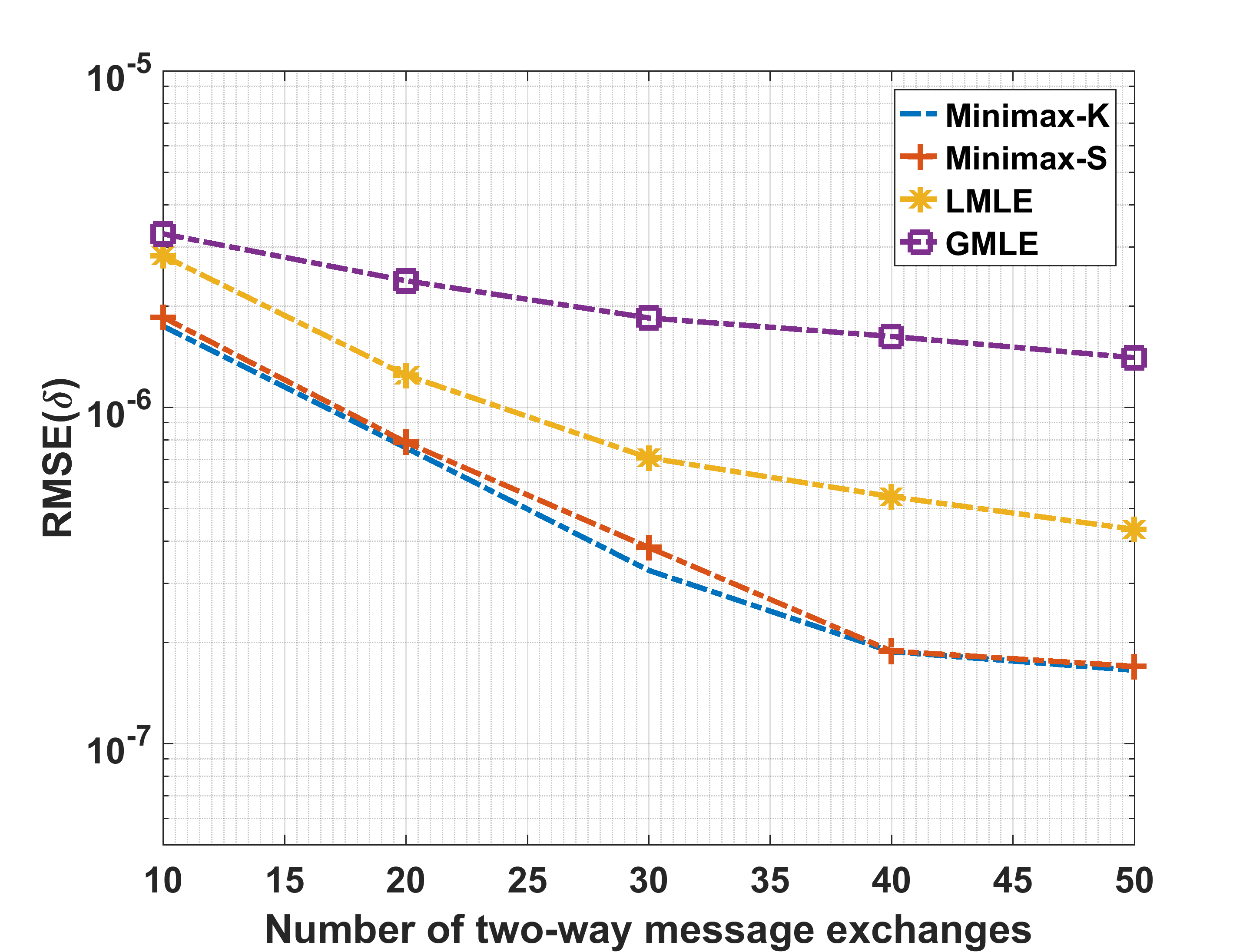}
		\caption{ }
	\end{subfigure}		
	\caption{RMSE of clock offset for various estimation schemes under TM-1 for various loads and $\{\phi, d, \delta \} = \{1, 2 \mu s, 2 \mu s\}$, (a) 20\% load, (b) 40\% load, (c) 60\% load, (d) 80\% load.}\label{rmse_TM1_offset_results}
\end{figure}

\begin{figure}[t]
	\centering
	\begin{subfigure}[b]{0.48\columnwidth}
		\centering
		\includegraphics[height = 2.5 in, width = \columnwidth]{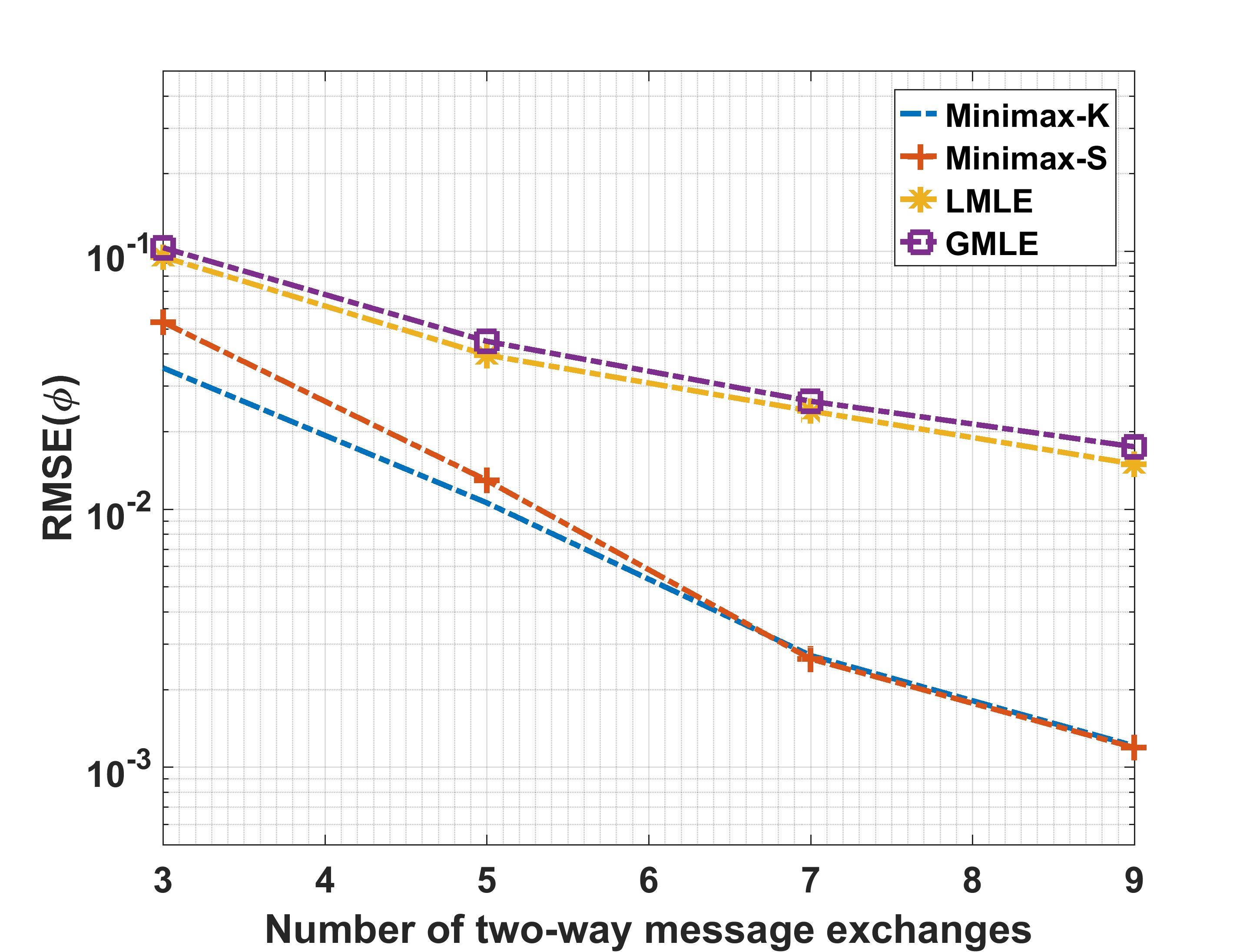}
		\caption{ }
	\end{subfigure}	
	~ 
	\begin{subfigure}[b]{0.48\columnwidth}
		\centering
		\includegraphics[height = 2.5 in, width = \columnwidth]{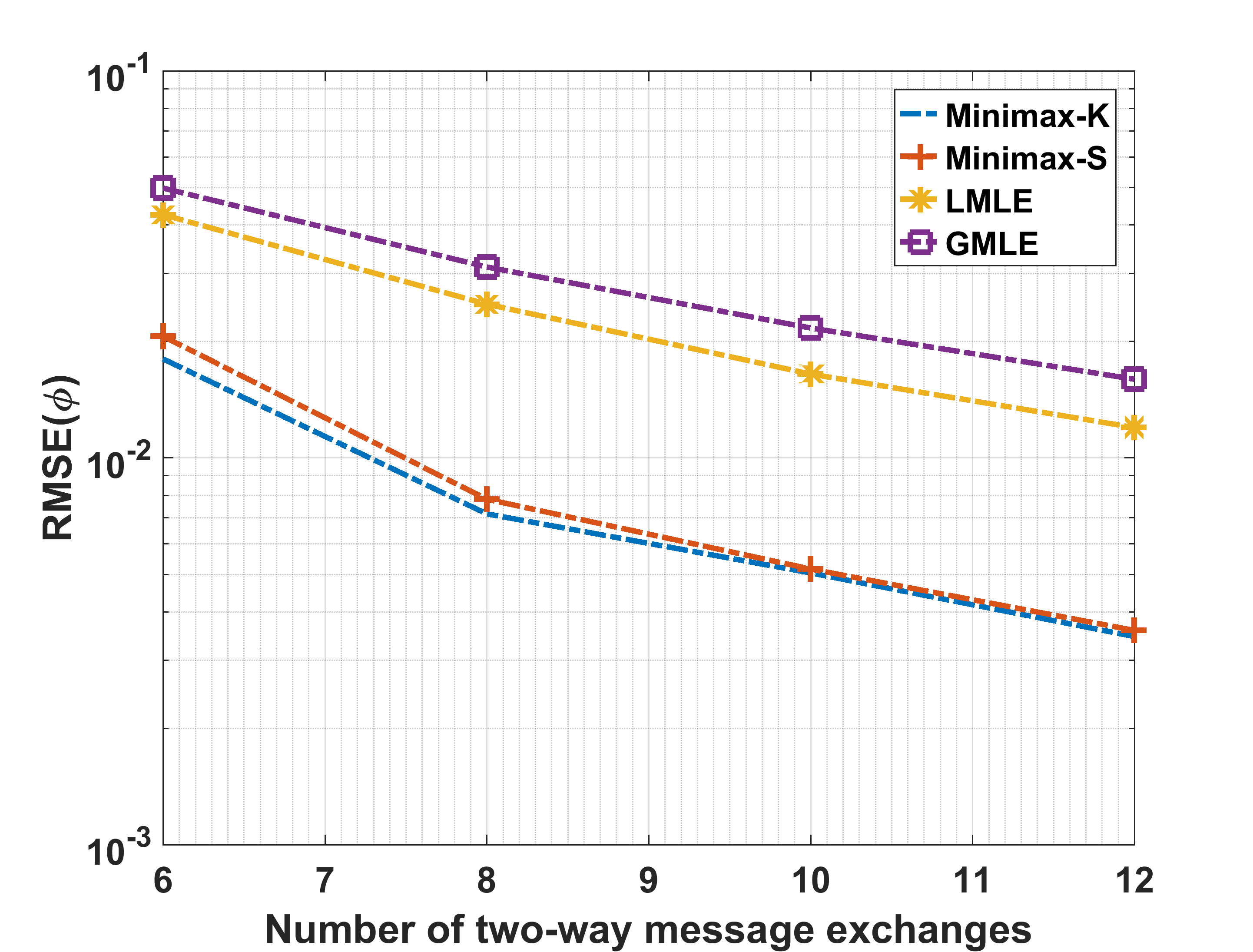}
		\caption{ }
	\end{subfigure}

	\begin{subfigure}[b]{0.48\columnwidth}
		\centering
		\includegraphics[height = 2.5 in, width = \columnwidth]{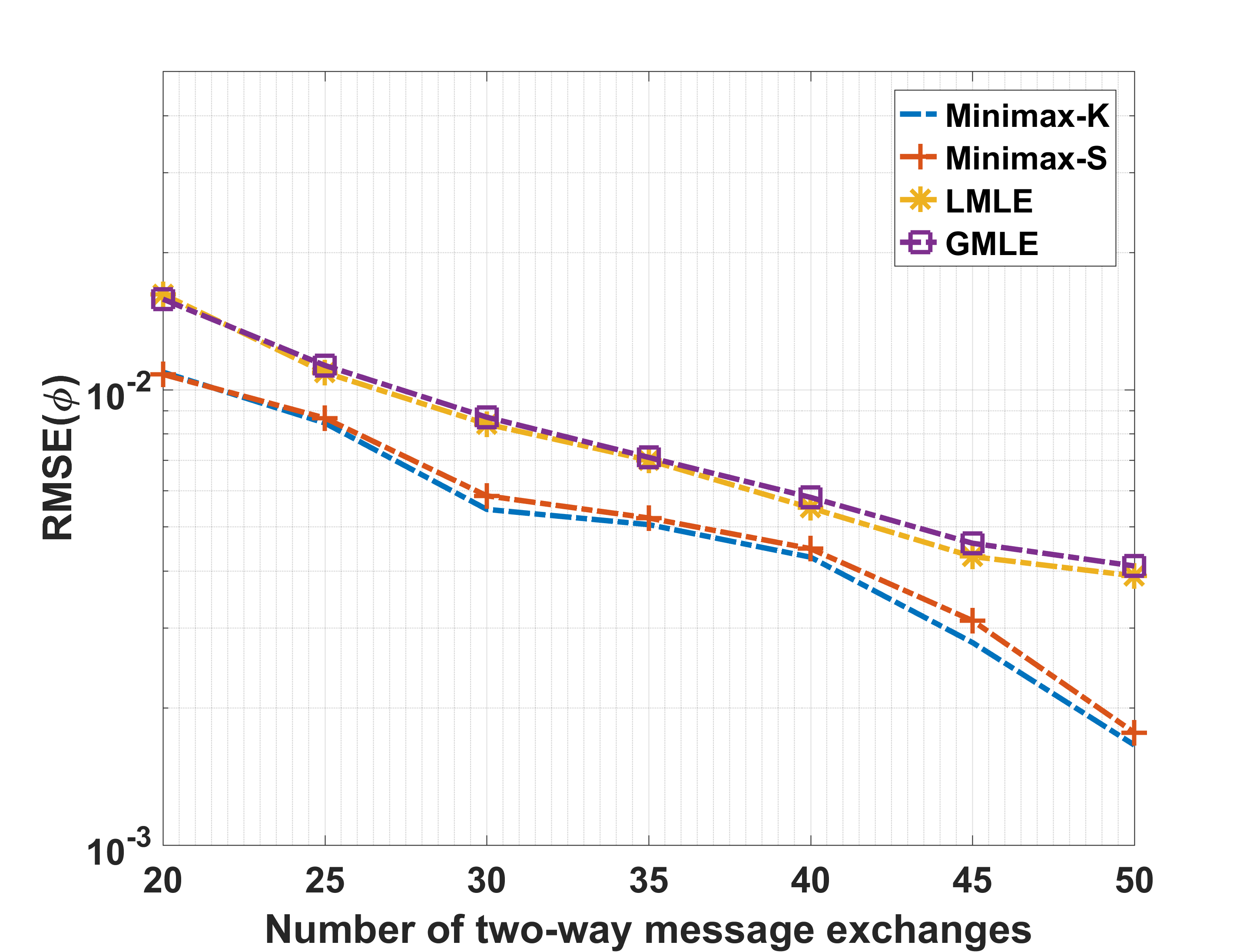}
		\caption{ }
	\end{subfigure}	
	~ 
	\begin{subfigure}[b]{0.48\columnwidth}
		\centering
		\includegraphics[height = 2.5 in, width = \columnwidth]{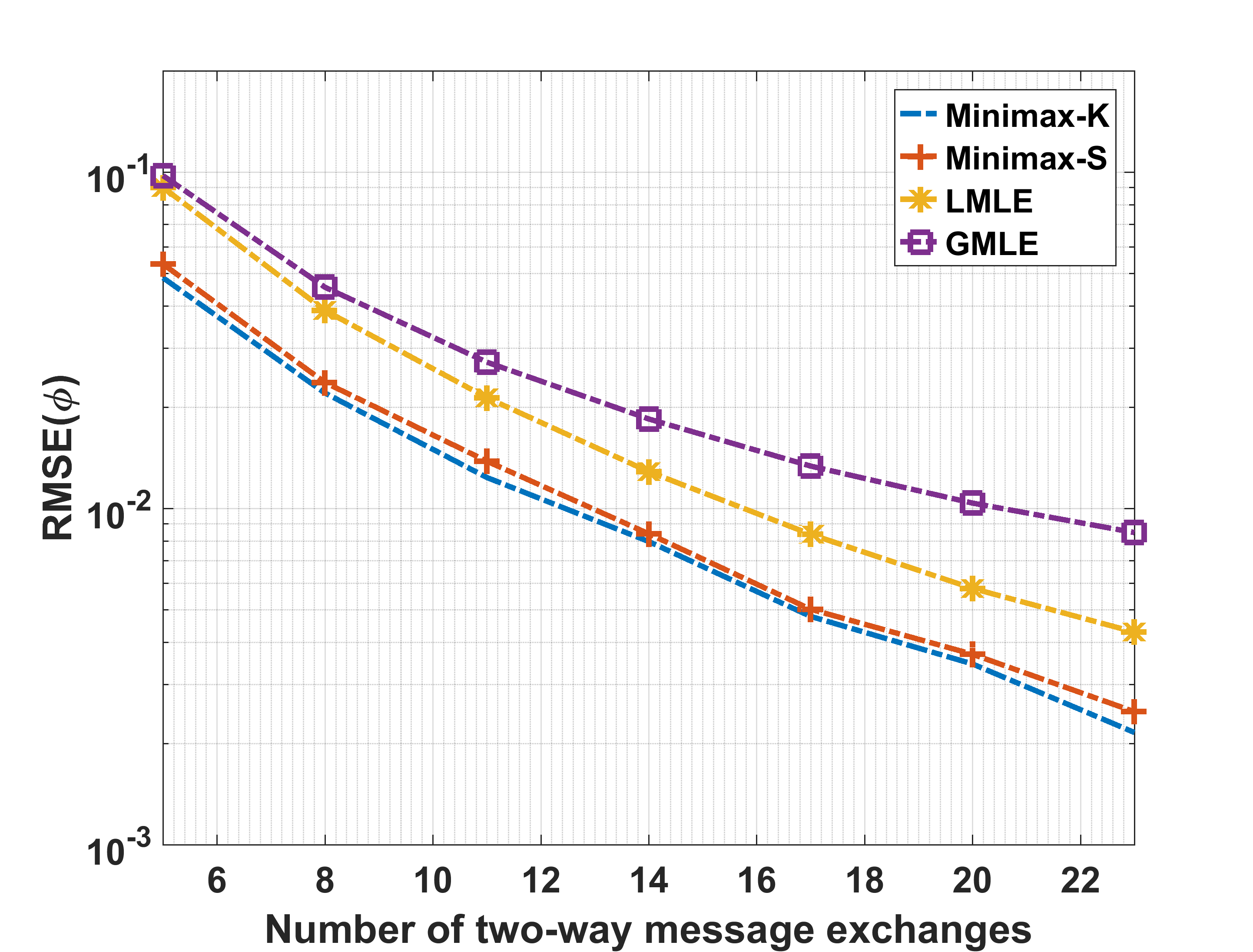}
		\caption{ }
	\end{subfigure}		
	\caption{RMSE of clock skew for various estimation schemes under TM-1 for various loads and $\{\phi, d, \delta \} = \{1, 2 \mu s, 2 \mu s\}$, (a) 20\% load, (b) 40\% load, (c) 60\% load, (d) 80\% load.}\label{rmse_TM1_skew_results}
\end{figure}

\begin{figure}[t]
	\centering
	\begin{subfigure}[b]{0.48\columnwidth}
		\centering
		\includegraphics[height = 2.5 in, width = \columnwidth]{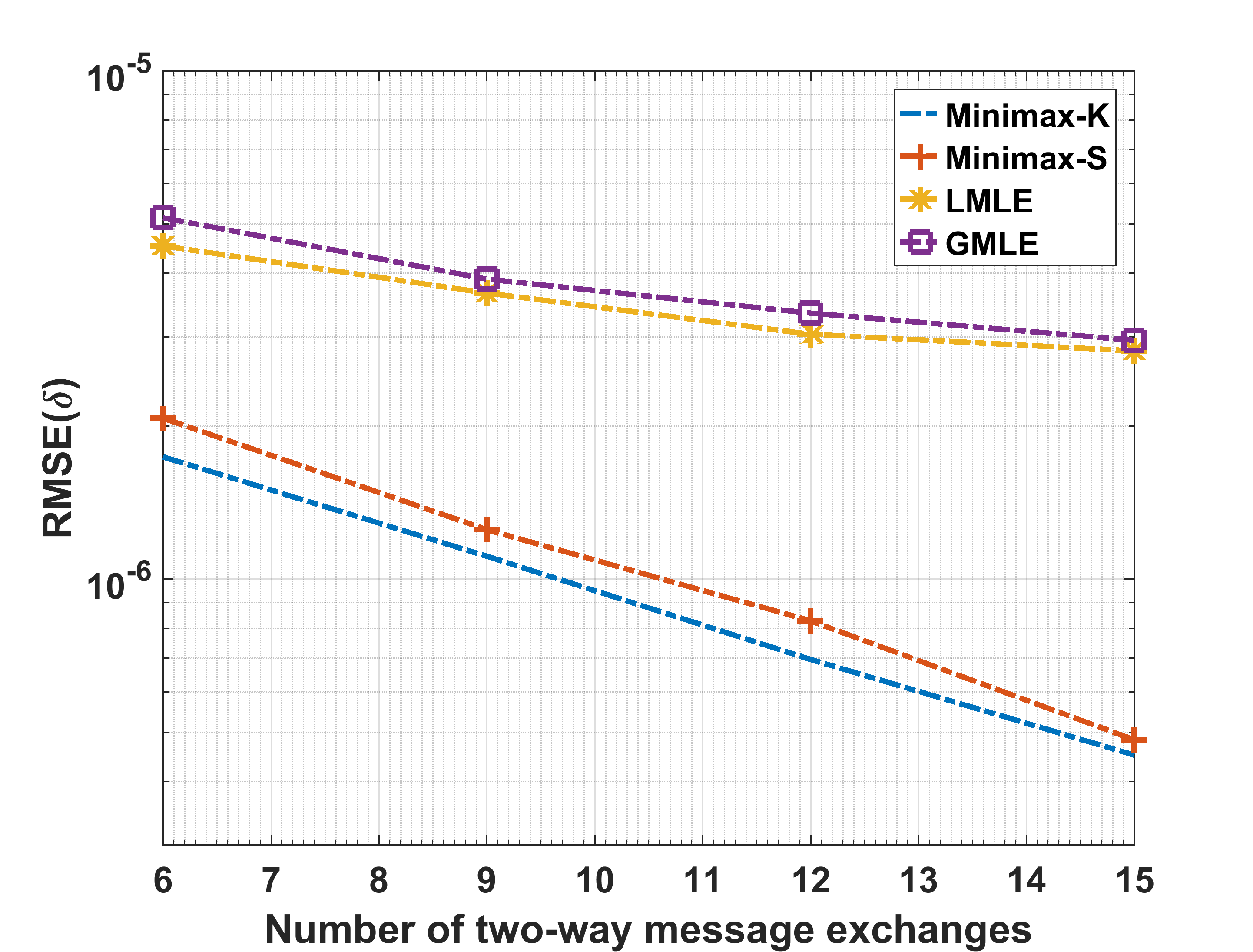}
		\caption{ }
	\end{subfigure}	
	~ 
	\begin{subfigure}[b]{0.48\columnwidth}
		\centering
		\includegraphics[height = 2.5 in, width = \columnwidth]{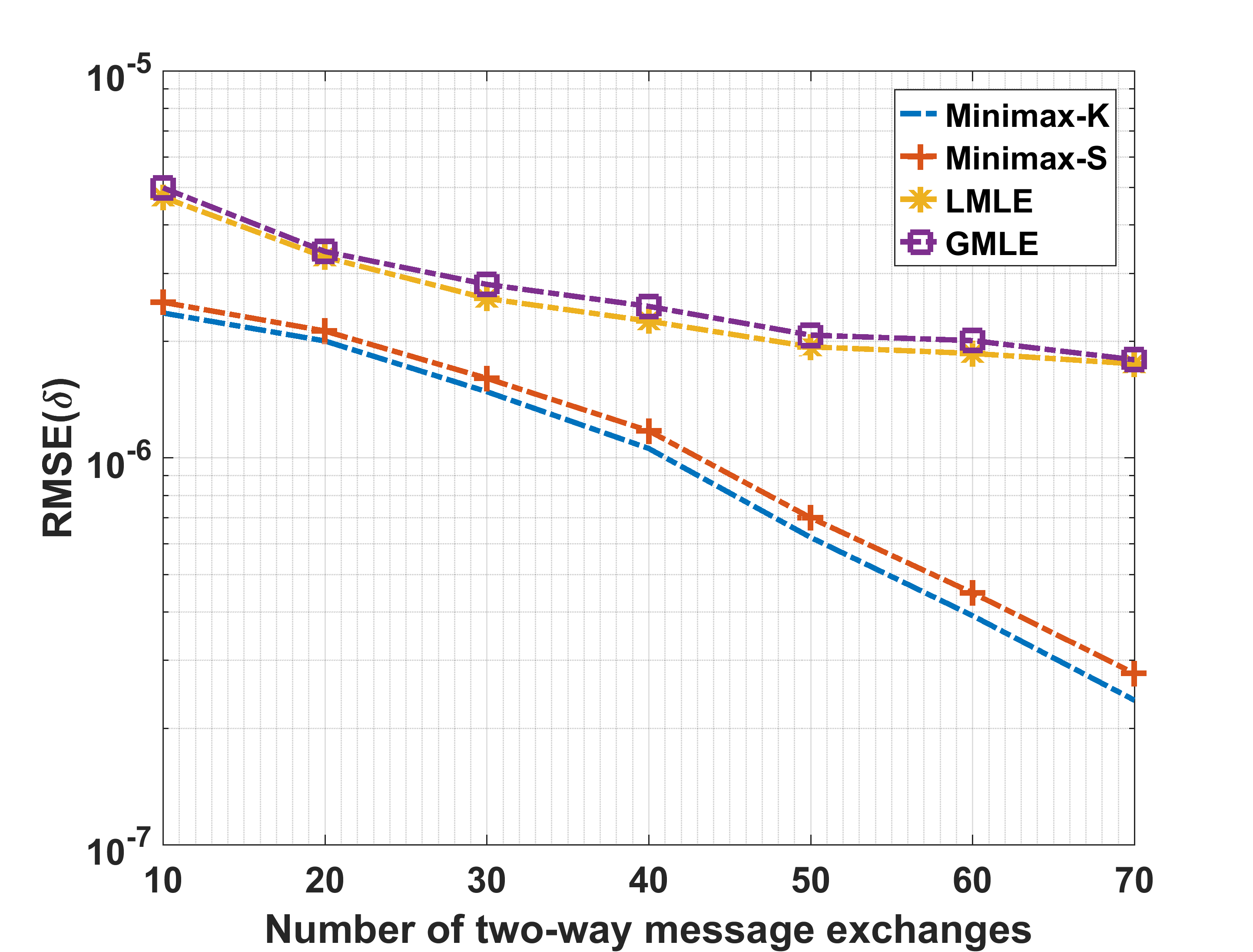}
		\caption{ }
	\end{subfigure}

	\begin{subfigure}[b]{0.48\columnwidth}
		\centering
		\includegraphics[height = 2.5 in, width = \columnwidth]{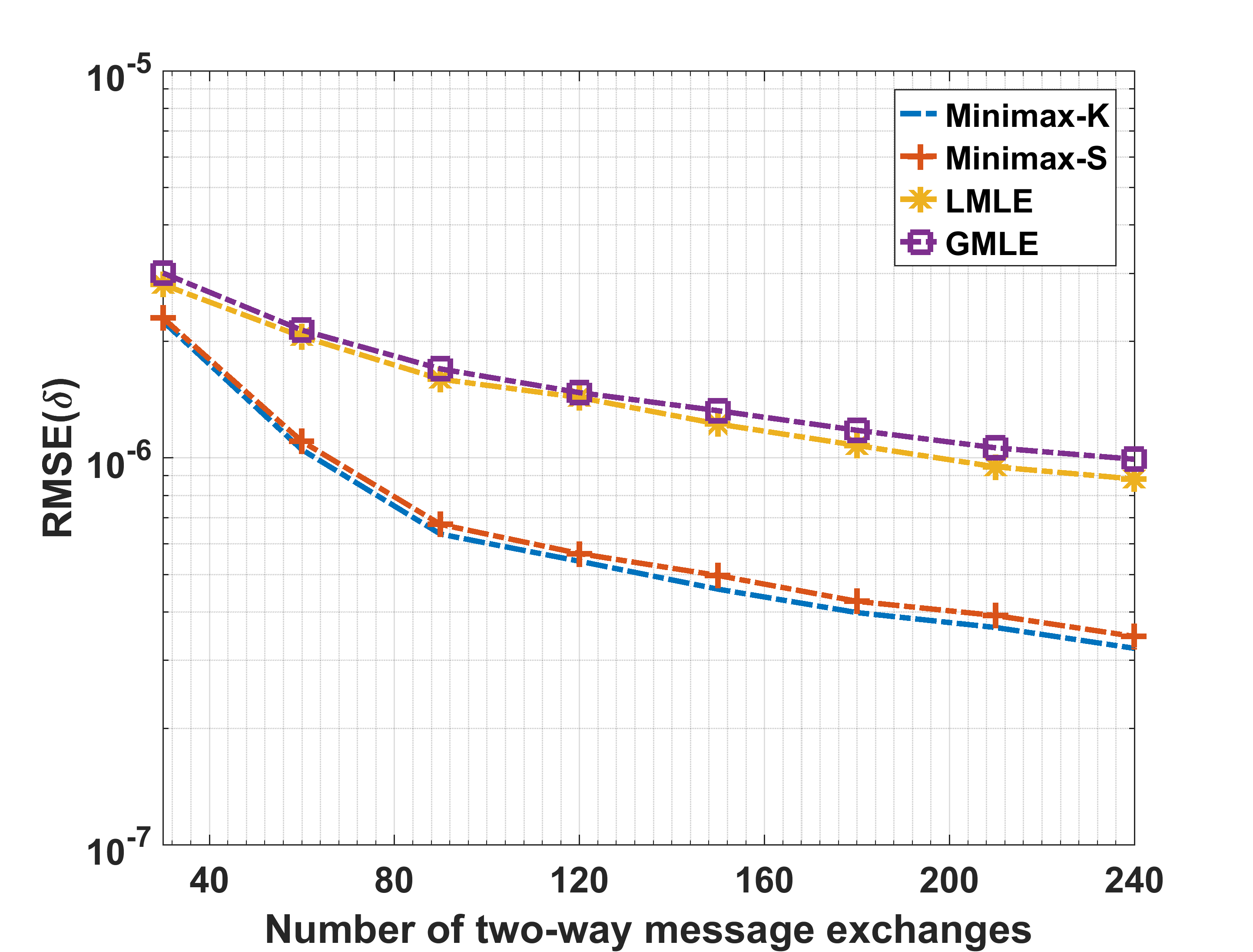}
		\caption{ }
	\end{subfigure}	
	~ 
	\begin{subfigure}[b]{0.48\columnwidth}
		\centering
		\includegraphics[height = 2.5 in, width = \columnwidth]{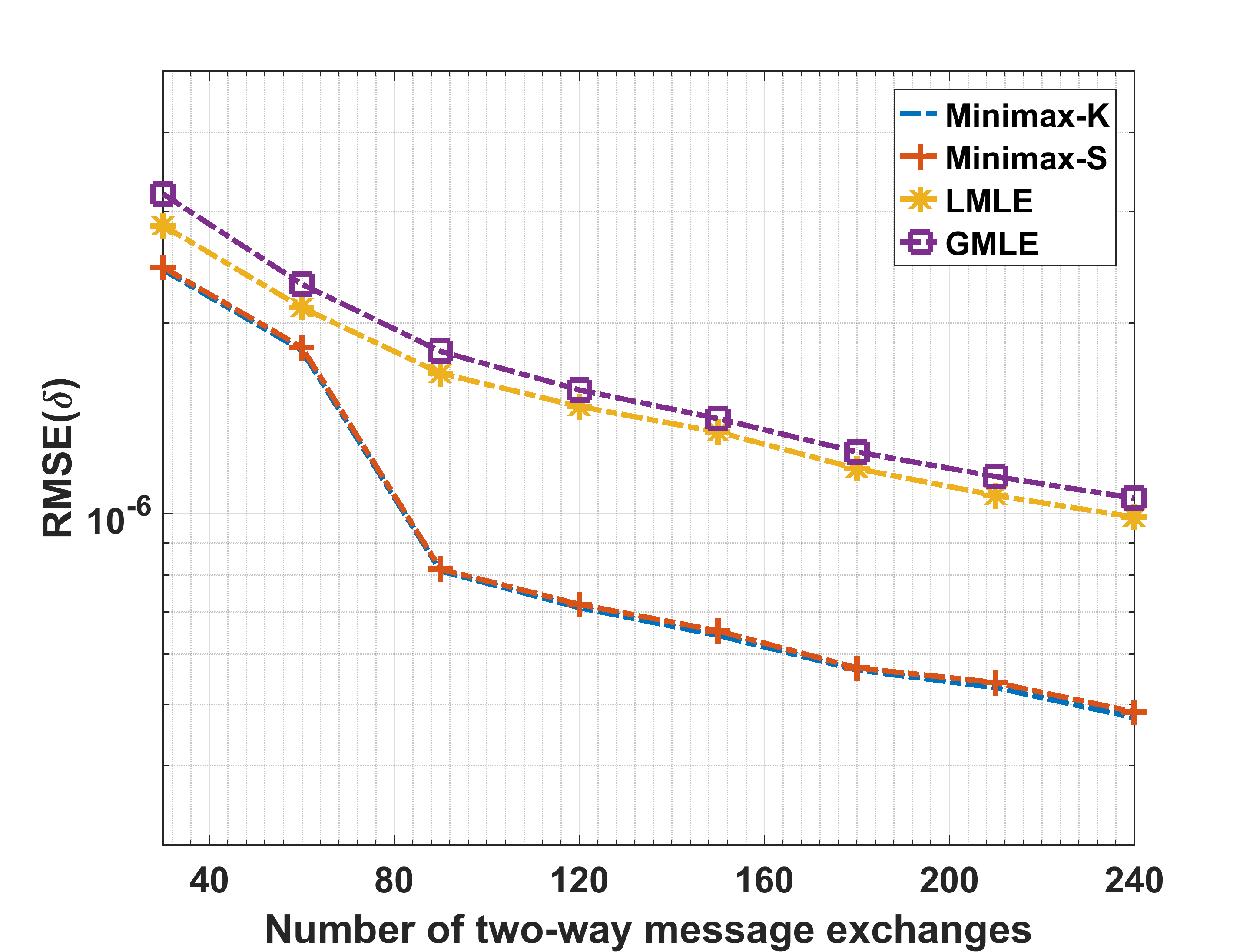}
		\caption{ }
	\end{subfigure}		
	\caption{RMSE of clock offset for various estimation schemes under TM-2 for various loads and $\{\phi, d, \delta \} = \{1, 2 \mu s, 2\mu s\}$, (a) 20\% load, (b) 40\% load, (c) 60\% load, (d) 80\% load.}\label{rmse_TM2_offset_results}
\end{figure}

\begin{figure}[t]
	\centering
	\begin{subfigure}[b]{0.48\columnwidth}
		\centering
		\includegraphics[height = 2.5 in, width = \columnwidth]{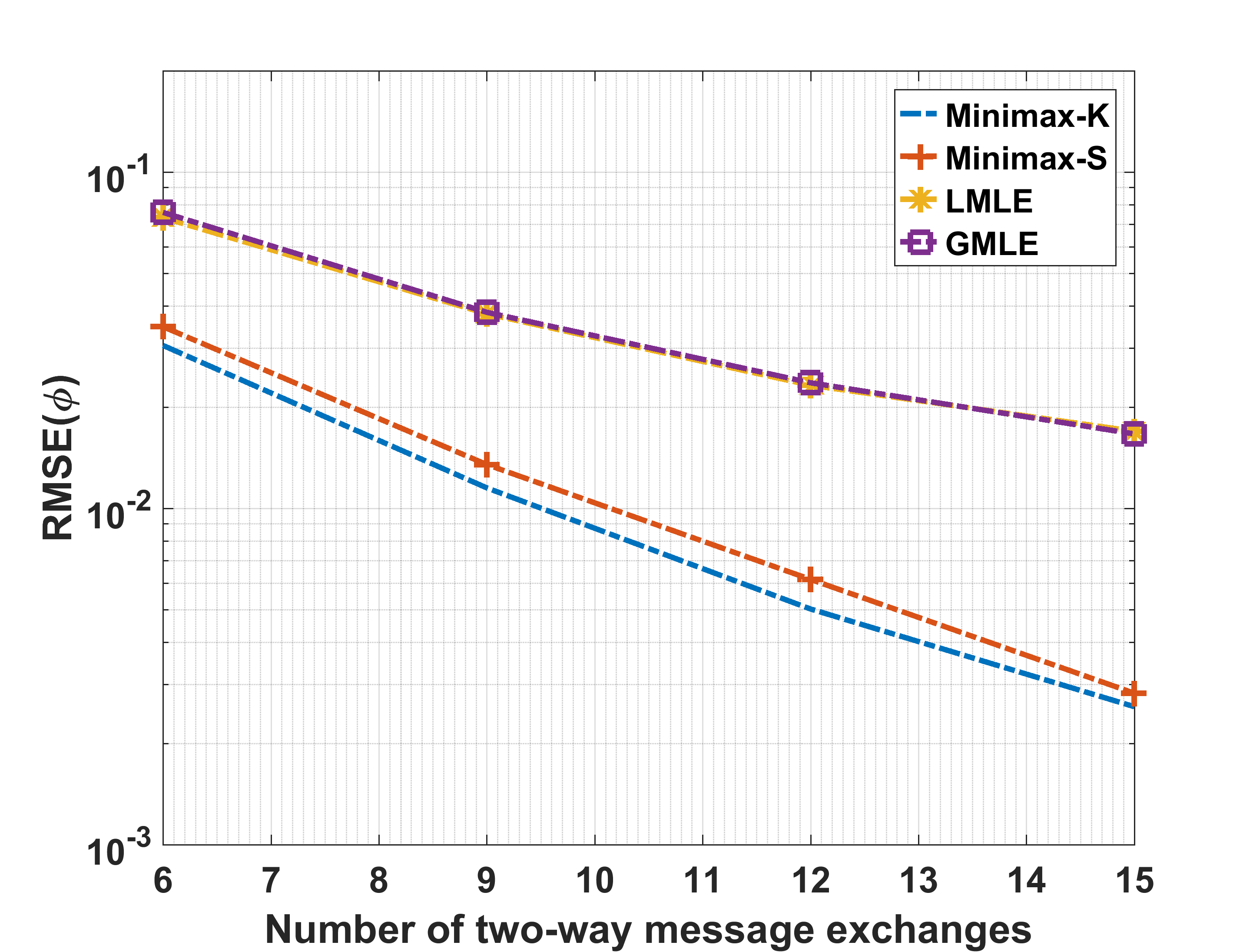}
		\caption{ }
	\end{subfigure}	
	~ 
	\begin{subfigure}[b]{0.48\columnwidth}
		\centering
		\includegraphics[height = 2.5 in, width = \columnwidth]{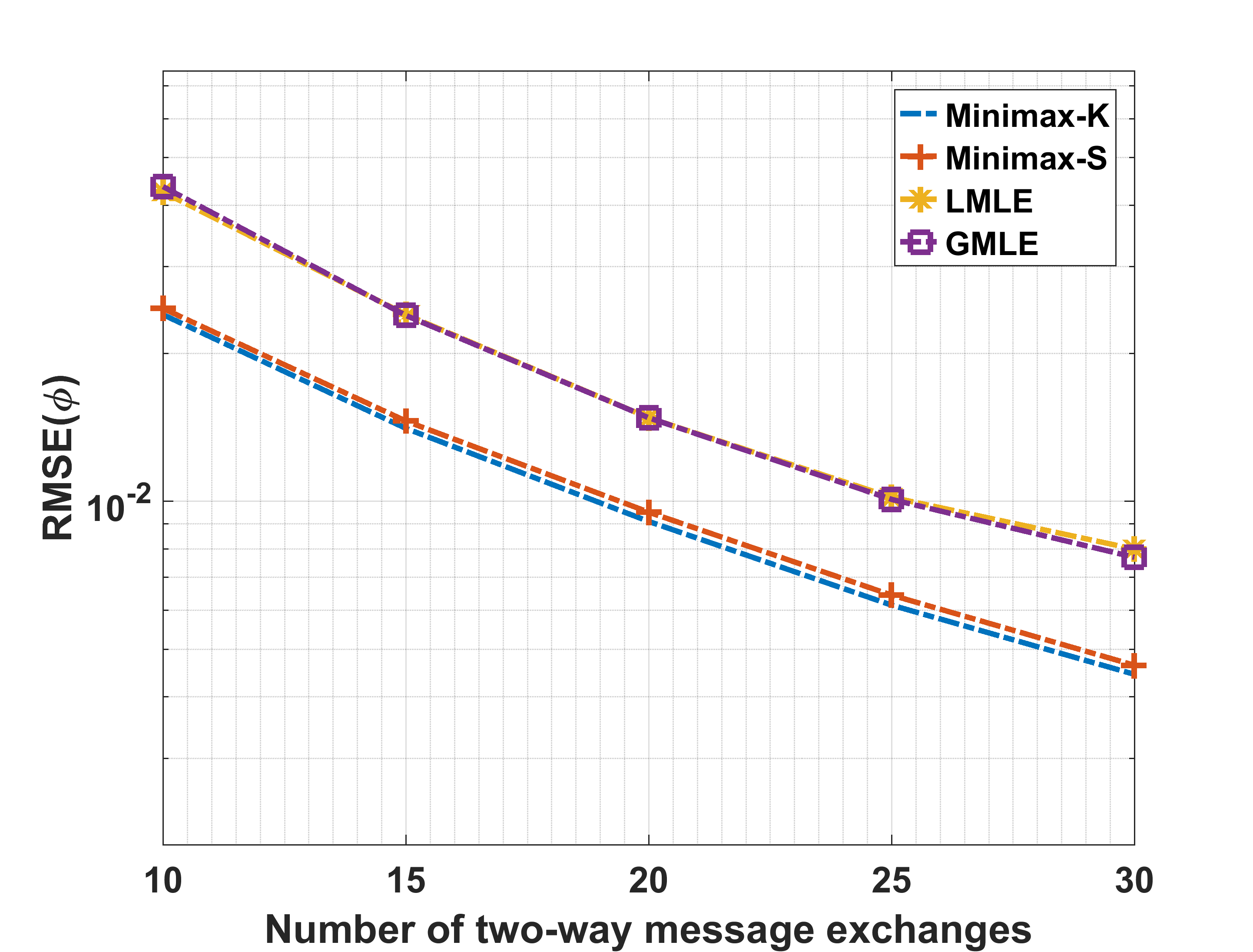}
		\caption{ }
	\end{subfigure}

	\begin{subfigure}[b]{0.48\columnwidth}
		\centering
		\includegraphics[height = 2.5 in, width = \columnwidth]{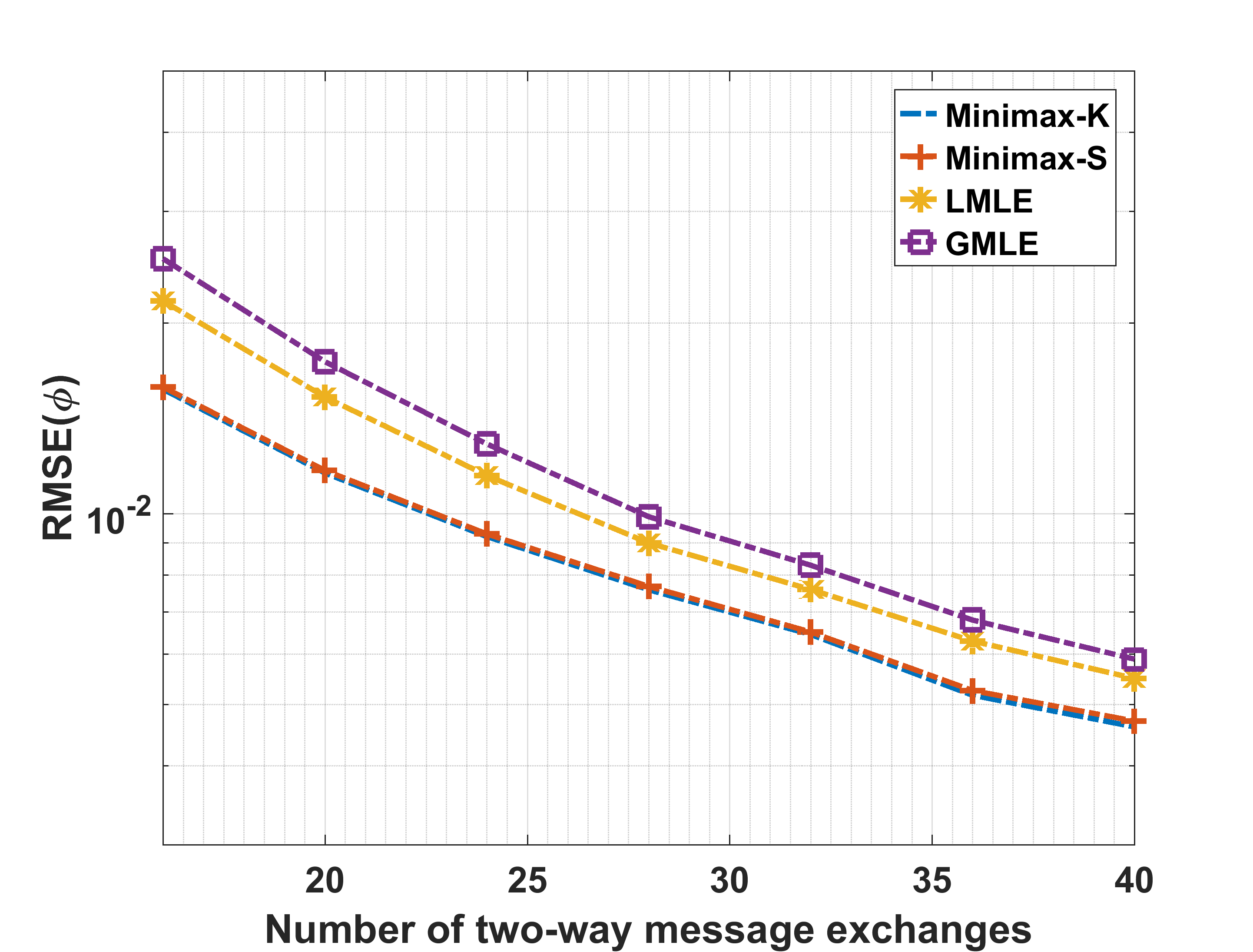}
		\caption{ }
	\end{subfigure}	
	~ 
	\begin{subfigure}[b]{0.48\columnwidth}
		\centering
		\includegraphics[height = 2.5 in, width = \columnwidth]{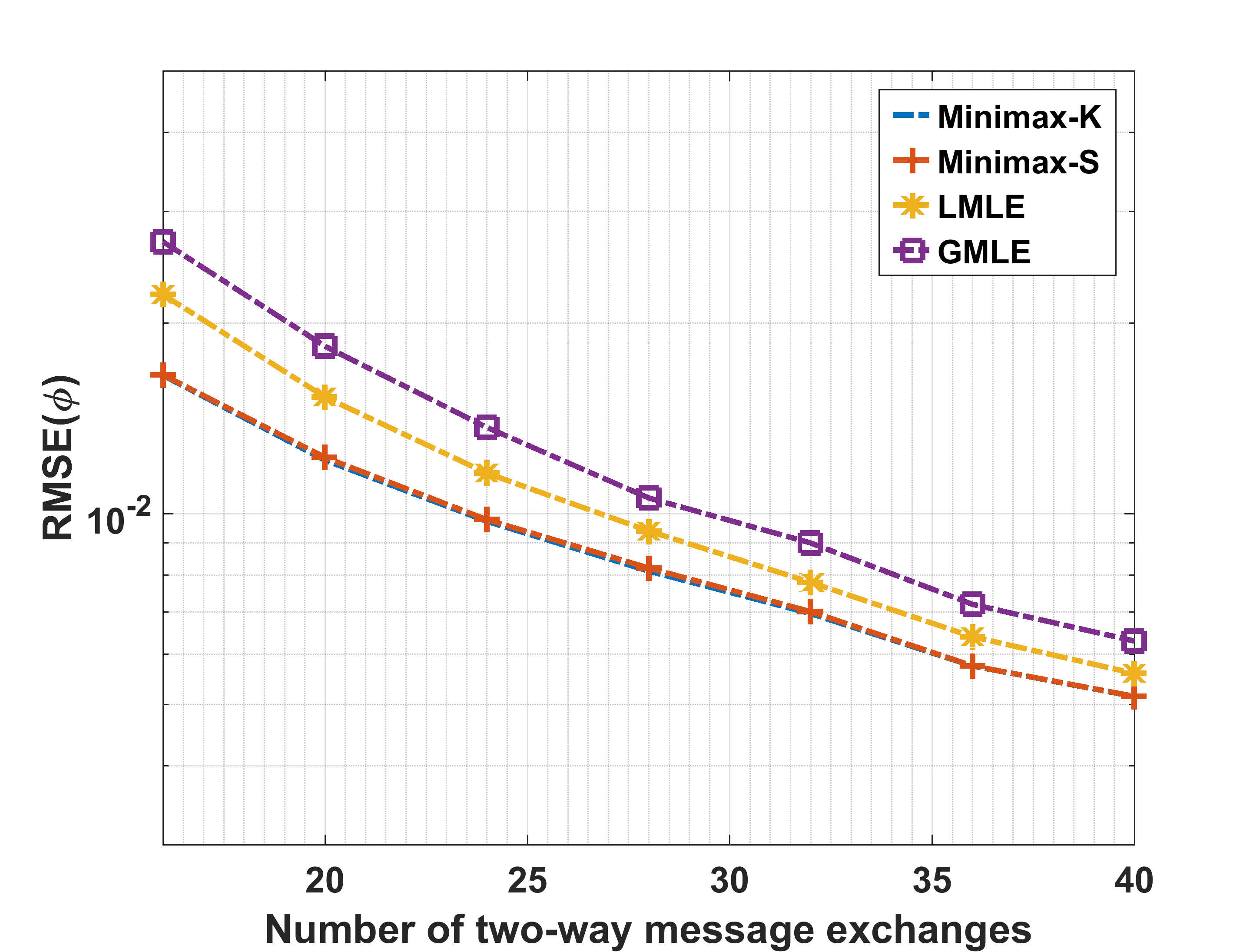}
		\caption{ }
	\end{subfigure}		
	\caption{RMSE of clock skew for various estimation schemes under TM-2 for various loads and $\{\phi, d, \delta \} = \{1, 2 \mu s, 2 \mu s\}$, (a) 20\% load, (b) 40\% load, (c) 60\% load, (d) 80\% load.}\label{rmse_TM2_skew_results}
\end{figure}

\begin{figure}[t]
	\centering
	\begin{subfigure}[b]{0.48\columnwidth}
		\centering
		\includegraphics[height = 2.5 in, width = \columnwidth]{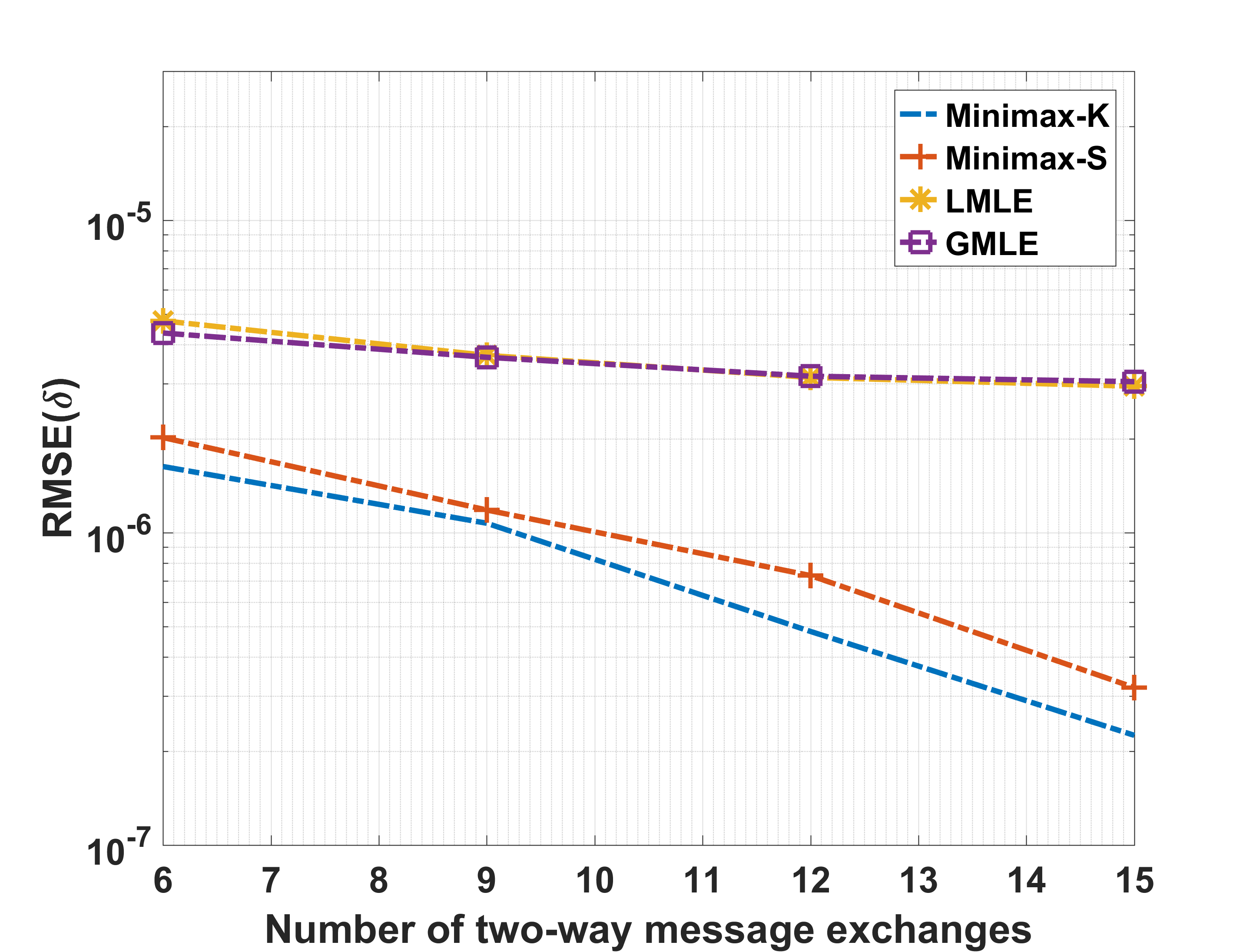}
		\caption{ }
	\end{subfigure}	
	~ 
	\begin{subfigure}[b]{0.48\columnwidth}
		\centering
		\includegraphics[height = 2.5 in, width = \columnwidth]{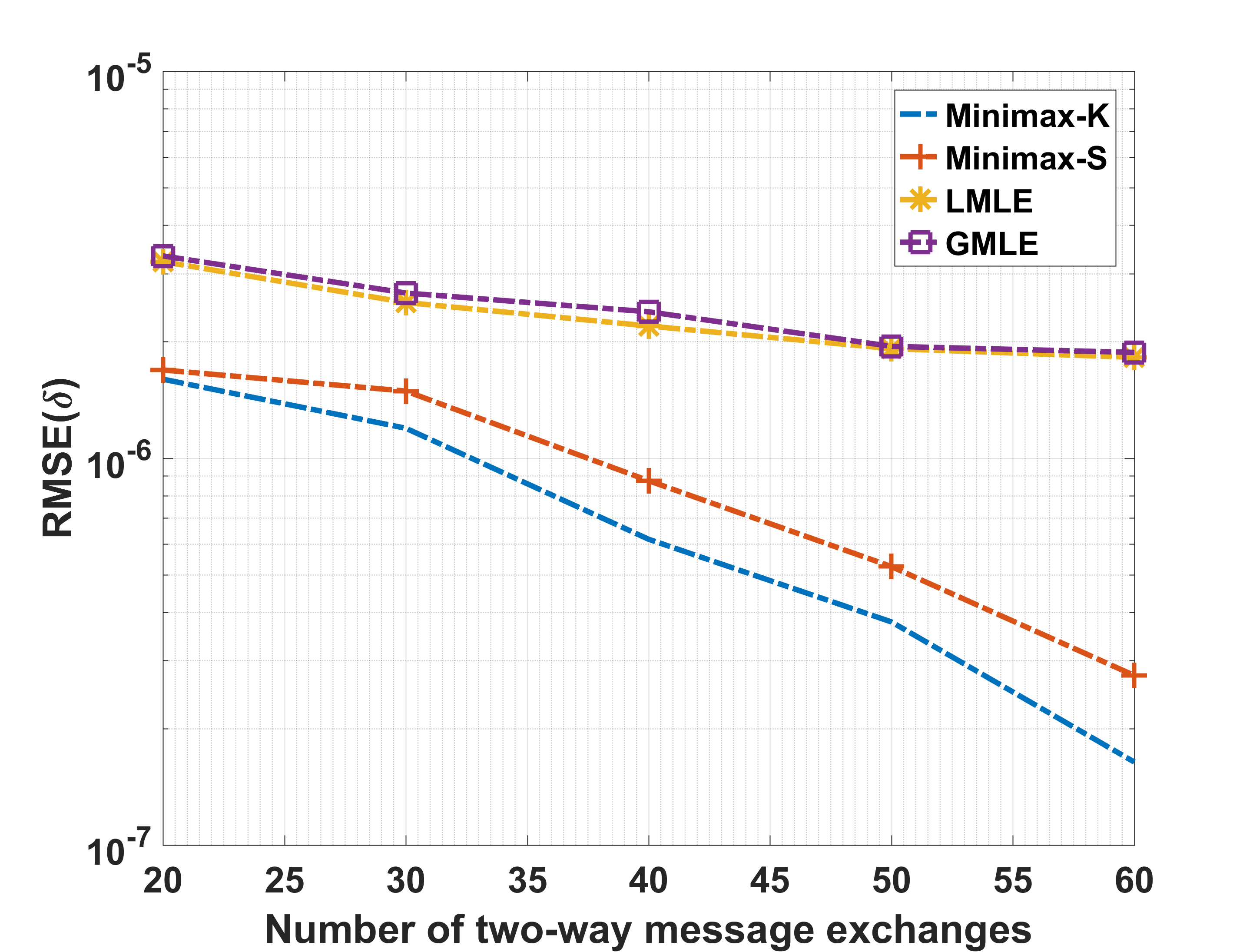}
		\caption{ }
	\end{subfigure}

	\begin{subfigure}[b]{0.48\columnwidth}
		\centering
		\includegraphics[height = 2.5 in, width = \columnwidth]{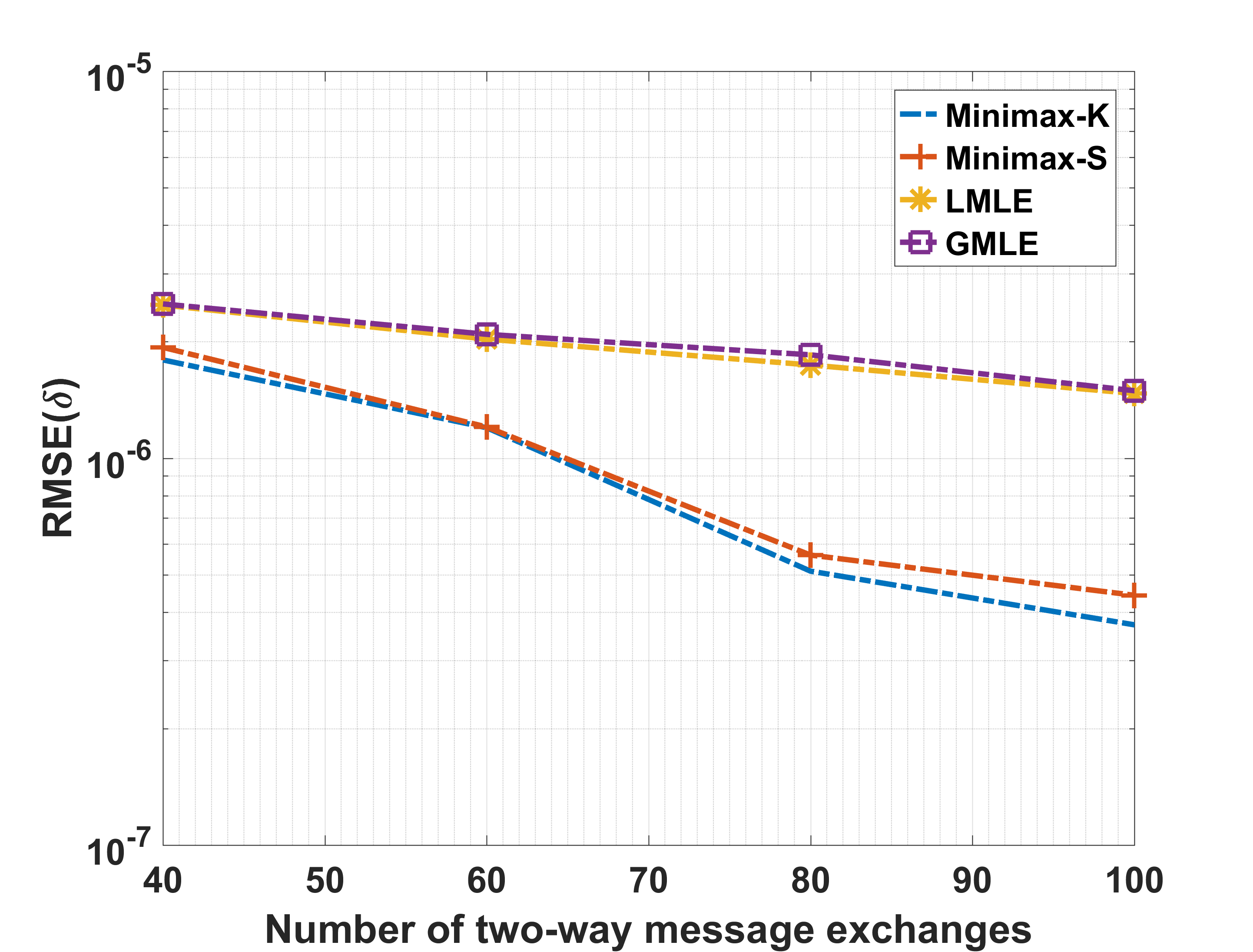}
		\caption{ }
	\end{subfigure}	
	~ 
	\begin{subfigure}[b]{0.48\columnwidth}
		\centering
		\includegraphics[height = 2.5 in, width = \columnwidth]{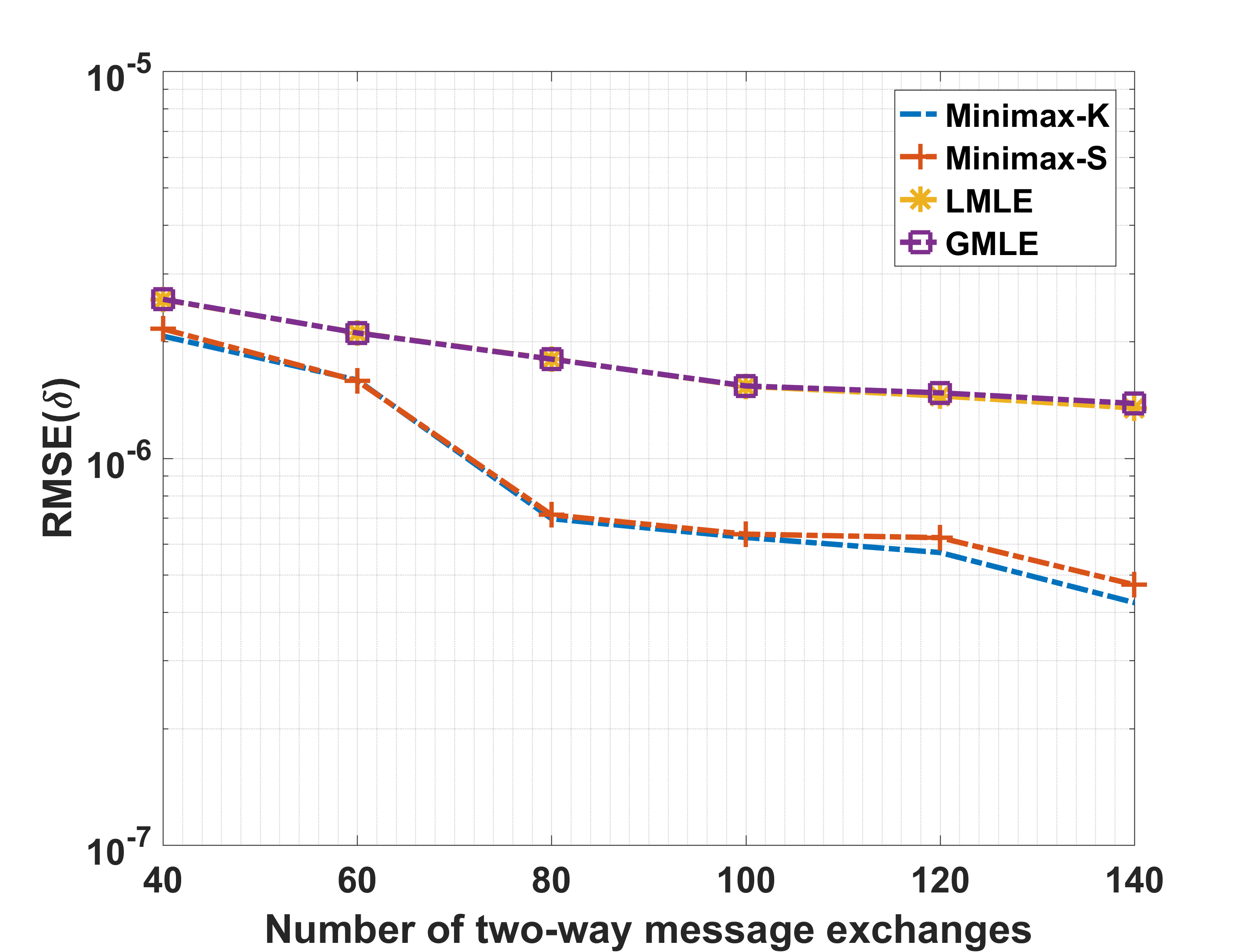}
		\caption{ }
	\end{subfigure}		
	\caption{RMSE of clock offset for various estimation schemes under EG-TM1 for various loads and $\{\phi, d, \delta \} = \{1, 2 \mu s, 2 \mu s\}$, (a) 20\% load, (b) 40\% load, (c) 60\% load, (d) 80\% load.}\label{rmse_TM1sg_offset_results}
\end{figure}

\begin{figure}[t]
	\centering
	\begin{subfigure}[b]{0.48\columnwidth}
		\centering
		\includegraphics[height = 2.5 in, width = \columnwidth]{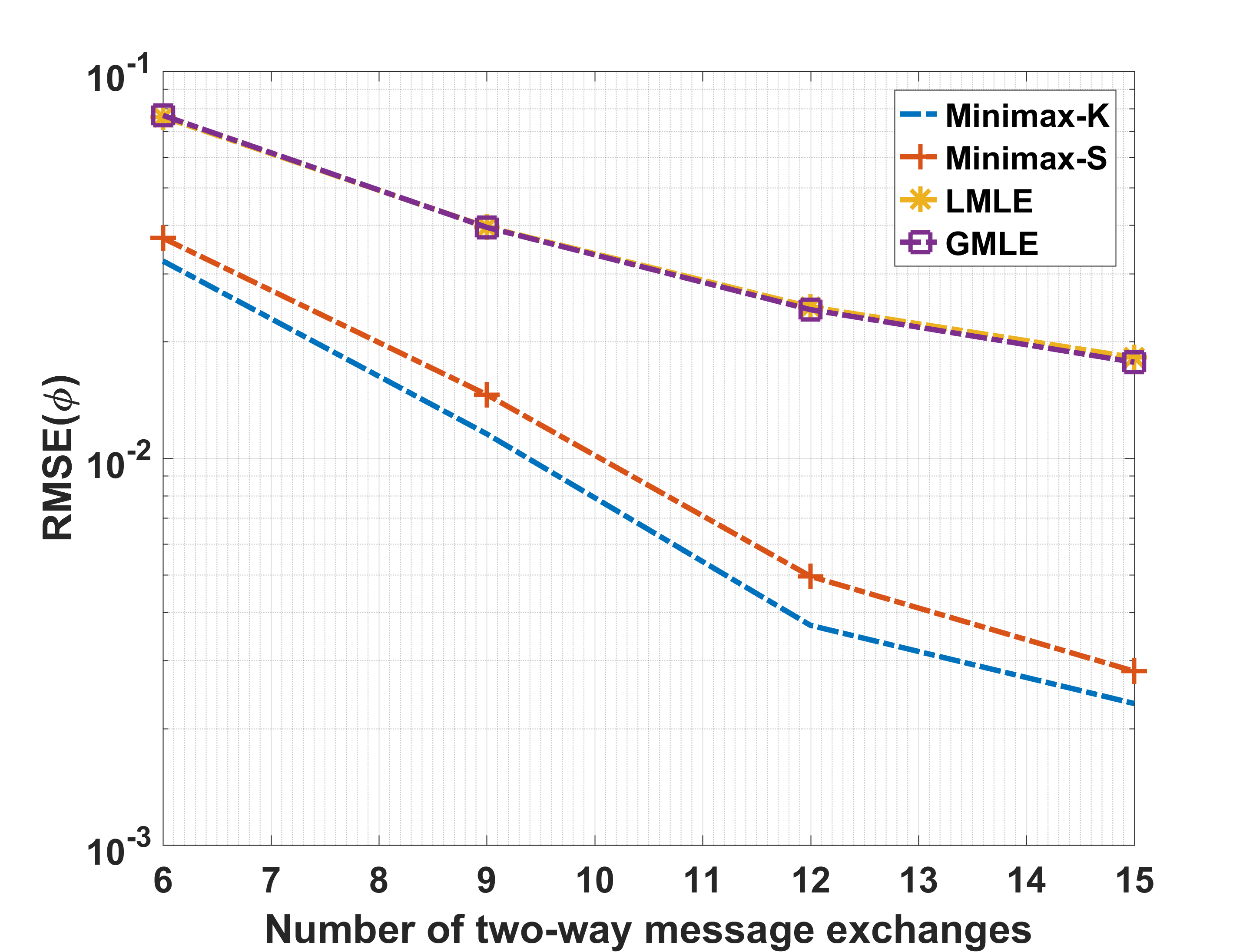}
		\caption{ }
	\end{subfigure}	
	~ 
	\begin{subfigure}[b]{0.48\columnwidth}
		\centering
		\includegraphics[height = 2.5 in, width = \columnwidth]{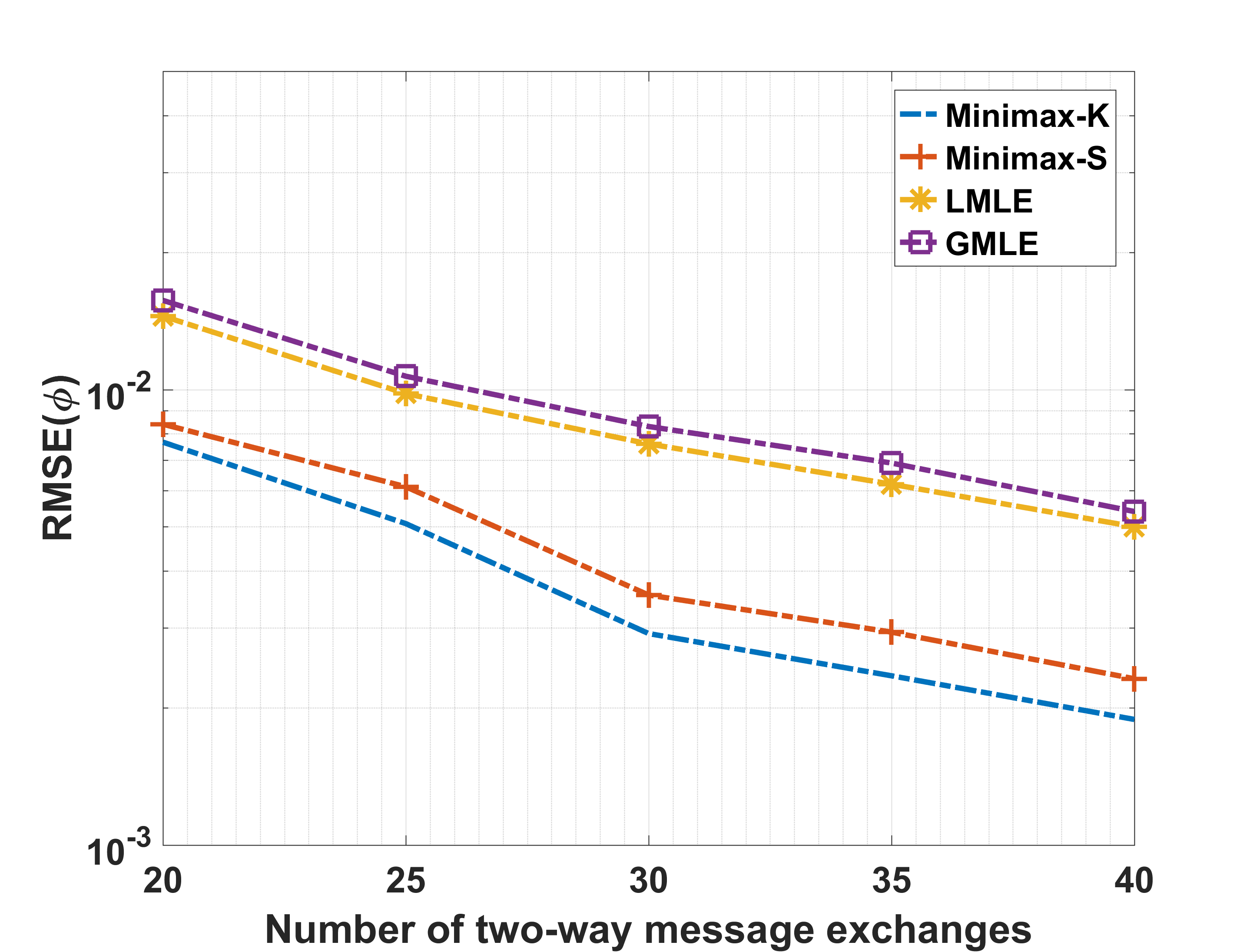}
		\caption{ }
	\end{subfigure}

	\begin{subfigure}[b]{0.48\columnwidth}
		\centering
		\includegraphics[height = 2.5 in, width = \columnwidth]{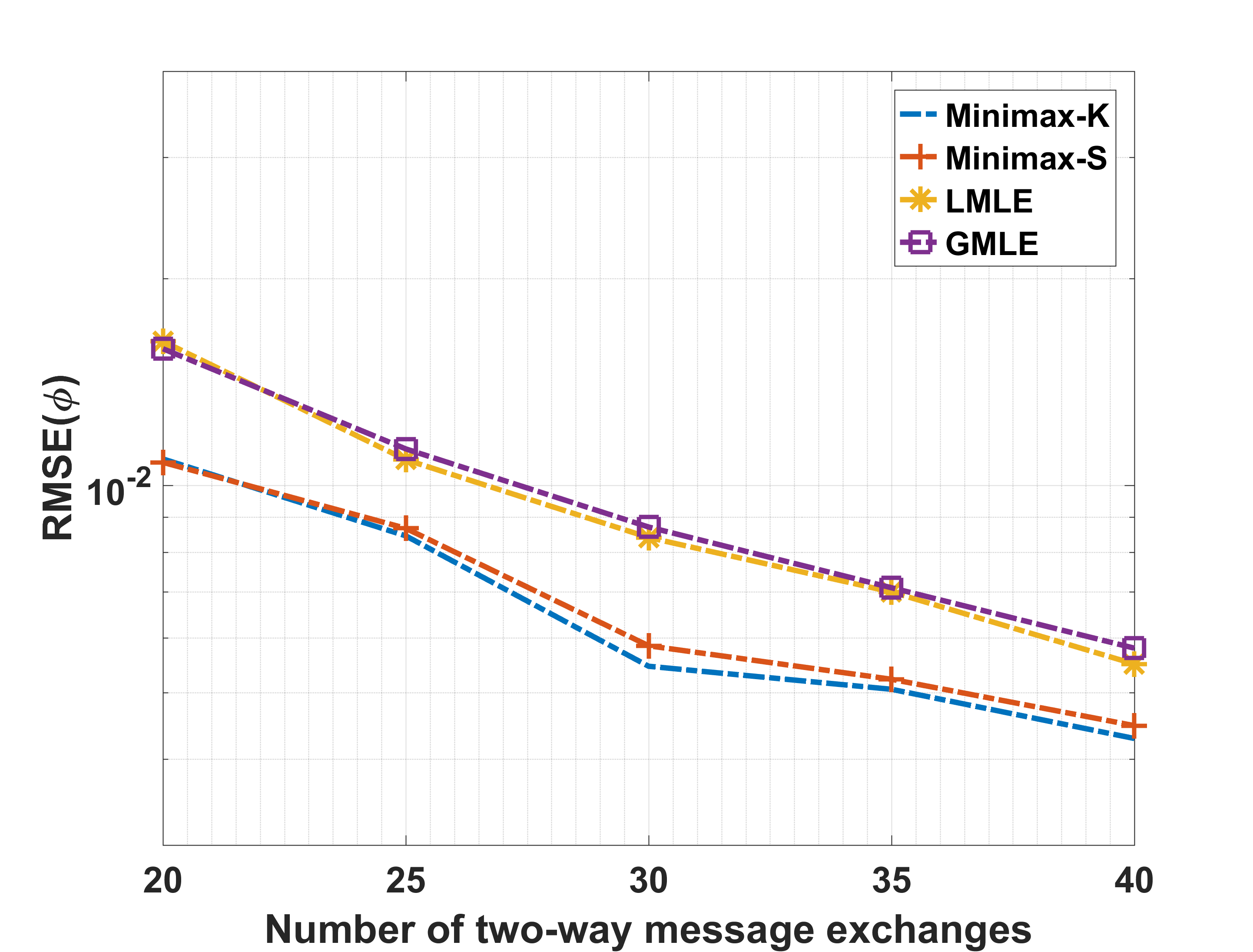}
		\caption{ }
	\end{subfigure}	
	~ 
	\begin{subfigure}[b]{0.48\columnwidth}
		\centering
		\includegraphics[height = 2.5 in, width = \columnwidth]{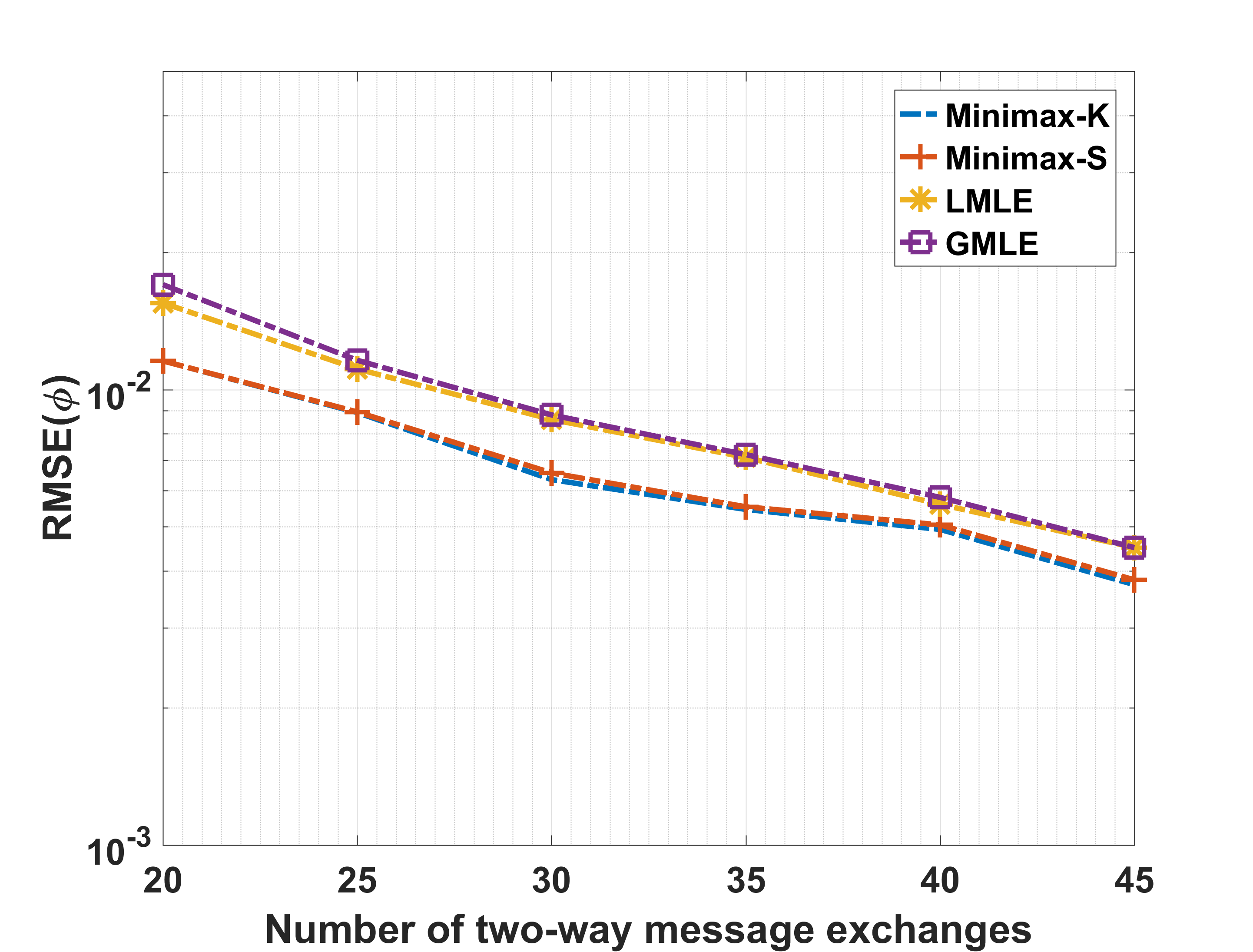}
		\caption{ }
	\end{subfigure}		
	\caption{RMSE of clock skew for various estimation schemes under EG-TM1 for various loads and $\{\phi, d, \delta \} = \{1, 2 \mu s, 2 \mu s\}$, (a) 20\% load, (b) 40\% load, (c) 60\% load, (d) 80\% load.}\label{rmse_TM1sg_skew_results}
\end{figure}

%
%

\ifCLASSOPTIONcaptionsoff
  \newpage
\fi



\bibliographystyle{IEEEtranTCOM}
\bibliography{IEEEabrv,refs}

\end{document}